\documentclass[12pt,english,table]{article}
\usepackage{lmodern}

\usepackage[T1]{fontenc}
\usepackage[utf8]{inputenc}
\usepackage{color}
\usepackage{verbatim}
\usepackage{amsmath}
\usepackage{amsthm}
\usepackage{amssymb}
\usepackage{graphicx}
\usepackage{geometry}
\usepackage{setspace}
\usepackage[authoryear]{natbib}
\makeatletter
\providecommand{\propositionname}{Proposition}
\providecommand{\corollaryname}{Corollary}
\providecommand{\theoremname}{Theorem}
\providecommand{\lemmaname}{Lemma}

\theoremstyle{plain}
\newtheorem{proposition}{\propositionname}
\newtheorem{corollary}{\corollaryname}
\newtheorem{theorem}{\theoremname}
\newtheorem{lemma}{\lemmaname}
\theoremstyle{definition}
\newtheorem{assumption}{Assumption}
\setcitestyle{round}
\usepackage{lscape}
\usepackage{longtable}
\usepackage{xcolor}
\usepackage{rotating}

\usepackage{array}
\usepackage{arydshln}
\usepackage{tabularx}
\usepackage{multirow}
\usepackage{booktabs}
\usepackage{ragged2e}\newcolumntype{C}[1]{>{\centering\arraybackslash}p{#1}}
\usepackage{rotfloat}
\newcolumntype{J}[1]{>{\justify\arraybackslash}p{#1}}
\newcolumntype{R}[1]{>{\RaggedLeft\arraybackslash}p{#1}}
\newcolumntype{Q}[1]{>{\columncolor{Gray}\RaggedLeft\arraybackslash}p{#1}}
\newcolumntype{L}[1]{>{\RaggedRight\arraybackslash}p{#1}}
\newcolumntype{G}{@{\extracolsep{0.5cm}}l@{\extracolsep{0pt}}}%
\newcolumntype{P}[1]{>{\centering\arraybackslash}p{#1}}
\newcolumntype{Y}{>{\centering\arraybackslash}X}
\newcommand{\nhphantom}[1]{\sbox0{#1}\hspace{-\the\wd0}} % command to add negative space, like off-set a minussign

\AtBeginDocument{

}

\usepackage[colorlinks,
            linkcolor=blue,
            anchorcolor=blue,
            citecolor=blue]{hyperref}
\usepackage{babel}
\usepackage{titlesec}

\makeatletter
\renewcommand*{\fps@figure}{htb}
\makeatother

\makeatother

\usepackage{babel}
\providecommand{\corollaryname}{Corollary}
\providecommand{\lemmaname}{Lemma}
\providecommand{\propositionname}{Proposition}
\providecommand{\theoremname}{Theorem}

\begin{document}

\def\spacingset#1{\renewcommand{\baselinestretch}%
{#1}\small\normalsize} \spacingset{1}

%%%%%%%%%%%%%%%%%%%%%%%%%%%%%%%%%%%%%%%%%%%%%%%%%%%%%%%%%%%%%%%%%%%%%%%%%%%%%%

 \title{\bf Exact Likelihood Inference and Robust Filtering for Gauss-Cauchy Convolution Models}
  \author{Peter Reinhard Hansen\thanks{Correspondence: Peter Reinhard Hansen (hansen@unc.edu). We thank Yacine A\"{\i}t-Sahalia for pointing out the connection to discretely sampled L\'evy processes, George Tauchen for pointing out the connection to Mills ratio, Mitchell Watt for tracing the reference to \citet{WeibullMattssonVoorneveld:2007}, Jihyun Kim for discussing the paper in Toulouse, and participants at the 2026 Triangle Conference at UNC, the 2026 Financial Econometrics Conference at the Toulouse School of Economics, and the 2026 Ca' Foscari conference on score-driven and nonlinear time series models for valuable comments. Chen Tong acknowledges financial support from the Youth Fund of the National Natural Science Foundation of China (72301227) and the Fujian Provincial Natural Science Foundation of China (2025J08008).}\hspace{.2cm}\\
   {\small Department of Economics, University of North Carolina at Chapel Hill}\\
     \\
    Chen Tong \\
     {\small School of Economics, Xiamen University}}
  \maketitle

\begin{abstract}
The convolution of a Gaussian and a Cauchy distribution, known as the Voigt distribution, is widely used in spectroscopy and provides a natural framework for modeling heavy-tailed measurement noise. We derive analytical expressions for its density, score, Hessian, Fisher information, and conditional moments using the scaled complementary error function, enabling stable maximum likelihood estimation without numerical convolution, finite-difference derivatives, or pseudo-Voigt approximations. The conditional expectation of the latent Gaussian component is governed by a redescending location score, so extreme observations are automatically discounted rather than propagated. This structure leads to the Gauss-Cauchy Convolution (GCC) filter for state-space models with Gaussian latent dynamics and Voigt measurement errors, where the Masreliez Gaussian prediction approximation preserves a Voigt prediction-error density. In an application to log realized volatility for the Technology Select Sector SPDR Fund, the GCC filter separates persistent latent variation from transient measurement noise and attains the highest implemented prediction-error criterion among the Gaussian, Student-$t$, Huber, and related filtering specifications considered.
\end{abstract}

\noindent%
{\it Keywords:} Voigt profile, Convolutions, Heavy Tails, Robust Filtering, Kalman Filter
\vfill

\newpage
\spacingset{1.5} 

\everymath{\setlength{\abovedisplayskip}{11pt} \setlength{\belowdisplayskip}{11pt}} 
\everydisplay{\setlength{\abovedisplayskip}{11pt} \setlength{\belowdisplayskip}{11pt}} 

\section{Introduction}

The convolution of a Gaussian and a Cauchy distribution, known as the Voigt distribution, arises naturally when finite-variance background noise is combined with heavy-tailed disturbances. The Voigt profile is widely used in spectroscopy, where Gaussian broadening is associated with Doppler effects and Cauchy broadening with pressure or lifetime effects. It is also a natural
statistical model for measurement errors that combine ordinary noise with
occasional extreme observations.

Although an expression for the Voigt density has been known since \citet{Kendall_D:1938}, its use as a likelihood-based statistical model has been limited by the perception that exact inference is computationally
difficult. Because the density is not an elementary function, applications often rely on numerical convolution, pseudo-Voigt approximations, nonlinear
least squares, or simulation-based methods. We show that these compromises are unnecessary for likelihood inference. The scaled complementary error function provides a stable representation of the Voigt density. Its differential identities are known special-function results, but their statistical implication is important: they yield closed-form expressions for Voigt likelihood derivatives, Fisher information, and conditional moments. The
contribution is therefore not the density representation alone, but the statistical calculus it enables: the Gaussian-Cauchy convolution is an unusual
heavy-tailed Gaussian convolution model for which likelihood evaluation, differentiation, conditional moments, and Masreliez-type filtering remain
analytically tractable.

This paper develops likelihood inference and filtering methods that exploit this tractable heavy-tailed Gaussian convolution structure. We first derive
closed-form expressions for the density, score, Hessian, Fisher information, and conditional moments of the Voigt distribution. These expressions reduce
inference to evaluations of a standard special function and algebraic operations. We then establish the asymptotic properties of the maximum likelihood estimator and document its numerical performance in simulations. 

As a further implication, this tractability clarifies a special case of likelihood inference for discretely sampled L\'evy processes. \citet{AitSahaliaJacod:2008} study Fisher information for models of the form $X_t=\sigma W_t+\theta Y_t$, where $W$ is a stable process and $Y$ is an independent
L\'evy perturbation. In their Brownian-Cauchy example, increments of $X$ are exactly Gaussian-Cauchy convolutions, with Gaussian scale proportional to $\sqrt{\Delta}$ and Cauchy scale proportional to $\Delta$. Their numerical Fisher-information calculations for this case therefore require repeated evaluation of the convolution density and its derivatives.
Our formulas replace this convolution step by closed-form Voigt evaluations and algebraic score and Hessian expressions. Thus the Gauss-Cauchy convolution provides a finite-sample likelihood building block for the Brownian-plus-Cauchy L\'evy model, complementing the
high-frequency Fisher-information limits in \citet{AitSahaliaJacod:2008}.

The same analytical structure is useful for robust signal extraction. If an observation is the sum of a latent Gaussian signal and Cauchy measurement noise, the conditional expectation of the Gaussian component is nonlinear in the observation. The associated location score is approximately linear near the center of the distribution but redescends in the tails. Hence, extreme observations are automatically discounted rather than propagated into the latent
signal. 
This differs from the Kalman update, which is linear in the prediction error, and from Student-$t$ or Huber-type procedures, where tail attenuation is introduced through a heavy-tailed observation density or a robust loss function. In the Gauss-Cauchy convolution, the redescending update is
instead an implication of signal extraction: it is the exact conditional mean of
the latent Gaussian component when the observation is contaminated by an additive Cauchy component. Thus, robustness is not imposed as an influence
function but derived from the convolution structure itself. Robustness has several meanings in filtering. In the sense of \citet{CalvetCzellarRonchetti:2015}, it concerns stability of the inferred state distribution under small departures from the assumed model. Our robustness claim is narrower and operational, in the Masreliez-Martin influence-function tradition: the GCC conditional-mean update has redescending influence, so sufficiently large prediction errors are assigned mainly to the Cauchy measurement component rather than to the persistent state.

We use this result to construct the Gauss-Cauchy Convolution (GCC) filter for linear state-space models with Gaussian latent dynamics and Voigt measurement
errors. 
Following the Masreliez Gaussian prediction approximation \citep{Masreliez:1975, MasreliezMartin:1977}, the key step is that Tweedie's formula becomes applicable to the state-prediction error, while the Gaussian-Cauchy convolution structure keeps the prediction-error density inside the Voigt family. Specifically, if the conditional state-prediction error is approximated as Gaussian with variance $h_{t|t-1}$, then
$$
N(0,h_{t|t-1})+\mathcal V(0,\sigma,\gamma) = \mathcal V(0,\delta_t,\gamma), \qquad \delta_t^2=h_{t|t-1}+\sigma^2.
$$
Thus the state update is the Tweedie conditional-mean correction associated
with the Voigt prediction-error score. The filter nests the Kalman filter as
the special case $\gamma=0$ and the pure Cauchy measurement-error filter when  $\sigma=0$. We assess the Masreliez approximation directly by
comparing the GCC recursion with an exact benchmark filter based on numerical
density propagation. The approximation error is small at both the
predictive-density level and the one-step-correction level, especially in the
empirically relevant region where the estimated Cauchy component is small
relative to the Gaussian component.

We illustrate the method using daily log realized volatility for the Technology Select Sector SPDR Fund (XLK). The realized-volatility series contains both persistent volatility movements and transient extreme observations associated with market stress, liquidity disruptions, and microstructure effects. The GCC filter separates these components by retaining a Gaussian core for ordinary measurement variation while using the Cauchy component to absorb occasional large deviations.
In the empirical application, the GCC specification attains the highest implemented prediction-error criterion among the competing filtering specifications considered, including Gaussian, Cauchy, Normal-Laplace, Student-$t$, and Huber alternatives.

The paper is related to several strands of literature. It contributes to work on convolution-based distributions and Voigt-profile estimation by showing that the special-function representation yields a tractable likelihood theory. 
It is also related to robust filtering and score-driven dynamics
\citep{CrealKoopmanLucas:2013, Harvey:2013}, where heavy-tailed observation densities are used to reduce the influence of outliers.
It also connects to work on the optimality of score-driven updates under information-theoretic criteria \citep{BlasquesKoopmanLucas:2015,
GorgiLauriaLuati:2024}. Recent contributions using Student-$t$ scores for robust score-driven dynamics include the univariate signal-extraction
filter of \citet{HarveyLuati:2014}, the multivariate dynamic-location filter of \citet{DInnocenzoLuatiMazzocchi:2023}, and the spatio-temporal model of \citet{GasperoniLuatiPaciDInnocenzo:2023}. 
The GCC filter differs by deriving the redescending update as the conditional mean correction
implied by the Gaussian-Cauchy convolution structure, rather than by specifying a heavy-tailed observation density or mixture structure directly.

The redescending conditional expectation also connects the GCC model to recent theoretical work on information aggregation with thin- and heavy-tailed signals. In global games, \citet{MorrisYildiz:2019} show that large shocks can trigger equilibrium shifts when agents rationally attribute extreme observations to a common heavy-tailed component rather than to idiosyncratic noise. A related ``too good to be true'' logic appears in \citet{WeibullMattssonVoorneveld:2007}, where sufficiently extreme signals are discounted because they are more likely to reflect noise than fundamentals. \citet{HautschHessMuller:2012} extend this exact mechanism to financial price discovery, demonstrating that fat-tailed noise causes rational agents to optimally discount extreme signals. The Gauss-Cauchy convolution provides a tractable parametric realization of this behavior: the conditional expectation of the latent Gaussian component is explicitly redescending, and the parameters $(\sigma,\gamma)$ dictate precisely when extreme observations are treated as valid signals and when they are discarded as noise.

The remainder of the paper is organized as follows. Section
\ref{sec:Voigt} defines the Gauss-Cauchy convolution and derives its conditional
moment structure. Section \ref{sec:MLE} develops likelihood inference for the
Voigt distribution. Section \ref{sec:GCCfilter} introduces the GCC filter.
Section \ref{sec:Simulations} evaluates the Masreliez approximation using exact
benchmark filtering. Section \ref{sec:Empirical} applies the filter to log
realized volatility. Section \ref{sec:Conclusion} concludes. Proofs, additional results, and details on the competing filters are collected in the Supplement.

\section{Gauss-Cauchy Convolution Distributions}\label{sec:Voigt}

Let $Z\sim \mathcal{N}(0,\sigma^{2})$ and $X\sim\operatorname{Cauchy}(0,\gamma)$ be independent, and consider the convolution
$$
Y = \mu + Z + X,
$$
where $\sigma,\gamma>0$. We denote the resulting distribution by $\mathcal{V}(\mu,\sigma,\gamma)$ because it is known as the \emph{Voigt profile} in the field
of spectroscopy, and let $f_Y(y;\theta)$ denote its density, where $\theta=(\mu,\sigma,\gamma)^\prime$. 
This density can be expressed in terms of the scaled complementary error function, $\mathsf{e}(w)
\equiv
\operatorname{erfcx}(w)$, where $\mathsf{e}(-iw)$ is the Faddeeva function. For $\operatorname{Re}(w)>0$, we have
$$
\mathsf{e}(w)
=
\frac{1}{\pi}
\int_{-\infty}^{\infty}
\frac{\exp(-t^2)}{w+it}dt,
$$
and for $Z\sim \mathcal{N}(0,1)$ we have 
$\Pr(|Z|>r)=\mathsf{e}(\tfrac{r}{\sqrt{2}})e^{-r^2/2}$ for $r>0$. 

We are particularly interested in evaluating $\mathsf{e}(w)$ on the vertical
\emph{Voigt line} in $\mathbb{C}$ traced out by  $y\mapsto w_{y,\theta} = \tfrac{\gamma+i(y-\mu)}{\sigma\sqrt{2}}$, and we introduce notation for real and imaginary parts of $\mathsf{e}(w_{y,\theta})$
$$
\mathsf{u}(y;\theta)
\equiv
\operatorname{Re}[\mathsf{e}(w_{y,\theta})],
\qquad
\mathsf{v}(y;\theta)
\equiv
\operatorname{Im}[\mathsf{e}(w_{y,\theta})],
$$
so that $\mathsf{e}(w_{y,\theta})=
\mathsf{u}(y;\theta)+i\mathsf{v}(y;\theta)$.

\begin{proposition}[Kendall, 1938]
\label{prop:Voigt}The density of the Gauss-Cauchy convolution, $\mathcal{V}(\mu,\sigma,\gamma)$, can be expressed
\begin{equation}
f_Y(y;\theta)=\frac{1}{\sqrt{2\pi\sigma^{2}}}\mathsf{u}(y;\theta),\qquad \theta=(\mu,\sigma,\gamma)^\prime.\label{eq:Kendall}
\end{equation}
\end{proposition}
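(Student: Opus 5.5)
The plan is to prove \eqref{eq:Kendall} by characteristic functions. Without loss of generality set $\mu=0$ and write $x=y-\mu$. The characteristic function of $Y$ is the product of the Gaussian and Cauchy characteristic functions:
$$\varphi_Y(s)=e^{-\sigma^2 s^2/2-\gamma|s|}.$$
Since $\varphi_Y$ is integrable, Fourier inversion gives
$$f_Y(x)=\frac{1}{2\pi}\int_{-\infty}^{\infty}e^{-isx}e^{-\sigma^2s^2/2-\gamma|s|}\,ds=\frac{1}{\pi}\operatorname{Re}\int_0^\infty e^{-\sigma^2s^2/2-(\gamma+ix)s}\,ds,$$
where we use that the integrand at $-s$ is the complex conjugate of the integrand at $s$.

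The next step is to convert this one-sided integral into the representation of $\mathsf{e}$ stated above. Substitute $s=\sqrt2\,t/\sigma$, so that
$$\int_0^\infty e^{-\sigma^2s^2/2-(\gamma+ix)s}\,ds=\frac{\sqrt2}{\sigma}\int_0^\infty e^{-t^2-2wt}\,dt,\qquad w=w_{y,\theta}=\frac{\gamma+ix}{\sigma\sqrt2}.$$
It then suffices to establish the identity
$$\int_0^\infty e^{-t^2-2wt}\,dt=\tfrac{\sqrt\pi}{2}\,\mathsf{e}(w)\qquad\text{for }\operatorname{Re}(w)>0.$$
Granting this, we get $f_Y(y)=\frac{1}{\pi}\cdot\frac{\sqrt2}{\sigma}\cdot\frac{\sqrt\pi}{2}\,\mathsf{u}(y;\theta)=\frac{1}{\sqrt{2\pi\sigma^2}}\,\mathsf{u}(y;\theta)$, which is exactly \eqref{eq:Kendall}.

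This identity is the main obstacle, because the paper defines $\mathsf{e}$ by the integral $\frac1\pi\int e^{-t^2}/(w+it)\,dt$ and not as $e^{w^2}\operatorname{erfc}(w)$. I would prove it directly from the paper's definition using the Laplace representation
$$\frac{1}{w+it}=\int_0^\infty e^{-(w+it)u}\,du,\qquad \operatorname{Re}(w)>0.$$
Inserting this into the definition gives
$$\mathsf{e}(w)=\frac1\pi\int_0^\infty e^{-wu}\int_{-\infty}^{\infty} e^{-t^2-itu}\,dt\,du.$$
Fubini applies because the double integrand is bounded in modulus by $e^{-t^2}e^{-\operatorname{Re}(w)u}$, which is integrable. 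The inner integral is the Gaussian Fourier transform $\sqrt\pi\,e^{-u^2/4}$. Hence
$$\mathsf{e}(w)=\frac{1}{\sqrt\pi}\int_0^\infty e^{-wu-u^2/4}\,du,$$
and substituting $u=2t$ turns this into $\frac{2}{\sqrt\pi}\int_0^\infty e^{-t^2-2wt}\,dt$, which is the claimed identity.

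Chaining the two steps proves the proposition. As a sanity check, this computation also shows that $\mathsf{u}>0$, matching the positivity of the density, and that as $\sigma\to0$ the formula recovers the Cauchy density through the large-$|w|$ asymptotics of $\mathsf{e}$.
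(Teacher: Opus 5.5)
Your proof is correct. It shares the paper's skeleton: invert the characteristic function $e^{-\sigma^2s^2/2-\gamma|s|}$ and reduce $f_Y$ to the real part of a half-line integral of $e^{-\sigma^2s^2/2-(\gamma\pm iy)s}$. The difference is how you identify that integral with $\mathsf{e}$.

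The paper computes the two halves $I_1,I_2$ separately, completes the square, and substitutes $\zeta_\mp=(\frac{\gamma\mp iy}{\sigma}+\sigma s)/\sqrt2$ to read off $e^{w^2}\operatorname{erfc}(w)$ directly. This lands on the $\operatorname{erfcx}$ form that is the paper's actual definition of $\mathsf{e}$. However, the complex substitution is tacitly a contour shift, which the paper does not justify.

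You instead use conjugate symmetry to get a single integral at once. You then derive $\int_0^\infty e^{-t^2-2wt}\,dt=\frac{\sqrt\pi}{2}\mathsf{e}(w)$ from the integral representation $\mathsf{e}(w)=\frac1\pi\int e^{-t^2}/(w+it)\,dt$, using the Laplace representation of $1/(w+it)$, Fubini, and the Gaussian Fourier transform. This needs no contour argument and makes $\mathsf{u}>0$ transparent.

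One correction to your framing: the paper defines $\mathsf{e}\equiv\operatorname{erfcx}$ and only states the integral formula as a property, not as the definition. Your argument is complete relative to that stated property. If you also want the link to $e^{w^2}\operatorname{erfc}(w)$, note two facts. For real $w$, the identity $\frac{2}{\sqrt\pi}\int_0^\infty e^{-t^2-2wt}\,dt=e^{w^2}\operatorname{erfc}(w)$ follows from the substitution $u=t+w$. Both sides are entire in $w$, so they agree on all of $\mathbb{C}$. Combined with your Fubini computation, this also proves the integral representation of $\mathsf{e}$ that the paper asserts without proof in Section~\ref{sec:Voigt}.
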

The expression for this density was first obtained by \citet{Kendall_D:1938}. The proof of Proposition \ref{prop:Voigt} is given in Supplement~\ref{sec:SupplementA} and is based on the characteristic function $\varphi_{Z+X}(s)=e^{-\sigma^2s^2/2-\gamma|s|}$. Figure \ref{fig:DenCauchyNormal} shows the density of $Y$ for $\mu=0$ and $\sigma=\gamma=1$, together with the best approximating Student-$t$ distribution, as defined by the Kullback-Leibler discrepancy. The visible differences highlight the specification error from using Student-$t$ distributions as proxies.\footnote{The Voigt density can be viewed as a limiting case of the convolution of two Student-$t$ distributions, a class of convolution problems known to be analytically difficult. For recent progress in this area, see \citet{Nason:2006}, \citet{Forchini:2008}, \citet{BergVignat:2010}, and \citet{HansenTong:2026}.} A second common approximation is the pseudo-Voigt profile, a two-component mixture of a Gaussian density and a Cauchy density with the same location.\footnote{The pseudo-Voigt density takes the form $(1-\eta_p)\phi(y;\mu,\sigma_p)+\eta_p c(y;\mu,\gamma_p)$, with $0\leq\eta_p\leq1$, where $\phi$ is the Gaussian density and $c$ is the Cauchy density.} It provides a closer density approximation in this example, but it is not a convolution and therefore does not preserve the additive-noise interpretation used below.
The Voigt profile uniquely captures the superposition of finite-variance noise and infinite-variance jumps, a structure that generic heavy-tailed approximations fail to reproduce.
\begin{figure}[tbh]
\spacingset{1.15}
\begin{centering}
\includegraphics[width=0.49\textwidth]{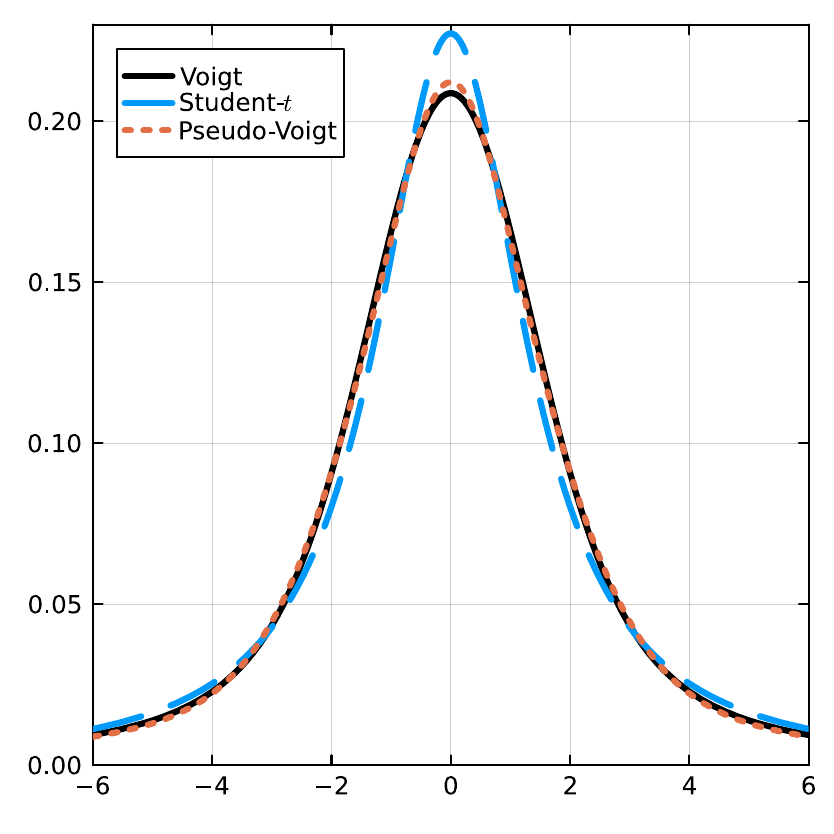}
\includegraphics[width=0.5\textwidth]{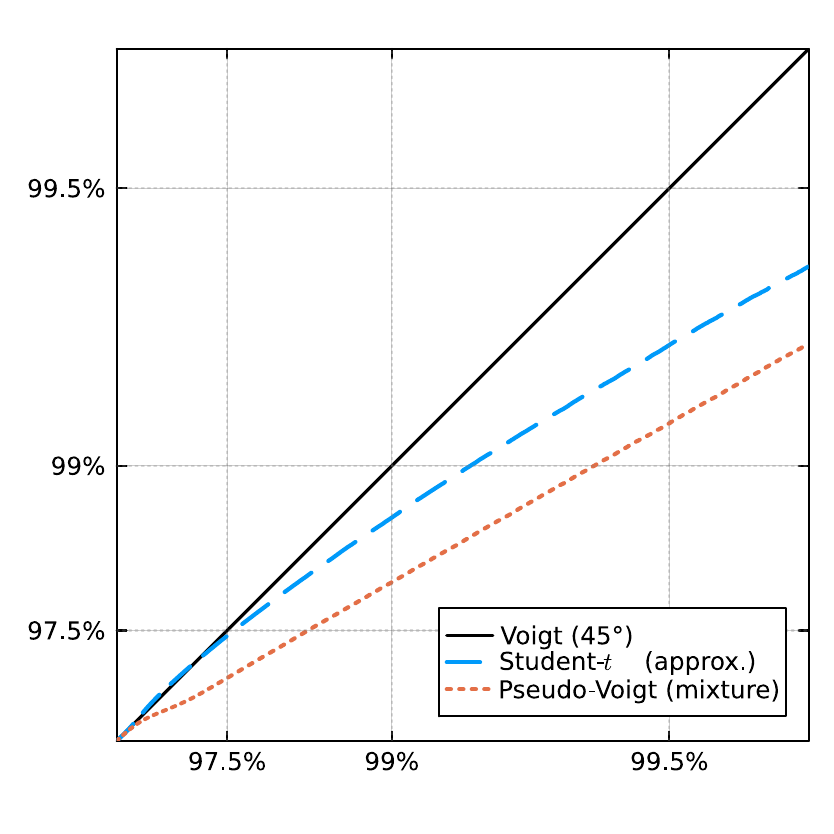}
\par\end{centering}
\caption{{\footnotesize The left panel shows the Voigt density, $\mathcal{V}(0,1,1)$, together with the best approximating Student-$t$ distribution, $(\sigma_t,\nu_t)\approx(1.45,1.22)$, and pseudo-Voigt mixture, $(\eta_p,\sigma_p,\gamma_p)\approx(0.65,1.62,1.65)$. The right panel shows the corresponding Q-Q plots. The approximation parameters are chosen to minimize the Kullback-Leibler discrepancy, but visible discrepancies remain.\label{fig:DenCauchyNormal}}}
\end{figure}

Interestingly, the density (\ref{eq:Kendall}) can be expressed from the Mills ratio for a standard normal random variable, $m(t)=\Phi(-t)/\phi(t)$,
where $\phi$ and $\Phi$ are the density and cumulative distribution function for $\mathcal{N}(0,1)$, respectively. This representation highlights
the connection between the Voigt profile and the hazard rate of the normal distribution, providing another avenue for numerical evaluation.

\begin{corollary}[Voigt density and Mills ratio]\label{cor:MillsRatio}
The Voigt density can be expressed in terms of the complex Mills ratio for a standard normal variable: $f_{Y}(y;\theta)=\frac{1}{\pi\sigma}\operatorname{Re}\left[m(\tfrac{\gamma+i(y-\mu)}{\sigma})\right]$, $\theta=(\mu,\sigma,\gamma)^\prime$, where $f_{Y}$ is the density in (\ref{eq:Kendall}), and $m(t)=\Phi(-t)/\phi(t)$.
\end{corollary}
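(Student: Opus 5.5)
Starting from Proposition \ref{prop:Voigt}, the plan is to identify the scaled complementary error function with the Mills ratio and then substitute; the result is pure algebra once that identity is in hand. For real $x$ we have $\Phi(-t)=\tfrac12\operatorname{erfc}(t/\sqrt2)$ and $\phi(t)=(2\pi)^{-1/2}e^{-t^2/2}$, so
\[
m(t)=\frac{\Phi(-t)}{\phi(t)}=\sqrt{\tfrac{\pi}{2}}\,e^{t^2/2}\operatorname{erfc}(t/\sqrt2)=\sqrt{\tfrac{\pi}{2}}\,\mathsf{e}(t/\sqrt2).
\]
The first step is to extend this identity to complex $t$ with $\operatorname{Re}(t)>0$. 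Both sides are analytic there, provided $m$ is understood as the analytic continuation (the ``complex Mills ratio'') of $\Phi(-t)/\phi(t)$, with $\Phi$ extended as an entire function through $\operatorname{erfc}$. Alternatively, one can take the integral representation of $\mathsf{e}$ stated above as the definition on the right half-plane.

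Either way, the identity theorem transports the equality from the positive real axis to the whole right half-plane. This is the only delicate point: making precise which continuation of $m$ is meant. I would simply define the complex Mills ratio via $\Phi(-t)=\tfrac12\operatorname{erfc}(t/\sqrt2)$ and $\phi(t)=(2\pi)^{-1/2}e^{-t^2/2}$ for complex $t$. Then $m(t)=\sqrt{\pi/2}\,e^{t^2/2}\operatorname{erfc}(t/\sqrt2)=\sqrt{\pi/2}\,\mathsf{e}(t/\sqrt2)$ holds identically, because $\operatorname{erfcx}(w)=e^{w^2}\operatorname{erfc}(w)$ on all of $\mathbb{C}$. If the paper instead intends the integral form for $\mathsf{e}$, one also needs that this integral agrees with $e^{w^2}\operatorname{erfc}(w)$ for $\operatorname{Re} w>0$. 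That is the standard Faddeeva identity, and it can be checked by analyticity plus agreement on real $w>0$.

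The second step is to set $t=(\gamma+i(y-\mu))/\sigma$. Then $t/\sqrt2=w_{y,\theta}$ and $\operatorname{Re}(t)=\gamma/\sigma>0$, so
\[
\operatorname{Re}[m(t)]=\sqrt{\tfrac{\pi}{2}}\,\mathsf{u}(y;\theta).
\]
The third step substitutes this into (\ref{eq:Kendall}):
\[
f_Y(y;\theta)=\frac{1}{\sqrt{2\pi}\,\sigma}\sqrt{\tfrac{2}{\pi}}\operatorname{Re}[m(t)]=\frac{1}{\pi\sigma}\operatorname{Re}[m(t)],
\]
which is the claimed expression.
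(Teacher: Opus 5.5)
Your proposal is correct and follows the paper's proof. Both rest on the identity $m(t)=\sqrt{\pi/2}\,\mathsf{e}(t/\sqrt{2})$, evaluated at $t=(\gamma+i(y-\mu))/\sigma$ and substituted into (\ref{eq:Kendall}); your explicit definition of the complex Mills ratio through the entire extension of $\operatorname{erfc}$ just makes precise what the paper uses implicitly (and ``for real $x$'' should read ``for real $t$'').
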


\begin{lemma}\label{lem:erfcx}
Let $\sigma>0$ and $\gamma>0$. Then: (i) $\mathsf{e}$ is entire and satisfies
$
\mathsf{e}^\prime(w) = 2w\mathsf{e}(w)-\frac{2}{\sqrt{\pi}}.
$
Moreover, for every integer $n\geq0$, there exist polynomials $p_n$ and
$q_n$ such that
$
\mathsf{e}^{(n)}(w)=p_n(w)\mathsf{e}(w)+q_n(w).
$
(ii) The real and imaginary parts satisfy the symmetry relations
$\mathsf{u}(\mu+r;\theta)=\mathsf{u}(\mu-r;\theta)$,
$\mathsf{v}(\mu+r;\theta)=-\mathsf{v}(\mu-r;\theta)$,
and $\mathsf{u}(y;\theta)>0$ for all real $y$. 
\end{lemma}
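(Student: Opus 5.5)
For part (i), I will first identify $\mathsf{e}$ with the entire function $\operatorname{erfcx}(w)=e^{w^2}\operatorname{erfc}(w)$, where
$$
\operatorname{erfc}(w)=1-\tfrac{2}{\sqrt\pi}\int_0^w e^{-s^2}\,ds
$$
is entire, being defined by a path-independent integral of an entire integrand.

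To link this with the integral definition used in the paper, I will show that for $\operatorname{Re}(w)>0$ both sides are holomorphic in $w$. On the positive real axis I will check that they coincide by writing
$$
\frac{1}{w+it}=\int_0^\infty e^{-(w+it)s}\,ds,
$$
applying Fubini, and using the Gaussian characteristic function $\int e^{-t^2-its}\,dt=\sqrt\pi\, e^{-s^2/4}$. This gives
$$
\frac{1}{\sqrt\pi}\int_0^\infty e^{-ws-s^2/4}\,ds.
$$
Completing the square turns this into $e^{w^2}\operatorname{erfc}(w)$. The identity theorem then extends the agreement to the whole right half-plane, and the erfcx formula supplies the entire extension. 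Entireness is therefore immediate.

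The derivative identity follows from the product rule:
$$
\frac{d}{dw}\bigl[e^{w^2}\operatorname{erfc}(w)\bigr]=2w\,e^{w^2}\operatorname{erfc}(w)-e^{w^2}\tfrac{2}{\sqrt\pi}e^{-w^2},
$$
which equals $2w\,\mathsf{e}(w)-2/\sqrt\pi$.

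The polynomial representation comes by induction on $n$, starting from $p_0=1$ and $q_0=0$. Differentiating $p_n\mathsf{e}+q_n$ and substituting the identity for $\mathsf{e}'$ gives the recursions
$$
p_{n+1}=p_n'+2wp_n,\qquad q_{n+1}=q_n'-\tfrac{2}{\sqrt\pi}p_n,
$$
and both are again polynomials.

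For part (ii), put $r=y-\mu$ and $a=\gamma/(\sigma\sqrt2)>0$, $b=r/(\sigma\sqrt2)$, so that $w=a+ib$. Changing variable $t\mapsto -t$ in the integral representation (valid since $a>0$) shows that $\mathsf{e}(\bar w)=\overline{\mathsf{e}(w)}$. Since $\mu-r$ corresponds to $\bar w$, the real part is even in $r$ and the imaginary part is odd in $r$, which are exactly the stated symmetry relations.

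For positivity, I will take real parts in the integral representation:
$$
\operatorname{Re}\frac{1}{a+i(b+t)}=\frac{a}{a^2+(b+t)^2}>0,
$$
so $\mathsf{u}(y;\theta)=\frac1\pi\int e^{-t^2}\dfrac{a}{a^2+(b+t)^2}\,dt>0$. Alternatively, positivity follows from Proposition \ref{prop:Voigt}, since $\mathsf{u}$ is a positive multiple of a convolution of two strictly positive densities.

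The only delicate step is matching the paper's integral definition with erfcx, and this is handled by the Laplace-transform and Fubini computation above, where absolute integrability holds because $\operatorname{Re}(w)>0$.
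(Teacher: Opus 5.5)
Your proposal is correct and follows essentially the same route as the paper. Both arguments get entireness from $\mathsf{e}(w)=e^{w^2}\operatorname{erfc}(w)$, the derivative identity by the product rule, and the polynomial form by induction with $p_{n+1}=p_n'+2wp_n$ and $q_{n+1}=q_n'-\tfrac{2}{\sqrt\pi}p_n$. Both then get the symmetry relations from $\mathsf{e}(\bar w)=\overline{\mathsf{e}(w)}$ and positivity from the real part of the integral representation. The only difference is that you also verify the integral representation via the Laplace-transform and Fubini computation, and derive the conjugation identity from it by $t\mapsto -t$; the paper takes both as known special-function facts.
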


Lemma \ref{lem:erfcx} collects the special-function identities that make the Voigt likelihood tractable. Part (i) states that $\mathsf{e}=\operatorname{erfcx}$ is closed under differentiation in the sense that every derivative is an algebraic function of $w$ and $\mathsf{e}(w)$.\footnote{Special-function representations of the Voigt profile are discussed by \citet{PagniniSaxena:2008}, and reliable numerical algorithms for the Voigt/Faddeeva function are well developed; see, for example, \citet{Humlicek:1982} and \citet{ZaghloulAli:2011}. The derivative identities for $\operatorname{erfcx}$ and the Faddeeva function are known special-function results. The contribution here is statistical: once the Voigt density is written in terms of $\operatorname{erfcx}$, the likelihood derivatives and conditional moments for the Gauss-Cauchy convolution follow algebraically from the same complex-function evaluation used for the density.} Part (ii) gives the positivity and symmetry properties needed for the real and imaginary components along the Voigt line.

The positivity of $\mathsf{u}(y;\theta)$ ensures that the density, log-density, score, Hessian, and conditional moment formulas involving ratios such as ${\mathsf{v}(y;\theta)}/{\mathsf{u}(y;\theta)} $ are well defined for all real $y$. The symmetry relations explain the corresponding even-odd structure of the density and the conditional moments derived next.

The main likelihood implication of Lemma \ref{lem:erfcx} is the following algebraic closure property.

\begin{corollary}[Derivatives of Voigt density]\label{cor:VoigtAlgebraicClosure}
All finite-order derivatives of the Voigt density,
$f_Y(y;\theta)$, with respect to $y$, $\mu$, $\sigma$, and $\gamma$ are
algebraic functions of $w_{y,\theta}$, $\mathsf{u}(y;\theta)$,
$\mathsf{v}(y;\theta)$, and the parameters $\theta=(\mu,\sigma,\gamma)^\prime$. Consequently, all finite-order derivatives of the log-density $\log f_Y(y;\theta)$ can be expressed as rational functions of the same quantities, with denominators involving only powers of $\mathsf{u}(y;\theta)$ and $\sigma$. Since $\mathsf{u}(y;\theta)>0$ and $\sigma>0$, these expressions are well defined for all real $y$.
\end{corollary}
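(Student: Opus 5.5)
The plan is to combine the representation $f_Y(y;\theta)=(2\pi\sigma^2)^{-1/2}\operatorname{Re}[\mathsf{e}(w_{y,\theta})]$ from Proposition \ref{prop:Voigt} with the derivative closure of Lemma \ref{lem:erfcx}(i), and then induct on the order of differentiation. Because $\mathsf{e}$ is entire, $g(\theta,y)\equiv\mathsf{e}(w_{y,\theta})$ is a smooth function of the real variables $(y,\mu,\sigma,\gamma)$ on $\sigma>0$. Taking real parts commutes with real partial derivatives, so $\partial^\alpha f_Y=\operatorname{Re}[\partial^\alpha\{(2\pi\sigma^2)^{-1/2} g\}]$. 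It therefore suffices to show that every mixed partial of $g$ has the form $P(w,\theta)\,\mathsf{e}(w)+Q(w,\theta)$. Here $P$ and $Q$ are polynomials in $w$, $\bar w$ (or equivalently $\operatorname{Re}w$, $\operatorname{Im}w$) and in $\sigma^{-1}$, $\gamma$, $\mu$, $y$. Taking the real part of such an expression gives $\operatorname{Re}P\cdot\mathsf{u}-\operatorname{Im}P\cdot\mathsf{v}+\operatorname{Re}Q$, which is algebraic in the listed quantities.

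The first step computes the first-order partials of $w=w_{y,\theta}$:
\begin{equation*}
\partial_y w=\tfrac{i}{\sigma\sqrt2},\quad \partial_\mu w=-\tfrac{i}{\sigma\sqrt2},\quad \partial_\gamma w=\tfrac{1}{\sigma\sqrt2},\quad \partial_\sigma w=-\tfrac{w}{\sigma}.
\end{equation*}
Each is a polynomial in $w$ and $\sigma^{-1}$. Applying the chain rule together with $\mathsf{e}'(w)=2w\mathsf{e}(w)-2/\sqrt\pi$ then gives, for any coordinate $x\in\{y,\mu,\sigma,\gamma\}$,
\begin{equation*}
\partial_x[P\mathsf{e}+Q]=(\partial_xP+2wP\,\partial_xw)\mathsf{e}+\partial_xQ-\tfrac{2}{\sqrt\pi}P\,\partial_xw .
\end{equation*}
Here $\partial_x P$ is again a polynomial in $(w,\sigma^{-1},\ldots)$ because the partials of $w$ are, and $\partial_x\sigma^{-k}=-k\sigma^{-k-1}$. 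The class is therefore closed under each $\partial_x$, and induction on $|\alpha|$ gives the claim for $g$. Multiplying by $(2\pi\sigma^2)^{-1/2}$ and differentiating that prefactor stays within the same class, since the prefactor is a power of $\sigma$. This proves the statement about $f_Y$.

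For the log-density, I would use Faà di Bruno, or simply induction with the quotient rule. Any partial of $\log f_Y=-\tfrac12\log(2\pi)-\log\sigma+\log\mathsf{u}$ of order $n\ge1$ is a polynomial in the partials of $\mathsf{u}$ up to order $n$ divided by $\mathsf{u}^n$, plus a term in $\sigma^{-n}$. Substituting the expressions from the previous step gives rational functions of $w$, $\mathsf{u}$, $\mathsf{v}$ and $\theta$ whose denominators are powers of $\mathsf{u}$ and $\sigma$. Well-definedness then follows from $\sigma>0$ and $\mathsf{u}(y;\theta)>0$ for real $y$, which is Lemma \ref{lem:erfcx}(ii).

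The main obstacle is bookkeeping rather than analysis. The partials are taken in real variables while $\mathsf{e}$ is differentiated holomorphically, so I must justify the chain rule $\partial_x\mathsf{e}(w)=\mathsf{e}'(w)\partial_xw$. This holds because $\mathsf{e}$ is holomorphic and $w$ depends smoothly on the real parameters. The polynomial class must also be closed under taking real and imaginary parts, which is why I include $\bar w$, or equivalently $\operatorname{Re}w=\gamma/(\sigma\sqrt2)$ and $\operatorname{Im}w=(y-\mu)/(\sigma\sqrt2)$, from the start.
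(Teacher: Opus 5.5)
Your proposal is correct and follows essentially the same route as the paper's proof. Both use the $\operatorname{erfcx}$ representation of Proposition~\ref{prop:Voigt} and the derivative closure of Lemma~\ref{lem:erfcx}(i) with the chain rule along $w_{y,\theta}$, take real parts to express everything through $\mathsf{u}$ and $\mathsf{v}$, and then apply the quotient rule repeatedly to $-\log\sigma+\log\mathsf{u}$ using $\mathsf{u}>0$ and $\sigma>0$; the only difference is that you make the induction explicit on the class $P\mathsf{e}+Q$ using just $\mathsf{e}'=2w\mathsf{e}-2/\sqrt{\pi}$, which handles mixed partials more transparently than the paper's appeal to the polynomials $p_n,q_n$.
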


Corollary \ref{cor:VoigtAlgebraicClosure} is the central tractability result for likelihood inference. Although Voigt density is a Gaussian-Cauchy convolution, the scaled complementary error function absorbs the unstable exponential term, reducing density differentiation to algebraic operations on a single complex-valued function. Thus, the likelihood, score, Hessian, and higher-order derivatives can be evaluated directly from $\mathsf{u}(y;\theta)$ and $\mathsf{v}(y;\theta)$. This bypasses numerical convolution, quadrature, finite-difference, or pseudo-Voigt approximations, making exact maximum likelihood estimation far more direct than the convolution representation suggests.

\subsection{Conditional Distribution, Expectation, and Variance}

We now consider the signal-extraction problem implied by the convolution $
Y=\mu+Z+X$. The objective is to infer the latent Gaussian component $Z$ from the observed
realization of $Y$, treating the Cauchy component $X$ as measurement noise.
This formulation is directly relevant for signal extraction with redescending update influence, where $Z$ represents the latent signal of interest and $X$
captures transient heavy-tailed contamination. The following theorem gives the closed form conditional
density of $Z$ given $Y=y$.
\begin{theorem}
\label{thm:CondDensity}Suppose that $Z\sim \mathcal{N}(0,\sigma^{2})$ and
$X\sim {\operatorname{Cauchy}(0,\gamma)}$ are independent
and define $Y=\mu+Z+X$. Then the conditional density of $Z$ given $Y=y$
is given by 
$$
f_{Z|Y}(z|y)=\frac{1}{\gamma\pi }\frac{1}{\mathsf{u}(y;\theta)}\frac{\exp(-\frac{1}{2}z^{2}/\sigma^{2})}{1+(\frac{y-\mu-z}{\gamma})^{2}},
$$
where $f_{Z|Y}(z|y)\rightarrow f_{Z}(z)=e^{-z^{2}/(2\sigma^{2})}/\sqrt{2\pi\sigma^{2}}$
as $(y-\mu)\rightarrow\pm\infty$. 
\end{theorem}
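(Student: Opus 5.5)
Bayes' rule is the natural route. Since $Z$ and $X$ are independent, the joint density of $(Z,Y)$ is $f_Z(z)\,f_X(y-\mu-z)$. Writing out the Gaussian and Cauchy densities,
$$
f_Z(z)f_X(y-\mu-z)=\frac{1}{\sqrt{2\pi\sigma^2}}e^{-z^2/(2\sigma^2)}\cdot\frac{1}{\pi\gamma}\frac{1}{1+(\frac{y-\mu-z}{\gamma})^2}.
$$
Dividing by the marginal $f_Y(y;\theta)$ gives $f_{Z|Y}(z|y)$. We then substitute Proposition~\ref{prop:Voigt}, $f_Y(y;\theta)=\mathsf{u}(y;\theta)/\sqrt{2\pi\sigma^2}$. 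The factors $1/\sqrt{2\pi\sigma^2}$ cancel and we obtain the stated expression $\frac{1}{\gamma\pi}\mathsf{u}(y;\theta)^{-1}e^{-z^2/(2\sigma^2)}/(1+((y-\mu-z)/\gamma)^2)$. Lemma~\ref{lem:erfcx}(ii) guarantees $\mathsf{u}(y;\theta)>0$, so the division is legitimate for every real $y$. This part is routine.

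For the limit, set $r=y-\mu\to\pm\infty$; by the symmetry in Lemma~\ref{lem:erfcx}(ii) it suffices to treat $r\to+\infty$. Write the conditional density as
$$
f_{Z|Y}(z|y)=e^{-z^2/(2\sigma^2)}\cdot\frac{1}{\pi\gamma\,\mathsf{u}(y;\theta)(1+r^2/\gamma^2)}\cdot\frac{1+r^2/\gamma^2}{1+(r-z)^2/\gamma^2}.
$$
For fixed $z$, the last factor tends to $1$. So it remains to show that $\pi\gamma\,\mathsf{u}(y;\theta)(1+r^2/\gamma^2)\to\sqrt{2\pi\sigma^2}$. This is equivalent to the tail statement $f_Y(y)/f_X(r)\to1$, i.e., the Voigt density has the same tail as the Cauchy density.

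This tail equivalence is the main step, and I would prove it directly from the convolution: $f_Y(y)/f_X(r)=\mathbb{E}\big[(1+r^2/\gamma^2)/(1+(r-Z)^2/\gamma^2)\big]$ with $Z\sim\mathcal N(0,\sigma^2)$. The integrand converges pointwise to $1$, so dominated convergence would finish the argument. The difficulty is that the integrand is not uniformly bounded: near $Z\approx r$ it is of size $r^2$. To handle this, split the expectation at $|Z|\le r/2$. On that region the ratio is bounded by $(1+r^2/\gamma^2)/(1+r^2/(4\gamma^2))\le 4$, so dominated convergence applies there. On the complement the ratio is at most $1+r^2/\gamma^2$, while $\Pr(|Z|>r/2)$ decays like $e^{-r^2/(8\sigma^2)}$, so that contribution vanishes. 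Combining the two pieces gives $f_Y(y)/f_X(r)\to1$, and hence $f_{Z|Y}(z|y)\to f_Z(z)$ pointwise in $z$.

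As an alternative, one could use the asymptotic expansion $\mathsf{e}(w)\sim 1/(\sqrt\pi\,w)$ as $|w|\to\infty$ with $\operatorname{Re}w>0$, which follows from Lemma~\ref{lem:erfcx}(i). This gives $\operatorname{Re}[1/(\sqrt{\pi}w_{y,\theta})]=\sqrt{2}\sigma\gamma/(\sqrt\pi(\gamma^2+r^2))$ and yields the same constant. However, controlling the error term along the Voigt line needs care, so I would rely on the probabilistic argument.
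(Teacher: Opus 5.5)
Your proof is correct. The Bayes'-rule step matches the paper's, but your limit argument takes a genuinely different route.

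The paper takes the large-argument behaviour of $\operatorname{erfcx}$ along the Voigt line, $\tilde y^{2}\mathsf{u}(y;\theta)\to\sqrt{2/\pi}\,\sigma\gamma$, as a known special-function fact rather than deriving it. It then combines this with $1/\{1+((\tilde y-z)/\gamma)^2\}\sim\gamma^2/\tilde y^2$ for fixed $z$.

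You instead reduce the claim to the tail equivalence $f_Y(y)/f_X(y-\mu)\to1$. You prove it directly from the convolution by splitting at $|Z|\le |r|/2$, where the ratio is bounded by $4$ and dominated convergence applies, and $|Z|>|r|/2$, where the ratio is at most $1+r^2/\gamma^2$ against a Gaussian tail.

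Your route buys three things:
\begin{itemize}
\item It is self-contained.
\item It proves the paper's asserted limit as a corollary: $\mathsf{u}=\sqrt{2\pi\sigma^2}\,f_Y$ and $f_X(r)\sim\gamma/(\pi r^2)$ give $r^2\mathsf{u}\to\sqrt{2/\pi}\,\sigma\gamma$.
\item It shows the Gaussian shape of $Z$ plays no role. Your argument uses only $r^2\Pr(|Z|>|r|/2)\to0$, which holds whenever $\mathbb{E}Z^2<\infty$. So $f_{Z|Y}(z|y)\to f_Z(z)$ for any finite-variance component convolved with Cauchy noise.
\end{itemize}

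The paper's route is shorter. It also connects directly to the $\operatorname{erfcx}$ tail expansions reused later for the score and Hessian envelopes, but it relies on an expansion it does not derive.

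Your caution about the alternative is warranted. The leading term $1/(\sqrt{\pi}\,w)$ has modulus of order $|r|^{-1}$ but real part of order $r^{-2}$. An error bound of order $o(r^{-2})$ is therefore needed, which requires the next term of the expansion. Also, the asymptotic does not ``follow from'' Lemma~\ref{lem:erfcx}(i) alone; the differential identity needs an additional bound. Since you do not rely on that alternative, this does not affect your proof.
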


The conditional moments of $Z$ given $Y$ can be obtained from Tweedie’s formula for Gaussian convolutions, originally attributed to \citet{Robbins:1956}; see \citet{Efron:2011} for a modern treatment. The key point is that the conditional mean of an additive Gaussian component is determined by the score of the marginal density, and the conditional variance by its derivative.

\begin{proposition}[Tweedie's formula for Gaussian convolutions]\label{prop:TweedieGaussian}
Let $Z\sim\mathcal{N}(0,\sigma^2)$, with $\sigma>0$, be independent of a proper real-valued random variable $X$, and define $Y=\mu+Z+X$.
Then $Y$ has a strictly positive and smooth density $f_Y$. Moreover, for all real $y$,
\begin{gather*}
    \mathbb{E}[Z\mid Y=y]
    =
    -\sigma^2\frac{\partial}{\partial y}\log f_Y(y),  \\
    \mathbb{V}(Z\mid Y=y)
    =
    \sigma^2+\sigma^4\frac{\partial^2}{\partial y^2}\log f_Y(y)
    =
    \sigma^2\left(1-\frac{\partial}{\partial y}\mathbb{E}[Z\mid Y=y]\right).
\end{gather*}

\end{proposition}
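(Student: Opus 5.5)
We will prove Proposition~\ref{prop:TweedieGaussian} by writing the marginal density of $Y$ as a Gaussian mixture and differentiating under the integral. Let $P_X$ denote the law of $X$ (no density is assumed) and write $\phi_\sigma(z)=(2\pi\sigma^2)^{-1/2}e^{-z^2/(2\sigma^2)}$. Conditioning on $X$ gives
\[
f_Y(y)=\int_{\mathbb{R}}\phi_\sigma(y-\mu-x)\,P_X(dx).
\]

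\textbf{Positivity and smoothness.} Since $\phi_\sigma>0$ everywhere and $P_X$ is a probability measure, $f_Y(y)>0$ for every $y$. For smoothness, each derivative $\partial_y^k\phi_\sigma(y-\mu-x)$ has the form $H_k(y-\mu-x)\phi_\sigma(y-\mu-x)$ with $H_k$ a polynomial. Hence it is bounded uniformly in $(y,x)$ by a constant $C_k$. That constant is $P_X$-integrable because $P_X$ is finite, so dominated convergence lets us differentiate under the integral to all orders. Thus $f_Y\in C^\infty$, and both $f_Y'$ and $f_Y''$ are given by the differentiated integrals. This domination step is the only real technical point, and it is harmless precisely because the Gaussian kernel's derivatives are bounded and no moments of $X$ are needed.

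\textbf{Conditional mean.} The pair $(Z,Y)$ has joint density $\phi_\sigma(z)\,f_X$-mixture form. Concretely, $Z$ given $Y=y$ has density
\[
f_{Z|Y}(z\mid y)=\frac{\phi_\sigma(z)\,\int \delta\text{-free form}}{f_Y(y)},
\]
which is best handled via the disintegration $\mathbb{E}[g(Z)\mid Y=y]f_Y(y)=\int g(y-\mu-x)\phi_\sigma(y-\mu-x)P_X(dx)$. This identity is verified by integrating both sides against test functions of $y$ and applying Fubini.

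Apply it with $g(z)=z$ and use $\phi_\sigma'(z)=-(z/\sigma^2)\phi_\sigma(z)$. This gives
\[
\mathbb{E}[Z\mid Y=y]f_Y(y)=-\sigma^2\int\phi_\sigma'(y-\mu-x)P_X(dx)=-\sigma^2f_Y'(y).
\]
Dividing by $f_Y(y)>0$ yields the first formula.

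\textbf{Conditional variance.} Apply the same identity with $g(z)=z^2$ and use $\phi_\sigma''(z)=(z^2/\sigma^4-1/\sigma^2)\phi_\sigma(z)$. This gives
\[
\mathbb{E}[Z^2\mid Y=y]=\sigma^2+\sigma^4\frac{f_Y''(y)}{f_Y(y)}.
\]
Subtracting the squared mean $\sigma^4(f_Y'/f_Y)^2$ and noting that $(\log f_Y)''=f_Y''/f_Y-(f_Y'/f_Y)^2$ produces $\sigma^2+\sigma^4(\log f_Y)''$.

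Finally, differentiate the mean formula: $\partial_y\mathbb{E}[Z\mid Y=y]=-\sigma^2(\log f_Y)''$. Substituting this into the variance expression gives $\sigma^2\bigl(1-\partial_y\mathbb{E}[Z\mid Y=y]\bigr)$. Finiteness of the conditional moments follows from the same boundedness of $z^k\phi_\sigma(z)$.
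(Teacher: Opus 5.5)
Your proof is correct and follows essentially the same route as the paper's: you write $f_Y$ as a Gaussian mixture over the law of $X$, justify differentiation under the integral by the uniform boundedness of the Gaussian kernel's derivatives, and read off $\mathbb{E}[Z\mid Y=y]$ and $\mathbb{E}[Z^2\mid Y=y]$ from $\phi_\sigma'$ and $\phi_\sigma''$ via the conditional law of $X$ given $Y=y$. Your disintegration identity is a slightly more explicit version of the paper's ``proportional to $\phi_\sigma(y-\mu-x)\,dF_X(x)$'' step, and the only blemish is the stray display for $f_{Z|Y}(z\mid y)$ containing ``$\int \delta$-free form,'' which is meaningless and unused and should simply be deleted.
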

In the Gauss-Cauchy convolution, Proposition~\ref{prop:Voigt} gives the marginal density $f_Y$ explicitly in terms of $\mathsf{u}(y;\theta)$. Therefore Proposition~\ref{prop:TweedieGaussian} reduces the conditional-moment calculation to differentiating the Voigt log-density. Using the score and Hessian identities in Lemma~\ref{lem:ScoreHessian} gives the following closed-form expressions.
\begin{corollary}
\label{cor:CondExpect}In the Gauss-Cauchy convolution model, the conditional expectation and conditional variance of the latent Gaussian component $Z$ given $Y=y$ are
\begin{eqnarray}
\mathbb{E}\left[Z|Y=y\right]&=&(y-\mu) +\gamma\frac{\mathsf{v}(y;\theta)}{\mathsf{u}(y;\theta)}\label{eq:CondExpect}\\
\mathbb{V}(Z|Y=y) &=& \sqrt{2/\pi}\frac{\sigma\gamma}{\mathsf{u}(y;\theta)} - \gamma^2\left(1 + \frac{\mathsf{v}^{2}(y;\theta)}{\mathsf{u}^{2}(y;\theta)}\right)
\end{eqnarray}
and $\mathbb{E}\left[Z|Y=y\right]$ reaches its maximum/minimum at $y^{*}$ satisfying $\mathbb{V}(Z|Y=y^\ast)=\sigma^2$. 
% $$
% (\gamma^{2}+\sigma^{2})\mathsf{u}^{2}(y^{\ast};\theta)+\gamma^{2}\mathsf{v}^{2}(y^{\ast};\theta)-\sqrt{2/\pi}\sigma\gamma \mathsf{u}(y^{\ast};\theta)=0.
% $$
\end{corollary}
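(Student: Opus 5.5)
The plan is to combine Proposition~\ref{prop:TweedieGaussian} with explicit derivatives of $\log\mathsf{u}(y;\theta)$ along the Voigt line, using the differential identity in Lemma~\ref{lem:erfcx}(i).

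By Proposition~\ref{prop:Voigt}, $\log f_Y(y)=\log\mathsf{u}(y;\theta)-\tfrac12\log(2\pi\sigma^2)$, so only $\partial_y\log\mathsf{u}$ and $\partial_y^2\log\mathsf{u}$ are needed. Write $w=w_{y,\theta}=(\gamma+i(y-\mu))/(\sigma\sqrt2)$, so $\partial w/\partial y=i/(\sigma\sqrt2)$. Since $\mathsf{e}$ is entire,
\[
\partial_y\mathsf{e}(w)=\frac{i}{\sigma\sqrt2}\Bigl(2w\mathsf{e}(w)-\tfrac{2}{\sqrt\pi}\Bigr).
\]
Writing $2w=(\gamma+i(y-\mu))\sqrt2/\sigma$ and taking real and imaginary parts gives a linear system:
\[
\partial_y\mathsf{u}=-\frac{(y-\mu)\mathsf{u}+\gamma\mathsf{v}}{\sigma^2},\qquad
\partial_y\mathsf{v}=\frac{\gamma\mathsf{u}-(y-\mu)\mathsf{v}}{\sigma^2}-\frac{\sqrt2}{\sigma\sqrt\pi}.
\]
These are presumably the content of the score identities in Lemma~\ref{lem:ScoreHessian}, which I would cite if available.

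For the conditional mean, the first identity gives $-\sigma^2\partial_y\log\mathsf{u}=(y-\mu)+\gamma\mathsf{v}/\mathsf{u}$, which is (\ref{eq:CondExpect}) by Tweedie's formula.

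For the conditional variance, I would use the second form of Proposition~\ref{prop:TweedieGaussian}, $\mathbb{V}=\sigma^2(1-\partial_y\mathbb{E}[Z|Y=y])$. Differentiating gives
\[
\partial_y\mathbb{E}=1+\gamma\,\frac{\mathsf{u}\,\partial_y\mathsf{v}-\mathsf{v}\,\partial_y\mathsf{u}}{\mathsf{u}^2}.
\]
Substituting the two derivatives, the $(y-\mu)$ terms cancel in $\mathsf{u}\partial_y\mathsf{v}-\mathsf{v}\partial_y\mathsf{u}$, leaving
\[
\mathsf{u}\partial_y\mathsf{v}-\mathsf{v}\partial_y\mathsf{u}=\frac{\gamma(\mathsf{u}^2+\mathsf{v}^2)}{\sigma^2}-\sqrt{2/\pi}\,\frac{\mathsf{u}}{\sigma}.
\]
Therefore
\[
\mathbb{V}=-\sigma^2\gamma\,\frac{\mathsf{u}\partial_y\mathsf{v}-\mathsf{v}\partial_y\mathsf{u}}{\mathsf{u}^2}=\sqrt{2/\pi}\,\frac{\sigma\gamma}{\mathsf{u}}-\gamma^2\Bigl(1+\frac{\mathsf{v}^2}{\mathsf{u}^2}\Bigr).
\]
The main hazard is sign bookkeeping when splitting $i\cdot 2w\mathsf{e}$ into real and imaginary parts; I would check the result against the $y\to\pm\infty$ limit implied by Theorem~\ref{thm:CondDensity}, where $\mathbb{V}\to\sigma^2$.

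For the extremum claim, $\mathbb{E}[Z|Y=y]$ is smooth in $y$, so its critical points satisfy $\partial_y\mathbb{E}=0$. By Tweedie this is equivalent to $\mathbb{V}(Z|Y=y)=\sigma^2$. To show the maximum and minimum are actually attained, I would note that $\mathbb{E}[Z|Y=y]\to0$ as $y\to\pm\infty$ (Theorem~\ref{thm:CondDensity}: the conditional law tends to $f_Z$, and dominated convergence applies given the Gaussian factor), that it is odd about $\mu$ by Lemma~\ref{lem:erfcx}(ii), and that it is positive for large $y-\mu>0$ and not identically zero. Its slope at $y=\mu$ equals $1-\mathbb{V}(Z|Y=\mu)/\sigma^2>0$, because conditioning strictly reduces variance there. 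Hence a positive interior supremum exists and is attained at some $y^*$, where the derivative vanishes and $\mathbb{V}=\sigma^2$; the minimum is the mirror image at $2\mu-y^*$.
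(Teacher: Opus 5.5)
Your derivation is correct, but your route to the variance formula differs from the paper's. The paper never differentiates $\mathsf{v}$. It writes $\mathbb{V}(Z|Y=y)=\sigma^2+\sigma^4H_{\mu\mu}$ and invokes the heat identity $H_{\mu\mu}=s_\sigma/\sigma-s_\mu^2$ from Lemma~\ref{lem:ScoreHessian}. The variance then becomes $\sigma^2+\sigma^3s_\sigma-(\sigma^2s_\mu)^2$ and follows by substituting the already-derived $\mu$- and $\sigma$-scores, after which the $\tilde y^2$ and $\tilde y\,\mathsf{v}/\mathsf{u}$ terms cancel. You instead differentiate the conditional mean directly, which needs $\partial_y\mathsf{v}$ from the imaginary part of $\mathsf{e}'(w)=2w\mathsf{e}(w)-2/\sqrt{\pi}$. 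Your expressions for $\partial_y\mathsf{u}$ and $\partial_y\mathsf{v}$ are right, and so is the cancellation of the $(y-\mu)$ terms in $\mathsf{u}\,\partial_y\mathsf{v}-\mathsf{v}\,\partial_y\mathsf{u}$. What your version buys is self-containedness: it uses only Lemma~\ref{lem:erfcx}(i) and the chain rule, with no heat identity and no $\sigma$-score. The paper's version is shorter in context because it recycles the score/Hessian lemma, and it expresses the conditional variance through the $\sigma$-score.

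For the extremum claim, the paper proves only the necessary condition: at any extremum the slope of $\mathbb{E}[Z|Y=y]$ vanishes, so $\mathbb{V}=\sigma^2$. Your attainment argument is a genuine addition.

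One justification in it is off. ``Conditioning strictly reduces variance'' is not a pointwise fact; the paper's own footnote exhibits $y$ with $\mathbb{V}(Z|Y=y)>\sigma^2$. At $y=\mu$ the inequality does hold. The density $f_{Z|Y}(z|\mu)$ is the $\mathcal{N}(0,\sigma^2)$ density tilted by $1/(1+z^2/\gamma^2)$, which is decreasing in $z^2$, so Chebyshev's covariance inequality gives $\mathbb{E}[Z^2|Y=\mu]<\sigma^2$. Equivalently, this is the Mills-ratio bound $\Phi(-a)/\phi(a)>a/(1+a^2)$ with $a=\gamma/\sigma$.

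You do not need this step, however. Continuity, oddness about $\mu$, decay to zero at $\pm\infty$, and the conditional mean not being identically zero together force a positive maximum to be attained, with the minimum at the mirror point $2\mu-y^\ast$. The last of these holds because otherwise $\partial_y\log f_Y\equiv 0$, so $f_Y$ would be constant, which is impossible for a density.
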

The conditional expectation of $Z$ given
$Y$ is antisymmetric about $\mu$ and, conversely, the conditional variance is symmetric about $\mu$, driven by the fact that it depends on the prediction error strictly through the even function $\mathsf{u}(y)$ and the squared odd function $\mathsf{v}(y)^2$.
The conditional expectation and variance are plotted for $(\mu,\sigma,\gamma)=(0,1,1)$ 
in Figure \ref{fig:Conditional-expectation}. For this case, the maximum conditional expectation is $\max_{y}\mathbb{E}\left[Z|Y=y\right]\approx0.7486$
for $y\approx2.4637$ and the conditional variance ranges between $\mathbb{V}(Z|Y=0)\approx 0.5251$ and $\mathbb{V}(Z|Y=y)\approx 1.1603$ at $y\approx \pm 3.6621$.\footnote{That 
$\mathbb{V}(Z|Y=y)$ can exceed $\mathbb{V}(Z)$ is not inconsistent with the law of total variance, because the latter implies only that
$\mathbb{E}[\mathbb{V}(Z|Y)]\leq \mathbb{V}(Z)$, not that
$\mathbb{V}(Z|Y=y)\leq \mathbb{V}(Z)$ for every $y$. When $\sigma=\gamma=1$ and $\mu=0$, the thresholds for maximum signal extraction ($y\approx2.4637$) and maximum signal confusion ($y\approx3.6621$) are the positive roots of the second and third derivatives of the Voigt log-likelihood, respectively.}
As $y\rightarrow\pm\infty$
the conditional moments converge to the unconditional moments,
$\mathbb{E}\left[Z|Y=y\right]\rightarrow 0$
and $\mathbb{V}\left[Z|Y=y\right]\rightarrow 1$. The reason is that the conditional distribution
converges to the unconditional distribution as $y$ increases in absolute value, i.e. $f_{Z|Y}(z|y)\rightarrow f_{Z}(z)$ (almost everywhere) as $y\rightarrow\pm\infty$. 
The intuition for this is simply that the tail of a Gaussian vanishes much faster than that of a Cauchy, such that a very extreme realization of $y$ is almost fully attributed to the Cauchy component.

Figure \ref{fig:Conditional-expectation} provides valuable insight into the filter we propose later. The shape of (\ref{eq:CondExpect}) is approximately linear near the origin, similar to a standard Kalman filter. However, for large values of $|y|$ the filter enters a nonlinear regime, and the expectation does not merely saturate but redescends toward zero. This non-monotonicity is key and implies that the filter identifies observations that are ``too large to be true signals'' and isolates them as pure noise, protecting the latent trend estimate from contamination.
Since $Y=\mu + Z+X$ we may define 
$\mathbb{E}[X|Y=y]=-\gamma\frac{\mathsf{v}(y)}{\mathsf{u}(y)}$, despite $X$ being Cauchy distributed, with  $\mathbb{E}|X|=\infty$.

The conditional mean in Figure \ref{fig:Conditional-expectation} bears a striking resemblance to the score function of the Student-$t$ distribution in \citet[Figure 1]{HarveyLuati:2014}. This is noteworthy because two objects arise from different constructions. In score-driven filters, the redescending shape is typically induced by specifying a heavy-tailed observation density, such as Student-$t$, whose
score limits the influence of large observations
\citep{HarveyLuati:2014, GasperoniLuatiPaciDInnocenzo:2023}. In contrast, the redescending shape in Figure \ref{fig:Conditional-expectation} emerges endogenously as the exact conditional mean of a latent Gaussian signal observed with Cauchy noise. Thus, the Gauss-Cauchy convolution provides a probabilistic interpretation of redescending update functions of the type used in score-driven models.
\begin{figure}[tbh]
\spacingset{1.15}
\begin{centering}
\includegraphics[width=0.95\textwidth]{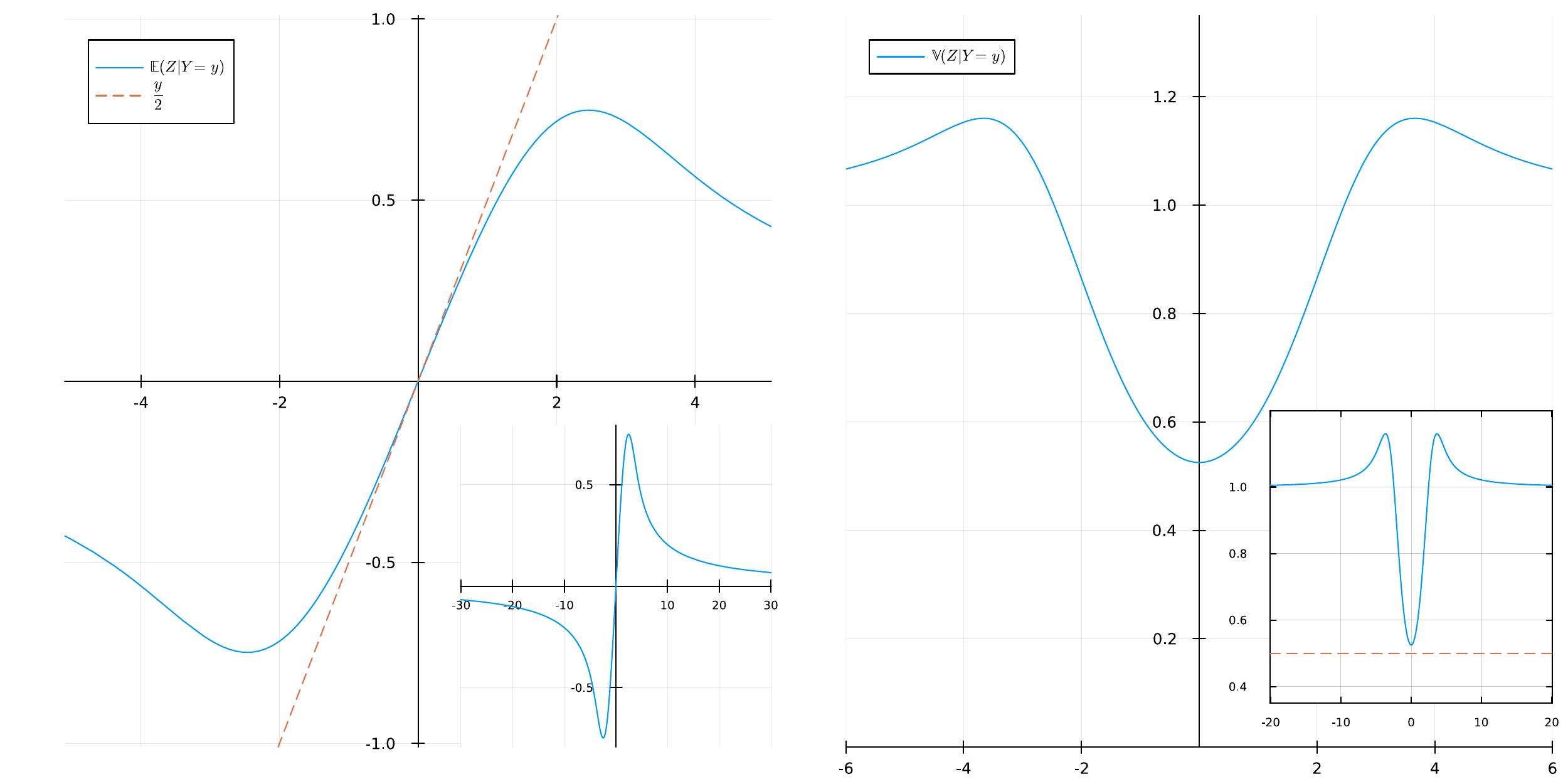}
\par\end{centering}
\caption{{\footnotesize The left panel shows the conditional expectation of $Z$ given $Y=y$, where $Y=Z+X$ with $Z\sim\mathcal N(0,1)$ and $X\sim\operatorname{Cauchy}(0,1)$ independent. The right panel shows the corresponding conditional variance. The insets present the same quantities over a wider range of values of $y$.\label{fig:Conditional-expectation}}}
\end{figure}

The non-monotonicity of the conditional variance is inextricably linked to the redescending nature of the conditional expectation. By differentiating Tweedie's formula for conditional mean, $\mathbb{E}[Z | Y=y] = -\sigma^2 \frac{\partial \log f_Y(y)}{\partial y}$, with respect to $y$, we obtain $\frac{\partial}{\partial y}\mathbb{E}[Z| Y=y] = -\sigma^2 \frac{\partial^2 \log f_Y(y)}{\partial y^2}$. Substituting this into the conditional variance equation yields $\mathbb{V}(Z| Y=y) = \sigma^2 \left( 1 - \frac{\partial}{\partial y} \mathbb{E}[Z| Y=y] \right)$. This establishes three distinct filtering regimes based on the slope of the conditional expectation. Near zero, where $\mathbb{E}(Z|Y=y)$ is increasing in $y$, the observation is informative about the latent Gaussian component, and the conditional variance is strictly less than the unconditional variance $\sigma^2$, reflecting a reduction in uncertainty. Then, at the local extrema of the conditional expectation ($y \approx \pm 2.4637$ for the standard case), the slope is zero, and the conditional variance exactly equals $\sigma^2$. These points mark the transition at which additional signal magnitude no longer increases the inferred Gaussian component. Further into the tails, the expectation redescends and the conditional variance strictly exceeds $\sigma^2$. The increased uncertainty occurs because the filter cannot confidently attribute the moderate outlier to either the Gaussian signal or the Cauchy noise. As $|y| \to \infty$, $\frac{\partial}{\partial y}\mathbb{E}(Z|Y=y) \to 0$ and the conditional variance smoothly asymptotes back to the unconditional variance as the observation is entirely discounted.

More generally, the same Gaussian-convolution identity yields all conditional cumulants of $Z|Y=y$ as derivatives of the Voigt log-density, so conditional skewness, kurtosis, and higher conditional moments of the latent Gaussian component can be evaluated without numerical integration; see Supplement Section \ref{sec:higher-order-cumulants}.

\section{Likelihood-Based Inference for Voigt Distributions}\label{sec:MLE}

Having established the exact density and conditional properties of the Voigt distribution, we now turn to the problem of parameter estimation. While the Voigt distribution is fundamental in spectroscopy and physics, rigorous statistical inference for its parameters has historically been underdeveloped. The closed-form expression in (\ref{eq:Kendall}) allows us to proceed directly with Maximum Likelihood Estimation (MLE), avoiding the approximations commonly used in the existing literature. These results are obtained by leveraging the differential properties of the scaled complementary error function, which leads to exact analytical expressions for the score vector and Hessian matrix.

\begin{lemma}[Score and Hessian of the Voigt log-likelihood]
\label{lem:ScoreHessian}Let $Y\sim\mathcal{V}(\mu,\sigma,\gamma)$,
then the score and Hessian matrix are given by
\begin{align*}
s_{\mu} & =\tfrac{1}{\sigma^2}\left(y-\mu+\gamma\tfrac{v}{u}\right)=\tfrac{1}{\sigma^2}\mathbb{E}(Z|Y=y)\\
s_{\sigma} & =\tfrac{1}{\sigma^{3}u}\left((\tilde{y}^{2}-\gamma^{2}-\sigma^{2})u+2\gamma\tilde{y}v+\sqrt{\tfrac{2}{\pi}}\sigma\gamma\right)\\
s_{\gamma} & =\tfrac{1}{\sigma^{2}u}\left(\gamma u-\tilde{y}v-\sqrt{\tfrac{2}{\pi}}\sigma\right),
\end{align*}
where $\tilde{y}=y-\mu$, $u=\mathsf{u}(y;\mu,\sigma,\gamma)$ and $v=\mathsf{v}(y;\mu,\sigma,\gamma)$ and the elements of the Hessian,
$ H_\theta $
\begin{align*}
H_{\mu\mu} &=  \tfrac{1}{\sigma}s_{\sigma}-s_{\mu}^{2}&\quad&
H_{\mu\sigma}  =-\tfrac{1}{\sigma}\left(s_{\mu}+\gamma H_{\mu\gamma}-\tilde{y}H_{\mu\mu}\right) \\
H_{\gamma\gamma}& =  {-\tfrac{1}{\sigma}s_{\sigma}-s_{\gamma}^{2}}&\quad&
H_{\gamma\sigma} = -\tfrac{1}{\sigma}\left(s_{\gamma}+\gamma H_{\gamma\gamma}-\tilde{y}H_{\mu\gamma}\right)\\
H_{\mu\gamma} &=  \tfrac{1}{\sigma^{2}}\left(\tilde{y}s_{\gamma}+\gamma s_{\mu}+\tfrac{v}{u}\right)-s_{\mu}s_{\gamma}&\quad&
H_{\sigma\sigma} = -\tfrac{1}{\sigma}\left(s_{\sigma}+\gamma H_{\gamma\sigma}-\tilde{y}H_{\mu\sigma}\right).
\end{align*}
\end{lemma}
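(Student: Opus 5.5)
The plan is to work with the decomposition $\log f_Y(y;\theta)=-\tfrac12\log(2\pi)-\log\sigma+\log u$. I would get the derivatives of $u$ and $v$ from the chain rule applied to $\mathsf{e}(w_{y,\theta})$, using Lemma~\ref{lem:erfcx}(i), $\mathsf{e}'(w)=2w\mathsf{e}(w)-2/\sqrt{\pi}$. The parameter derivatives of the Voigt point are
\[
\partial_\mu w=-\tfrac{i}{\sigma\sqrt2},\qquad \partial_\gamma w=\tfrac{1}{\sigma\sqrt2},\qquad \partial_\sigma w=-\tfrac{w}{\sigma}.
\]
Since $2w=\sqrt2(\gamma+i\tilde y)/\sigma$, each derivative of $\mathsf{e}$ equals $(\gamma+i\tilde y)(u+iv)/\sigma^2$ times a simple factor, plus a constant.

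\emph{Scores.} For $\mu$, taking real parts gives $\partial_\mu u=(\tilde y u+\gamma v)/\sigma^2$. Hence $s_\mu=(\tilde y+\gamma v/u)/\sigma^2$, and its identification with $\mathbb{E}(Z|Y=y)/\sigma^2$ is then Proposition~\ref{prop:TweedieGaussian}, because $\partial_\mu\log f_Y=-\partial_y\log f_Y$. The $\gamma$ derivative is the same computation without the factor $-i$. This gives $\partial_\gamma u=(\gamma u-\tilde y v)/\sigma^2-\sqrt{2/\pi}/\sigma$, and therefore $s_\gamma$. For $\sigma$ I would not compute directly. Instead I would use the homogeneity relation: $w$ depends on $(\tilde y,\sigma,\gamma)$ only through $\tilde y/\sigma$ and $\gamma/\sigma$, so Euler's identity gives
\[
\sigma\partial_\sigma u=\tilde y\,\partial_\mu u-\gamma\,\partial_\gamma u .
\]
It follows that
\[
s_\sigma=\tfrac1\sigma\bigl(-1+\tilde y s_\mu-\gamma s_\gamma\bigr).
\]
Substituting the two scores already found reproduces the stated $s_\sigma$. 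In particular the $\sqrt{2/\pi}\,\sigma\gamma$ term comes from $-\gamma s_\gamma$.

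\emph{Hessian.} I would get it mostly from structural identities rather than brute force.
\begin{enumerate}
\item The density solves the heat equation in $\sigma^2$, because adding Gaussian variance diffuses the density: $\partial_\sigma f=\sigma\,\partial_y^2 f=\sigma\,\partial_\mu^2 f$. Hence $f_{\mu\mu}/f=s_\sigma/\sigma$, and $H_{\mu\mu}=f_{\mu\mu}/f-s_\mu^2=s_\sigma/\sigma-s_\mu^2$.
\item For fixed $\sigma$, $u$ is the real part of an analytic function of $\gamma+i\tilde y$, so it is harmonic in $(\gamma,\mu)$. Equivalently, the Cauchy component acts through the Poisson kernel. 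Thus $f_{\gamma\gamma}=-f_{\mu\mu}$, which gives $H_{\gamma\gamma}=-s_\sigma/\sigma-s_\gamma^2$.
\item Differentiating the Euler relation for $s_\sigma$ with respect to $\mu$, $\gamma$ and $\sigma$ in turn gives the three expressions for $H_{\mu\sigma}$, $H_{\gamma\sigma}$ and $H_{\sigma\sigma}$. The $\sigma$ case needs $\partial_\sigma(\tilde y s_\mu-\gamma s_\gamma)=\tilde y H_{\mu\sigma}-\gamma H_{\gamma\sigma}$ together with the derivative of the prefactor $1/\sigma$, which yields $-s_\sigma/\sigma$.
\end{enumerate}

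The remaining entry $H_{\mu\gamma}$ is the one genuinely computational step, and I expect it to be the main bookkeeping obstacle. I would differentiate $s_\mu=(\tilde y+\gamma v/u)/\sigma^2$ in $\gamma$:
\[
\partial_\gamma s_\mu=\tfrac{1}{\sigma^2}\Bigl(\tfrac vu+\gamma\tfrac{\partial_\gamma v}{u}-\gamma\tfrac{v}{u}\,\tfrac{\partial_\gamma u}{u}\Bigr).
\]
Here $\partial_\gamma v$ is the imaginary part of the same chain-rule expression, $\partial_\gamma v=(\gamma v+\tilde y u)/\sigma^2$, which equals $\partial_\mu u$ by Cauchy--Riemann. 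So $\partial_\gamma v/u=s_\mu$. It then remains to rewrite $(v/u)\,\partial_\gamma u/u$ in terms of $s_\gamma$ and $s_\mu$ so that the result matches $\tfrac1{\sigma^2}(\tilde y s_\gamma+\gamma s_\mu+v/u)-s_\mu s_\gamma$. The care needed is in tracking the constant term $-2/\sqrt\pi$ from $\mathsf{e}'$ and the sign conventions from $\partial_\mu w=-i/(\sigma\sqrt2)$. I would verify each identity numerically at a test point such as $\theta=(0,1,1)$ as a check.
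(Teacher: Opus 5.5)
Your proposal is correct and follows the paper's proof essentially step for step. Both use chain-rule scores from $\mathsf{e}'(w)=2w\mathsf{e}(w)-2/\sqrt{\pi}$, the heat identity for $H_{\mu\mu}$, harmonicity in $(\gamma,\mu)$ for $H_{\gamma\gamma}$, Cauchy--Riemann for $H_{\mu\gamma}$, and differentiation of the homogeneity identity $\sigma s_\sigma+\gamma s_\gamma-\tilde y s_\mu=-1$ for the $\sigma$-entries. The only deviation is that you obtain $s_\sigma$ from that identity rather than computing $\partial_\sigma\mathsf{u}$ directly, and the step you left open in $H_{\mu\gamma}$ closes immediately by substituting $\gamma\mathsf{v}/\mathsf{u}=\sigma^2 s_\mu-\tilde y$.
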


One significance of Lemma~\ref{lem:ScoreHessian} is computational. The perceived intractability
of the Voigt integral has led researchers to rely on computationally expensive methods or
approximation-based substitutes.\footnote{\citet{CannasPiras:2025} discuss the computational burden of direct numerical integration and pseudo-Voigt approximations in statistical Voigt estimation. Lemma \ref{lem:ScoreHessian} shows that these complications are not intrinsic to likelihood-based inference for the Voigt model: once the density is expressed through $\operatorname{erfcx}$, the score and Hessian follow algebraically and require neither numerical convolution nor finite-difference derivatives.}
However, once the density is represented through the scaled complementary error function,
the likelihood, score, and Hessian are all computed from the same quantities $u(y;\theta)$
and $v(y;\theta)$. Exact maximum likelihood estimation therefore reduces to repeated
evaluation of a standard special function and algebraic operations, without numerical
convolution, finite-difference derivatives, or pseudo-Voigt approximations.

The same chain-rule structure also gives the finite-$\Delta$ likelihood contribution for
a Brownian-Cauchy case of the discretely sampled L\'evy-process model studied by
\citet{AitSahaliaJacod:2008}. Let
$X_t=\sigma W_t+\theta Y_t$, 
where $W$ is Brownian motion and $Y$ is an independent standard symmetric Cauchy
process. Under the Cauchy scale convention used here,
$$
X_{t+\Delta}-X_t\sim \mathcal V (0,\sigma\sqrt{\Delta},|\theta|\Delta).
$$
Thus an increment has a Voigt density with Gaussian scale $\sigma\sqrt{\Delta}$ and
Cauchy scale $|\theta|\Delta$. Its likelihood contribution, score, Hessian, and
Fisher-information entries follow from Lemma~\ref{lem:ScoreHessian} by the chain rule.
For example, when $\theta>0$,
$$
\begin{aligned}
\partial_\sigma\log p_\Delta(x;\sigma,\theta)
&=\sqrt{\Delta}s_\sigma^V(x;0,\sigma\sqrt{\Delta},\theta\Delta),\\
\partial_\theta\log p_\Delta(x;\sigma,\theta)
&=\Delta s_\gamma^V(x;0,\sigma\sqrt{\Delta},\theta\Delta),
\end{aligned}
$$
where $s_\sigma^V$ and $s_\gamma^V$ denote Voigt scale scores. Hence the
Brownian-Cauchy convolution entering finite-$\Delta$ Fisher-information calculations can
be evaluated without numerical convolution or finite-difference derivatives, complementing
the high-frequency limits in \citet{AitSahaliaJacod:2008}.

A natural concern is whether standard likelihood-based inference remains valid
for a model whose distribution has Cauchy tails. Since the Cauchy component dominates the tails of $Y$, the distribution has no finite positive integer moments; in particular, even the mean is undefined. One might therefore expect conventional asymptotic normality to fail. The relevant issue for maximum likelihood estimation, however, is not whether $Y$ has finite moments, but whether the score, Hessian, and Fisher information are well defined. In the Gauss-Cauchy convolution model, they are, so exact maximum likelihood remains a regular estimation problem despite the absence of moments for $Y$. This is analogous to cases that arise in GARCH models, where asymptotic normality can hold even when the observed process has limited or no finite moments; see \citet{JensenRahbek:2004}. Theorem \ref{thm:MLE-consistent-asN} establishes consistency and asymptotic normality of the MLE for the Voigt model.

\begin{assumption}[Compactness]\label{assu:Compact}
For $\theta=(\mu,\sigma,\gamma)^\prime$, the parameter set $\theta\in\Theta$ is compact
$$
\Theta
=
\left\{
(\mu,\sigma,\gamma)\in\mathbb{R}^{3}:
|\mu|\leq\mu_{\max},
0<\sigma_{\min}\leq\sigma\leq\sigma_{\max},
0<\gamma_{\min}\leq\gamma\leq\gamma_{\max}
\right\}.
$$
\end{assumption}

Assumption \ref{assu:Compact} restricts attention to the regular Voigt case in which both scale parameters are strictly positive. This excludes the boundary cases $\sigma=0$ and $\gamma=0$, corresponding to the pure Cauchy and pure Gaussian limits, respectively. These limiting cases are mathematically meaningful, but they require separate treatment because the analytical expressions below involve ratios and derivatives evaluated under $\sigma>0$ and $\gamma>0$.

The lower bound on $\gamma$ is also useful for the uniform integrability condition used in the consistency proof. If the true distribution has Cauchy tails but the parameter space includes the pure Gaussian boundary $\gamma=0$, the Gaussian tail can make $|\log f_Y(Y;\theta)|$ grow quadratically in $Y$ for some admissible parameter values, while $Y$ itself has Cauchy tails. Bounding $\gamma$ away from zero avoids this difficulty and yields a common logarithmic envelope for the likelihood. The lower bound on $\sigma$ similarly keeps the analysis within the interior of the regular Voigt family, where the score and Hessian formulas are ordinary derivatives with respect to $(\mu,\sigma,\gamma)$.

\begin{theorem}[Consistency and Asymptotic Normality of the Voigt MLE]
\label{thm:MLE-consistent-asN}
Let $Y_1,\ldots,Y_n$ be iid with
$\mathcal{V}(\mu_0,\sigma_0,\gamma_0)$, and let
$\theta_0=(\mu_0,\sigma_0,\gamma_0)^\prime\in\Theta$, where $\Theta$ is the
compact parameter space defined in Assumption \ref{assu:Compact}. Define the
maximum likelihood estimator by $
\hat\theta_n = \arg\max_{\theta\in\Theta}\ell_n(\theta)$, $\ell_n(\theta)=
\frac{1}{n}\sum_{i=1}^n\log f_Y(Y_i;\theta)$. 
Then $\hat\theta_n\xrightarrow{p}\theta_0$. 

If, in addition, $\theta_0$ is an interior point of $\Theta$, then the
information matrix equality holds: $
\mathcal{I}_{\theta_0} = \mathbb{E}_{\theta_0}\left[s(Y;\theta_0)s(Y;\theta_0)^\prime\right]
= -\mathbb{E}_{\theta_0}\left[H(Y;\theta_0)\right]$, 
where  $s(y;\theta)=
\frac{\partial\log f_Y(y;\theta)}{\partial\theta}$, $H(y;\theta)
= \frac{\partial^2\log f_Y(y;\theta)}
{\partial\theta\partial\theta^\prime}$. Moreover, $\mathcal{I}_{\theta_0}$ is nonsingular and
$\sqrt{n}(\hat\theta_n-\theta_0)
\xrightarrow{d}
\mathcal{N}(0,\mathcal{I}_{\theta_0}^{-1})$. Finally, by symmetry of the Voigt density, $\mathcal{I}_{\theta_0}$ is
block diagonal with
$\mathcal{I}_{\mu_0\sigma_0}=\mathcal{I}_{\mu_0\gamma_0}=0$.
\end{theorem}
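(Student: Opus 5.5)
The plan is to verify the standard conditions for extremum estimators (Newey--McFadden Theorems 2.1 and 3.1) for the iid Voigt log-likelihood. The three ingredients are identification, a uniform envelope, and the regularity needed for the information matrix equality and nonsingularity.

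\emph{Identification.} The characteristic function $\varphi(s)=e^{is\mu-\sigma^2s^2/2-\gamma|s|}$ determines $(\mu,\sigma,\gamma)$ uniquely. Its argument gives $\mu$, and $-\log|\varphi(s)|=\sigma^2s^2/2+\gamma|s|$ gives $\sigma$ and $\gamma$. Hence $f_Y(\cdot;\theta)=f_Y(\cdot;\theta_0)$ almost everywhere implies $\theta=\theta_0$. By Jensen's inequality, $\theta_0$ is then the unique maximizer of $\mathbb{E}_{\theta_0}\log f_Y(Y;\theta)$.

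\emph{Envelope and continuity.} The density is a convolution, so we can sandwich it. Continuity in $\theta$ follows from Proposition~\ref{prop:Voigt} and the fact that $\mathsf{e}$ is entire (Lemma~\ref{lem:erfcx}).
\begin{itemize}
\item Upper bound: $f_Y(y;\theta)\le(2\pi\sigma_{\min}^2)^{-1/2}$, since the density is a mixture of Gaussians with variance $\sigma^2$.
\item Lower bound: the Gaussian puts mass at least $c>0$ on $[-\sigma,\sigma]$, and the Cauchy density on the shifted interval is at least $\frac{\gamma_{\min}}{\pi(\gamma_{\max}^2+(|y|+\mu_{\max}+\sigma_{\max})^2)}$.
\item Together these give $\sup_\Theta|\log f_Y(y;\theta)|\le C_1+2\log(1+|y|)$. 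This envelope is integrable under Cauchy tails, because $\mathbb{E}\log(1+|Y|)<\infty$.
\end{itemize}
A uniform law of large numbers on the compact $\Theta$ then gives uniform convergence of $\ell_n$, and consistency follows.

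\emph{Dominating the score and Hessian.} For asymptotic normality I need a neighbourhood envelope for the score and Hessian. By Tweedie (Proposition~\ref{prop:TweedieGaussian}), $s_\mu=\sigma^{-2}\mathbb{E}[Z|Y=y]$. Writing the conditional moments through Theorem~\ref{thm:CondDensity}, $|\mathbb{E}[Z^k|Y=y]|$ should be bounded uniformly in $y$ and in $\theta$ on compacts. The heuristic is that $f_{Z|Y}$ is a Gaussian reweighted by a Cauchy factor whose ratio to its normalizer is uniformly controlled.

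Bounding $|\mathbb{E}[Z^k|Y=y]|$ rigorously is the \textbf{main obstacle}. I would try the identity $\partial_\gamma\log f=\mathbb{E}[\partial_\gamma\log c(y-\mu-Z;\gamma)\mid Y=y]$, whose integrand is bounded by $2/\gamma$. The analogous identity for $\sigma$ gives $\partial_\sigma\log f=\mathbb{E}[(Z^2/\sigma^2-1)/\sigma\mid Y=y]$. To control this, split $z$ according to $|z|\le|y-\mu|/2$ versus its complement. On the first region the Cauchy factor is comparable to its value at $z=0$. On the second the Gaussian tail decays faster than any rational function. This yields a polynomial (indeed constant) bound in $y$, uniformly over a compact neighbourhood of $\theta_0$. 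The same approach handles second derivatives, which are conditional expectations of polynomial-in-$Z$ and bounded Cauchy-derivative terms minus products of first-order ones. Alternatively, one can use the asymptotics $\mathsf{e}(w)\sim 1/(\sqrt\pi w)$ on the Voigt line together with the explicit formulas of Lemma~\ref{lem:ScoreHessian}, checking that $s$ and $H$ stay bounded as $|y|\to\infty$.

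\emph{Information matrix equality and the limit law.} Given the domination, differentiation under $\int f_Y\,dy=1$ twice is justified, which gives the information matrix equality. A mean-value expansion of the first-order condition at an interior $\theta_0$, combined with the CLT for the score and the ULLN for the Hessian, yields $\sqrt n(\hat\theta_n-\theta_0)\to\mathcal N(0,\mathcal I^{-1})$.

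\emph{Nonsingularity.} Suppose $a'\mathcal I a=0$. Then $a's(y;\theta_0)=0$ for almost every $y$. Both sides depend on $y$ only through the density's characteristic function, so take Fourier transforms: the relation becomes $\frac{\partial}{\partial\theta}\varphi$ being linearly dependent in $a$. That derivative is $\varphi(s)\,(is,\,-\sigma s^2,\,-|s|)$, and the functions $is$, $s^2$, $|s|$ are linearly independent, which forces $a=0$.

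\emph{Block diagonality.} By Lemma~\ref{lem:erfcx}(ii), $\mathsf u$ is even and $\mathsf v$ is odd about $\mu$. From Lemma~\ref{lem:ScoreHessian}, $s_\mu$ is odd in $\tilde y$, while $s_\sigma$ and $s_\gamma$ are even, since $\tilde y v$ is even. The density is symmetric about $\mu$, so $\mathbb{E}[s_\mu s_\sigma]=\mathbb{E}[s_\mu s_\gamma]=0$.
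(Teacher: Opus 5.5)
Your proposal is correct. It shares the paper's overall architecture: characteristic-function identification, a logarithmic envelope from sandwiching the convolution, the ULLN plus the Kullback--Leibler inequality for consistency, a mean-value expansion with the CLT for the score, and parity for the zero blocks. Where you differ is in how you prove the two technical lemmas.

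For the score and Hessian envelopes, the paper substitutes large-$|\tilde y|$ expansions of $\operatorname{erfcx}$ on the Voigt line into the explicit formulas of Lemma~\ref{lem:ScoreHessian}, and asserts without proof that these expansions hold uniformly on $K$. You instead write $\partial_\theta f_Y/f_Y$ and $\partial^2_\theta f_Y/f_Y$ as conditional expectations, given $Y=y$, of derivatives of the product kernel $c_\gamma(\tilde y-z)\phi_\sigma(z)$ relative to the kernel itself. You then bound $\mathbb{E}[|Z|^k\mid Y=y]$ uniformly by splitting at $|z|=|\tilde y|/2$ and comparing with the lower bound $f_Y\geq C/(1+\tilde y^2)$.

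Your version is elementary and self-contained. It needs no special-function asymptotics and works for any Gaussian convolution whose other kernel has bounded relative derivatives. The paper's version, in exchange, yields explicit tail rates ($s_\mu\sim 2/\tilde y$, $s_\gamma\to 1/\gamma$, $s_\sigma\sim 6\sigma\tilde y^{-2}$), which it reuses in its nonsingularity proof. If you pursued the alternative asymptotic route you mention, note that the leading term $1/(\sqrt\pi w)$ cancels in every score numerator, so higher-order terms are required.

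For nonsingularity, the paper eliminates $a_1$ by parity and then $a_3$ and $a_2$ via those tail limits. Your Fourier argument,
\[
a'\partial_\theta\varphi(s)=\varphi(s)\,(ia_1s-a_2\sigma_0s^2-a_3|s|)\equiv 0,
\]
is shorter and avoids tail expansions altogether. It also shows that nonsingularity rests on the same linear independence of $s$, $s^2$, $|s|$ that gives identification.

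Two steps deserve an explicit line:
\begin{enumerate}
\item Bounded $s$ and $H$ give integrable envelopes for $\partial_\theta f_Y$ and $\partial^2_\theta f_Y$ only when combined with $\sup_{\theta\in K}f_Y(y;\theta)\leq C/(1+y^2)$. Your splitting argument supplies this bound, but you should state it.
\item In the Fourier step, pass from $a's=0$ a.e.\ to $a'\partial_\theta f_Y=0$ Lebesgue-a.e., using $f_Y>0$. Its Fourier transform equals $a'\partial_\theta\varphi$ by that same domination.
\end{enumerate}
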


Despite the lack of moments, $\mathbb{E}|Y|=\infty$,  
Theorem \ref{thm:MLE-consistent-asN} shows that likelihood-based inference is quite standard in the Gauss-Cauchy convolution model. Convergence in distribution is at the conventional root-$n$ rate and the limit distribution is Gaussian, such that standard errors and $t$-statistics have their usual interpretation. The underlying reason is that the likelihood function and its derivatives are well-behaved. Moreover, the information-matrix equality holds, and the simplest way to estimate $\mathcal{I}_{\theta_{0}}$ appears to be to evaluate $\mathbb{E}_{\theta_{0}}[s_{\theta_{0}}s_{\theta_{0}}^{\prime}]$ by numerical integration. 
The asymptotic independence of $\hat{\mu}$ and $(\hat{\sigma},\hat{\gamma})$ is a consequence of $s_\mu$ being an odd function of the $\tilde{y}=y-\mu$, whereas  $s_\gamma$ and $s_\sigma$ are both even functions. Thus, ($\sigma,\gamma$) will have the same  asymptotic variances in the simpler model where $\mu$ is known.

The Fisher-information matrix also implies a scale-invariant comparison of the asymptotic precision of the Gaussian and Cauchy scale estimates. Because the asymptotic covariance matrix for $(\sigma,\gamma)$ is homogeneous of degree two, the ratio of their asymptotic standard deviations depends only on $\lambda=\gamma/\sigma$. Supplement~\ref{app:relative_precision} reports this ratio and shows that, over the range considered, the Cauchy scale is asymptotically more precisely estimated than the Gaussian scale. This reflects the fact that $\gamma$ is strongly identified by extreme observations, whereas $\sigma$ is identified primarily from the central part of the distribution, where the two scale parameters can partially offset one another.

\subsection{Finite-Sample Behavior of the Voigt MLE}

We examine the finite-sample behavior of the Voigt maximum likelihood estimator through a Monte Carlo study. The purpose is to assess how accurately the asymptotic distribution in Theorem \ref{thm:MLE-consistent-asN} approximates the sampling distribution of the MLE. The simulations use $N=100,000$ replications, sample sizes from $n=100$ to $n=10,000$, and three values of the Cauchy-to-Gaussian scale ratio, $\lambda=\gamma_0/\sigma_0\in\{0.01,0.1,1\}$.

The detailed results are reported in Supplement~\ref{app:voigt_mle_mc}. Across the three designs, the MLE is well centered and the empirical standard deviations are close to the Fisher-information standard deviations. The main finite-sample distortions occur for the scale parameters in small samples, especially when $\gamma_0$ is close to the boundary or when $\sigma_0$ and $\gamma_0$ are difficult to separate. These distortions diminish rapidly as $n$ increases.

\section{The Gauss-Cauchy Convolution (GCC) Filter}\label{sec:GCCfilter}

We introduce the Gauss-Cauchy Convolution (GCC) filter, a Masreliez-type approximate filter for extracting a latent state observed through Voigt measurement error. Built on the Voigt prediction-error density, the filter updates the state using the conditional-mean correction implied by the Gaussian-Cauchy convolution. The resulting update is approximately linear for moderate prediction errors but redescends toward zero for large errors. Thus, its robustness is operational: extreme observations have small update influence and are attributed primarily to the Cauchy measurement component rather than to persistent latent state movements.

A major advantage of the GCC filter is analytical tractability. Conditional on a Gaussian prediction density, the prediction-error density is Voigt and available in closed form. This contrasts with Student-$t$ measurement-error models, which are not closed under convolution with Gaussian state uncertainty and therefore require an additional approximation to obtain a tractable prediction-error density. The GCC filter nests the pure Cauchy filter of \citet{McCabeGualdoni:2024}, extending it to accommodate Gaussian background noise. Finally, filter parameters are estimated by quasi-maximum likelihood using analytical derivatives, rather than by selecting tuning constants.

\subsection{Filtering Setup}\label{sec:FilteringSetup}
Let $x_t$ denote a latent state variable that follows a Gaussian AR(1) process, and $y_t$ denote the observed variable, measured with additive error:
\begin{alignat*}{2}
x_t &= (1-\phi)\mu+\phi x_{t-1}+\varepsilon_t,
&\qquad \varepsilon_t &\sim \mathcal{N}(0,\tau^2), \\
y_t &= x_t+\eta_t,
&\qquad \eta_t &\sim \mathcal{V}(0,\sigma,\gamma).
\end{alignat*}
The state innovations, Gaussian measurement-error components, and Cauchy
measurement-error components are mutually independent and independent over
time. The initial state is independent of all subsequent innovations.
Thus the latent state is Gaussian, while the measurement
error follows a Voigt distribution. Equivalently, the measurement error contains
a Gaussian component with scale $\sigma$ and a Cauchy component with scale
$\gamma$.

We use standard state-space notation \citep[see, e.g.,][]{Harvey:1989, DurbinKoopman:2012},
with lowercase letters for the state and observation variables to avoid
conflict with the convolution variables used above. Let
$\mathcal{F}_t=\sigma(y_t,y_{t-1},\ldots)$ denote the natural filtration, and
$$
x_{t|t-1}
=
\mathbb{E}[x_t|\mathcal{F}_{t-1}],
\quad\text{and}\quad 
h_{t|t-1}
=
\operatorname{var}(x_t|\mathcal{F}_{t-1}).
$$
Here $h_{t|t-1}$ denotes the one-step-ahead prediction variance of the latent
state. The linear Gaussian state transition gives the usual prediction
equations:
$$
x_{t|t-1}  = (1-\phi)\mu+\phi x_{t-1|t-1}\quad\text{and}\quad 
h_{t|t-1} = \phi^2 h_{t-1|t-1}+\tau^2.
$$

The nonlinear part of the GCC filter enters through the update step. Following
the approximation introduced by \citet{Masreliez:1975}, we approximate the
one-step-ahead prediction density of the state by a Gaussian distribution.

\begin{assumption}[Masreliez approximation]\label{ass:Masreliez}
Conditionally on $\mathcal{F}_{t-1}$, 
$$\xi_t=x_t-x_{t|t-1}\sim\mathcal{N}(0,h_{t|t-1}).$$
\end{assumption}
This approximation was originally introduced by \citet{Masreliez:1975} for
robust radar tracking in the presence of so-called glint noise, a form of
sporadic heavy-tailed measurement contamination. In the present setting, the
approximation is useful for a more specific reason: it turns the prediction
error, $e_t$, into a Gaussian convolution. 
$$
e_t=y_t-x_{t|t-1}=(y_t-x_t)+(x_t-x_{t|t-1})=\eta_t+\xi_t,
$$
Under Assumption \ref{ass:Masreliez}, conditionally on $\mathcal F_{t-1}$,
$\xi_t$ is Gaussian and independent of $\eta_t$. Proposition
\ref{prop:TweedieGaussian} can therefore be applied conditionally to obtain the
state correction from the score of the one-step-ahead prediction-error density: $\mathbb{E}[\xi_t|e_t,\mathcal F_{t-1}]
=-h_{t|t-1}\frac{\partial}{\partial e_t}\log f_{t-1}(e_t)$, where $f_{t-1}$ denotes the conditional density of $e_t$ given
$\mathcal F_{t-1}$.

The second ingredient is specific to the GCC model. Since
$\eta_t\sim\mathcal V(0,\sigma,\gamma)$, we may write
$\eta_t=Z_t+C_t$, where $Z_t\sim\mathcal N(0,\sigma^2)$ and
$C_t\sim\operatorname{Cauchy}(0,\gamma)$. Hence
$ e_t=(\xi_t+Z_t)+C_t$, $\xi_t+Z_t|\mathcal F_{t-1}\sim\mathcal N(0,h_{t|t-1}+\sigma^2)$,  so the prediction-error density remains Voigt, specifically
$e_t|\mathcal F_{t-1}\sim\mathcal V(0,\delta_t,\gamma)$ with $\delta_t^2=h_{t|t-1}+\sigma^2$.  Thus Assumption \ref{ass:Masreliez} and the Gaussian-Cauchy convolution
structure work together: Tweedie's formula gives the moment update, and the
Voigt closure property makes the required prediction-error score available in
closed form.

\begin{theorem}[GCC filter]\label{thm:GCCfilter}
Under Assumption \ref{ass:Masreliez}, conditionally on $\mathcal{F}_{t-1}$, the prediction error
$e_t=y_t-x_{t|t-1}=x_t-x_{t|t-1}+\eta_t$ satisfies
$$
e_t|\mathcal{F}_{t-1}\sim\mathcal{V}(0,\delta_t,\gamma),
\qquad
\delta_t^2=h_{t|t-1}+\sigma^2.
$$
The conditional mean and variance,
$x_{t|t}=\mathbb{E}[x_t | \mathcal{F}_t]$ and
$h_{t|t}=\operatorname{var}(x_t | \mathcal{F}_t)$, are updated by
$$
x_{t|t} = x_{t|t-1}+h_{t|t-1}\psi_t
\quad\text{and}\quad 
h_{t|t} = h_{t|t-1}-h_{t|t-1}^2\psi_t^\prime,
$$
where
$$
\psi_t = \frac{1}{\delta_t^2}
\left( e_t+\gamma\frac{\mathsf{v}_t}{\mathsf{u}_t}
\right)
\quad\text{and}\quad
\psi_t^\prime
= \frac{1}{\delta_t^4} \left[\delta_t^2+ \gamma^2
\left(1+\frac{\mathsf{v}_t^2}{\mathsf{u}_t^2}
\right)
-\sqrt{\frac{2}{\pi}}\frac{\gamma\delta_t}{\mathsf{u}_t}\right],
$$
with $\psi_t^\prime=\partial\psi_t/\partial e_t$,
$\mathsf{u}_t=\mathsf{u}(e_t;0,\delta_t,\gamma)$, and
$\mathsf{v}_t=\mathsf{v}(e_t;0,\delta_t,\gamma)$.
\end{theorem}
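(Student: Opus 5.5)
The argument has three steps: identify the law of $e_t$ given $\mathcal F_{t-1}$, apply Tweedie's formula conditionally, and evaluate the resulting score expressions with the closed forms already derived.

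\textbf{Step 1: the prediction-error law.} Write $\eta_t=Z_t+C_t$ with $Z_t\sim\mathcal N(0,\sigma^2)$ and $C_t\sim\operatorname{Cauchy}(0,\gamma)$. The state $x_t$ depends only on $x_0,\varepsilon_1,\ldots,\varepsilon_t$, and $\eta_t$ is independent of this and of past measurement errors. Hence, conditionally on $\mathcal F_{t-1}$, the pair $(Z_t,C_t)$ is independent of $\xi_t=x_t-x_{t|t-1}$. Under Assumption~\ref{ass:Masreliez}, $\xi_t\mid\mathcal F_{t-1}\sim\mathcal N(0,h_{t|t-1})$. Sums of independent Gaussians are Gaussian, so $\xi_t+Z_t\mid\mathcal F_{t-1}\sim\mathcal N(0,\delta_t^2)$ with $\delta_t^2=h_{t|t-1}+\sigma^2$. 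Therefore $e_t=(\xi_t+Z_t)+C_t$ is by definition $\mathcal V(0,\delta_t,\gamma)$ conditionally on $\mathcal F_{t-1}$. Its density is then given by Proposition~\ref{prop:Voigt}.

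\textbf{Step 2: the mean update.} Conditioning on $\mathcal F_t$ is the same as conditioning on $(\mathcal F_{t-1},e_t)$, because $x_{t|t-1}$ is $\mathcal F_{t-1}$-measurable. I apply Proposition~\ref{prop:TweedieGaussian} conditionally on $\mathcal F_{t-1}$, taking the Gaussian component to be $\xi_t$ (variance $h_{t|t-1}$) and the independent proper component to be $\eta_t$. This gives
$$\mathbb E[\xi_t\mid\mathcal F_t]=-h_{t|t-1}\,\partial_e\log f_{t-1}(e_t),$$
where $f_{t-1}$ is the $\mathcal V(0,\delta_t,\gamma)$ density. By Lemma~\ref{lem:ScoreHessian}, $-\partial_y\log f_Y=s_\mu$ when $\mu=0$, since $f_Y$ depends on $y-\mu$. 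With scale $\delta_t$ in place of $\sigma$, this gives $-\partial_e\log f_{t-1}(e_t)=\delta_t^{-2}(e_t+\gamma \mathsf v_t/\mathsf u_t)=\psi_t$. Adding $x_{t|t-1}$ yields $x_{t|t}=x_{t|t-1}+h_{t|t-1}\psi_t$.

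\textbf{Step 3: the variance update.} The second Tweedie identity gives
$$\operatorname{var}(\xi_t\mid\mathcal F_t)=h_{t|t-1}+h_{t|t-1}^2\,\partial_e^2\log f_{t-1}(e_t)=h_{t|t-1}-h_{t|t-1}^2\psi_t'.$$
Since $x_t-\xi_t$ is $\mathcal F_{t-1}$-measurable, this conditional variance equals $h_{t|t}$. It remains to compute $\psi_t'=-\partial_e^2\log f_{t-1}$, which is $-H_{\mu\mu}$ with $\sigma$ replaced by $\delta_t$. I would obtain it by translating Corollary~\ref{cor:CondExpect}. That corollary is Tweedie applied to the Voigt component itself with scale $\delta_t$, and gives
$$\delta_t^2\bigl(1-\delta_t^2\psi_t'\bigr)=\sqrt{2/\pi}\,\frac{\delta_t\gamma}{\mathsf u_t}-\gamma^2\Bigl(1+\frac{\mathsf v_t^2}{\mathsf u_t^2}\Bigr).$$
Solving for $\psi_t'$ gives exactly the stated expression.

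As a cross-check, one can differentiate $\mathsf v/\mathsf u$ directly with Lemma~\ref{lem:erfcx}(i). Here $\partial_e w=i/(\delta_t\sqrt2)$, and $\mathsf e'=2w\mathsf e-2/\sqrt\pi$.

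\textbf{Main obstacle.} The only delicate point is Step 2's conditioning argument. Proposition~\ref{prop:TweedieGaussian} is stated unconditionally, so I must check that, given $\mathcal F_{t-1}$, the Gaussian part $\xi_t$ and the noise $\eta_t$ are independent, with $\eta_t$ having its unconditional law. This follows from the stated independence of the measurement errors from the state and over time, together with $\mathcal F_{t-1}$ being generated by past observations only. The Masreliez assumption supplies the Gaussianity of $\xi_t$. Once this is in place, everything else is the algebra already carried out in Lemma~\ref{lem:ScoreHessian} and Corollary~\ref{cor:CondExpect}, with $\sigma$ replaced by $\delta_t$.
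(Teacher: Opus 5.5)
Your proposal is correct and follows essentially the same route as the paper. The paper also decomposes $e_t=(\xi_t+Z_t)+C_t$ to get the Voigt law, applies Proposition~\ref{prop:TweedieGaussian} conditionally on $\mathcal F_{t-1}$ with $\xi_t$ as the Gaussian component, and reads off $\psi_t$ from the location score in Lemma~\ref{lem:ScoreHessian} with $\sigma$ replaced by $\delta_t$; the only cosmetic difference is that you obtain $\psi_t'$ by inverting the variance formula of Corollary~\ref{cor:CondExpect}, whereas the paper differentiates $\psi_t$ directly, and both give the same expression.
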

The update has the same structure as the Kalman filter, except that the
linear Gaussian score $e_t/\delta_t^2$ is replaced by the Voigt location score
$\psi_t$. For moderate prediction errors, the update is approximately linear
and resembles the Kalman update. For large prediction errors, however, the
Voigt score redescends toward zero, so the filter discounts observations that
are more plausibly attributed to the Cauchy component than to latent
Gaussian state.

Figure \ref{fig:score_functions} illustrates this mechanism. The left panel
shows how the Voigt score changes with the Cauchy-to-Gaussian ratio
$\lambda=\gamma/\sigma$. Larger values of $\lambda$ generate stronger
redescending behavior. The right panel compares the GCC score with Gaussian,
Huber, and Student-$t$ scores. The Gaussian score is linear, the Huber score is
bounded, and the Student-$t$ score is redescending. The GCC score is also
redescending, but its shape is tied directly to the Gaussian-Cauchy convolution
rather than imposed as a robustification device.

\begin{figure}[t]
\spacingset{1.15}
\begin{centering}
\includegraphics[width=1\textwidth]{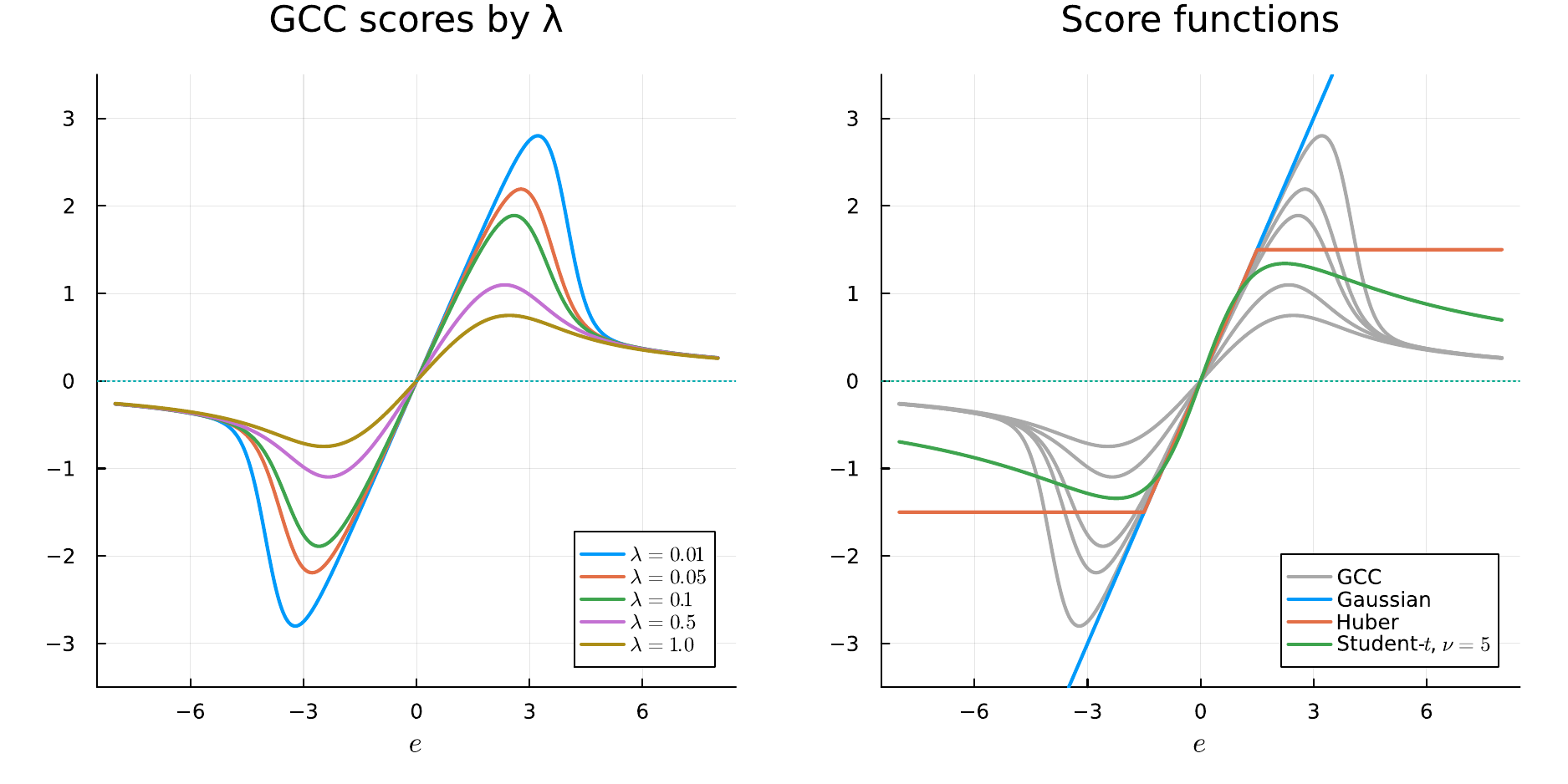}
\par \end{centering}
\caption{\footnotesize Score functions. The left panel shows the GCC location score, $\psi(e)$,
for different values of $\lambda=\gamma/\sigma$. The right panel compares the
GCC score with Gaussian, Huber, and Student-$t$ scores. The GCC score is
approximately linear near the origin and redescends in the tails, implying that
extreme prediction errors receive little weight in the state update.}
\label{fig:score_functions}
\end{figure}

Under the Gaussian prediction approximation in Assumption \ref{ass:Masreliez},
the recursions in Theorem \ref{thm:GCCfilter} coincide with the approximate
conditional mean filter of \citet{Masreliez:1975}. The update is therefore
optimal within the class of filters that represent the one-step-ahead state
density by a Gaussian approximation and update its first two conditional
moments using the prediction-error density. The next lemma records that the variance recursion remains positive under the Gaussian prediction approximation.
\begin{lemma}[Positivity of the GCC variance recursion]\label{lem:GCCvariance}
Suppose Assumption \ref{ass:Masreliez} holds and $h_{t|t-1}>0$. Then the GCC
variance update satisfies $h_{t|t}=h_{t|t-1}-h_{t|t-1}^2\psi_t^\prime>0$.
Moreover,
$h_{t+1|t} = \phi^2h_{t|t}+\tau^2 \geq \tau^2$, 
with strict inequality whenever $\phi\neq0$. Consequently,
$\delta_{t+1}^2=h_{t+1|t}+\sigma^2>0$, so the Voigt prediction-error density is always well defined.
\end{lemma}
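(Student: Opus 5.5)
The plan is to identify $h_{t|t}$ with a genuine conditional variance and then use the variance form of Tweedie's formula. Under Assumption \ref{ass:Masreliez}, conditionally on $\mathcal F_{t-1}$ we have $e_t=\xi_t+\eta_t$, where $\xi_t\sim\mathcal N(0,h_{t|t-1})$ with $h_{t|t-1}>0$, and $\eta_t$ is an independent proper random variable. This is exactly the setting of Proposition \ref{prop:TweedieGaussian}, with $Z=\xi_t$, $X=\eta_t$, $\mu=0$ and $\sigma^2=h_{t|t-1}$. The proposition gives
\[
\mathbb V(\xi_t\mid e_t,\mathcal F_{t-1})=h_{t|t-1}+h_{t|t-1}^2\,\partial_e^2\log f_{t-1}(e_t).
\]
By Theorem \ref{thm:GCCfilter}, $\psi_t=-\partial_e\log f_{t-1}(e_t)$, so $\psi_t'=-\partial_e^2\log f_{t-1}(e_t)$. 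The right-hand side is therefore exactly $h_{t|t-1}-h_{t|t-1}^2\psi_t'$. It follows that $h_{t|t}$ equals the conditional variance of $\xi_t$ given $e_t$, so $h_{t|t}\ge 0$ automatically.

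The main obstacle is upgrading this to strict positivity, since a variance can vanish only when the conditional distribution is degenerate. I would show nondegeneracy directly by computing the conditional density of $\xi_t$ given $e_t=e$, as in Theorem \ref{thm:CondDensity}. Its kernel is
\[
\phi(z;0,h_{t|t-1})\,f_\eta(e-z),
\]
and the Voigt density $f_\eta$ is strictly positive everywhere, since $\mathsf u>0$ by Lemma \ref{lem:erfcx}(ii). The conditional law is therefore absolutely continuous with positive density on all of $\mathbb R$, hence not a point mass. Its variance is finite because the Gaussian factor dominates, and it is strictly positive. This gives $h_{t|t}>0$ for every real $e_t$.

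As a cross-check I would also verify the inequality $\psi_t'<1/h_{t|t-1}$ algebraically from the explicit formula, though this is not needed. The remaining claims are immediate. From $h_{t+1|t}=\phi^2h_{t|t}+\tau^2$ and $h_{t|t}>0$ we get $h_{t+1|t}\ge\tau^2$, with strict inequality when $\phi\neq0$. Then $\delta_{t+1}^2=h_{t+1|t}+\sigma^2\geq\tau^2+\sigma^2>0$, given $\sigma>0$ (or $\tau>0$), so $\mathcal V(0,\delta_{t+1},\gamma)$ is a well-defined Voigt law. Induction over $t$ then keeps $h_{t|t-1}>0$ at every step.
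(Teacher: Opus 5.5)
Your proposal is correct and follows essentially the same route as the paper. Both identify $h_{t|t}$ with the Tweedie conditional variance $\operatorname{var}(\xi_t\mid e_t,\mathcal F_{t-1})$ via Proposition~\ref{prop:TweedieGaussian}. Both get strict positivity because the conditional density is a nondegenerate Gaussian times a strictly positive Voigt density, so it is not a point mass. Both then read off $h_{t+1|t}\geq\tau^2>0$ and $\delta_{t+1}^2>0$ from the prediction recursion, and your explicit induction over $t$ only spells out what the paper leaves implicit.
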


\subsection{GCC Smoothing Algorithm}

The GCC filter produces filtered estimates $x_{t|t}$ and $h_{t|t}$. For
applications in which a full-sample estimate of the latent state is desired, we
construct smoothed estimates using a fixed-interval recursion (the Rauch-Tung-Striebel smoother) based on the Gaussian state transition and the moment approximation that underlies the GCC
filter. Starting from $x_{T|T}$ and $h_{T|T}$, the smoothed estimates are obtained
backward by $x_{t|T}=x_{t|t}+c_t(
x_{t+1|T}-x_{t+1|t})$, $c_t={\phi h_{t|t}}/{h_{t+1|t}}$ for $t=T-1,\ldots,1$. The smoothed variance is $h_{t|T}=h_{t|t}+c_t^2(h_{t+1|T}-h_{t+1|t})$. Here $x_{t|t}$, $h_{t|t}$, $x_{t+1|t}$, and $h_{t+1|t}$ are produced by the forward GCC filter.

\subsection{Filtering of the Cauchy Components}

As established in Section \ref{sec:FilteringSetup}, the total Gaussian component of the prediction error, $e_t = \xi_t + Z_t + C_t$, is $\xi_t + Z_t |\mathcal{F}_{t-1} \sim \mathcal{N}(0, \delta_t^2)$. Applying Corollary \ref{cor:CondExpect}, its conditional expectation given $\mathcal{F}_{t}$ is given by $\mathbb{E}[\xi_t + Z_t  |  \mathcal{F}_t] = \delta_t^2 \psi(e_t; 0, \delta_t, \gamma) = e_t + \gamma {\mathsf{v}(e_t; 0, \delta_t, \gamma)}/{\mathsf{u}(e_t; 0, \delta_t, \gamma)}$. The filtered Cauchy component is then isolated residually:
\begin{equation}
\mathbb{E}[C_t  |  \mathcal{F}_t] = e_t - \mathbb{E}[\xi_t + Z_t  |  \mathcal{F}_t] = -\gamma \frac{\mathsf{v}(e_t; 0, \delta_t, \gamma)}{\mathsf{u}(e_t; 0, \delta_t, \gamma)}.
\label{eq:filtered_cauchy}
\end{equation}
The total Gaussian component is subsequently allocated between the state update and the measurement noise in proportion to their prior variances:
\begin{equation*}
\mathbb{E}[\xi_t  |  \mathcal{F}_t] = h_{t|t-1}\psi(e_t; 0, \delta_t, \gamma), \qquad \mathbb{E}[Z_t  |  \mathcal{F}_t] = \sigma^2\psi(e_t; 0, \delta_t, \gamma).
\end{equation*}

\subsection{QMLE of the Gauss-Cauchy Convolution Filter}

The parameters are estimated by maximizing the quasi log-likelihood implied by
the Masreliez prediction approximation. Let
$\vartheta=(\mu,\sigma,\gamma,\phi,\tau)^\prime$ and define, recursively,
$e_t(\vartheta)=y_t-x_{t|t-1}(\vartheta)$ and
$\delta_t^2(\vartheta)=h_{t|t-1}(\vartheta)+\sigma^2$. The one-step
quasi log-likelihood contribution is
$$
\ell_t(\vartheta) = -\tfrac{1}{2}\log(2\pi)
-\log\delta_t(\vartheta)
+ \log\mathsf{u}(e_t(\vartheta);0,\delta_t(\vartheta),\gamma),
$$
and the full-sample objective is
$\ell_T(\vartheta)=\sum_{t=1}^T\ell_t(\vartheta)$.

We estimate $\vartheta$ over a compact parameter space $\Theta$ satisfying
$|\mu|\leq\mu_{\max}$,
$0<\sigma_{\min}\leq\sigma\leq\sigma_{\max}$,
$0<\gamma_{\min}\leq\gamma\leq\gamma_{\max}$,
$|\phi|\leq\phi_{\max}<1$, and
$0<\tau_{\min}\leq\tau\leq\tau_{\max}$. The QMLE is
$\hat\vartheta_T=\operatorname{argmax}_{\vartheta\in\Theta}\ell_T(\vartheta)$.
The recursion is initialized at the stationary moments,
$x_{1|0}=\mu$ and $h_{1|0}=\tau^2/(1-\phi^2)$.

The objective is a quasi-likelihood because the prediction-error density is
evaluated under the Gaussian prediction approximation in Assumption
\ref{ass:Masreliez}. Conditional on the recursively generated prediction
moments, each likelihood contribution is the exact Voigt density of the
prediction error under the approximating Gaussian prediction law. Thus the
criterion is exact for the approximating recursion, but not the exact likelihood
of the non-Gaussian state-space model. The analytical derivatives in Lemma
\ref{lem:ScoreHessian} and the algebraic closure property in Corollary
\ref{cor:VoigtAlgebraicClosure} make it possible to compute the objective and
its derivatives without numerical convolution or finite differences.

Let
$$
Q(\vartheta)=\lim_{T\to\infty}T^{-1}
\sum_{t=1}^T\mathbb{E}_{\vartheta_0}[\ell_t(\vartheta)]
$$
when this limit exists, and let
$\vartheta_\star=\operatorname{argmax}_{\vartheta\in\Theta}Q(\vartheta)$
denote the pseudo-true parameter, when this maximizer is unique. We use the
standard recursive M-estimation expansion to guide inference. Specifically, if
the effect of initialization is asymptotically negligible, the score satisfies a
central limit theorem, and the Hessian converges to a nonsingular limit at
$\vartheta_\star$, then
$\sqrt{T}(\hat{\vartheta}_T-\vartheta_\star)
\xrightarrow{d}
\mathcal{N}\left(0,\mathcal{J}_{\vartheta_\star}^{-1}
\mathcal{I}_{\vartheta_\star}
\mathcal{J}_{\vartheta_\star}^{-1}\right)$, where
$$
\mathcal{I}_{\vartheta_\star}
=
\lim_{T\to\infty}
\operatorname{var}\left(
T^{-1/2}\sum_{t=1}^T
\frac{\partial\ell_t(\vartheta_\star)}{\partial\vartheta}
\right),
\qquad
\mathcal{J}_{\vartheta_\star}
=
-\operatorname{plim}_{T\to\infty}
T^{-1}\sum_{t=1}^T
\frac{\partial^2\ell_t(\vartheta_\star)}
{\partial\vartheta\partial\vartheta^\prime}.
$$

If the recursively generated Gaussian prediction law coincides with the true conditional prediction law, then the quasi-likelihood is correctly specified,
$\vartheta_\star=\vartheta_0$, and $\mathcal{I}_{\vartheta_0}=\mathcal{J}_{\vartheta_0}$. For the exact Gauss-Cauchy state-space model, however, the filtering distribution is not Gaussian after conditioning on past observations, so these recursive regularity conditions are not proved here. The Monte Carlo evidence below therefore evaluates whether the inverse-information approximation provides an accurate finite-sample description of the QMLE in the designs considered.

\subsection{QMLE Monte Carlo}
We assess the finite-sample behavior of the GCC QMLE in two designs: a small-Cauchy design with $\gamma_0/\sigma_0=0.1$ and a balanced design with $\gamma_0/\sigma_0=1$. The detailed Monte Carlo results are reported in Supplement~\ref{app:gcc_qmle_mc}. Across both designs, the QMLE is well centered and inverse-information standard errors closely approximate sampling variability. The main finite-sample distortions are a small downward bias in the persistence parameter and, in the balanced design, a short-sample downward bias in $\hat\sigma$. These discrepancies shrink with sample size, indicating that the pseudo-true parameter is close to the true parameter and that the information loss induced by the Masreliez approximation is small in the designs considered.

\section{Assessing Masreliez Approximation in GCC Filter}
\label{sec:Simulations}

The GCC filter relies on Assumption \ref{ass:Masreliez}, where the Gaussian approximation to the predictive state density enables a Tweedie moment correction, and the measurement-error structure keeps the prediction-error density Voigt. Since the latter step is exact given the approximation, this section assesses the numerical cost of the former by holding parameters fixed at their true values and evaluating the consequences of replacing the exact latent state predictive distribution with the GCC Gaussian approximation.
\begin{table}[htbp]
\small
\caption{Kullback-Leibler diagnostics for the Gaussian prediction approximation}
\label{tab:gcc_density_diagnostics_main}
\spacingset{1.15}
\begin{centering}
\begin{small}
\begin{tabularx}{\textwidth}{@{}YYYYYY @{}}
\toprule
    \midrule
$\lambda$ & $\overline{\mathrm{KL}}^{x,\operatorname{shape}}$ & $\overline{\mathrm{KL}}^{x,\operatorname{op}}$ & $\overline{\mathrm{KL}}^{y,\operatorname{shape}}$ & $\overline{\mathrm{KL}}^{y,\operatorname{op}}$ & $\max\mathrm{KL}^{x,\operatorname{op}}$ \\
\midrule
$0.00$ & $<10^{-15}$          & $<10^{-15}$          & $<10^{-15}$          & $<10^{-15}$          & $<10^{-15}$          \\
$0.01$ & $7.05\times 10^{-5}$ & $9.10\times 10^{-5}$ & $1.34\times 10^{-5}$ & $2.17\times 10^{-5}$ & $2.78\times 10^{-2}$ \\
$0.05$ & $1.78\times 10^{-4}$ & $2.33\times 10^{-4}$ & $2.82\times 10^{-5}$ & $4.76\times 10^{-5}$ & $2.13\times 10^{-2}$ \\
$0.10$ & $2.94\times 10^{-4}$ & $4.11\times 10^{-4}$ & $3.93\times 10^{-5}$ & $8.38\times 10^{-5}$ & $3.50\times 10^{-2}$ \\
$0.50$ & $7.30\times 10^{-4}$ & $1.03\times 10^{-3}$ & $4.57\times 10^{-5}$ & $1.31\times 10^{-4}$ & $6.47\times 10^{-2}$ \\
$1.00$ & $8.44\times 10^{-4}$ & $1.24\times 10^{-3}$ & $2.88\times 10^{-5}$ & $1.06\times 10^{-4}$ & $4.11\times 10^{-2}$ \\
\midrule
\bottomrule
\end{tabularx}
\end{small}
\par\end{centering}
{\footnotesize \textit{Note:} Density-level Kullback-Leibler diagnostics for the Masreliez-GCC approximation, averaged over the nine designs with $\phi\in\{0.90,0.97,0.99\}$ and $\tau/\sigma\in\{0.25,0.50,1.00\}$ for each $\lambda=\gamma/\sigma$. Superscripts $x$ and $y$ denote latent-state and observation predictive densities. ``Shape'' uses the moment-matched Gaussian approximation; ``op'' uses the operational GCC approximation. The final column is the maximum $\mathrm{KL}^{x,\operatorname{op}}$ over all dates and designs in the row.}
\end{table}

We evaluate this against an exact benchmark filter from numerical density propagation by \citet{Kitagawa:1987}, which updates the predictive density via Bayes' rule without imposing Gaussianity (details in Supplement~\ref{app:simulation-details}). This exact benchmark is compared with two Gaussian approximations: a moment-matched version using exact predictive moments (isolating the shape error from Gaussianizing the exact density) and the operational GCC approximation using recursively generated moments (measuring the full implementation error). The approximation is evaluated by computing Kullback-Leibler (KL) discrepancies for the latent-state and observation predictive densities, and by comparing the exact versus GCC one-step corrections. By scale equivariance, the simulation design uses dimensionless ratios $\lambda=\gamma/\sigma$ (including the empirically relevant $\lambda=0.10$) and $\tau/\sigma$ alongside the persistence parameter $\phi$. 

Density-level diagnostics (Table \ref{tab:gcc_density_diagnostics_main}) reveal high accuracy: for $\lambda \leq 0.10$, average KL discrepancies remain on the order of $10^{-4}$ or smaller for the latent state, and an order of magnitude smaller for the observation density due to Voigt convolution smoothing. Correspondingly, Table \ref{tab:gcc_correction_diagnostics_main} shows these differences have minimal impact on the filter update. For $\lambda \leq 0.10$, mean absolute operational distortions range from $10^{-3}$ to $5 \times 10^{-3}$, with root mean square errors below $1.5 \times 10^{-2}$. Operational diagnostics only modestly exceed shape diagnostics, indicating the GCC recursion does not materially amplify approximation errors. While larger discrepancies arise in stress designs ($\lambda=0.50$ or 1.00), distortions remain moderate, suggesting violations of Assumption~\ref{ass:Masreliez} have limited practical consequences for the GCC filter, particularly when the Cauchy component is small relative to the Gaussian noise.

\begin{table}[htbp]
\caption{Correction distortion by Cauchy-to-Gaussian measurement-noise ratio}
\label{tab:gcc_correction_diagnostics_main}
\spacingset{1.15}
\begin{centering}
\begin{small}
\begin{tabularx}{\textwidth}{@{}YYYYY@{}}
\toprule
    \midrule
$\lambda$
& $\operatorname{MAE}^{\operatorname{shape}}$
& $\operatorname{MAE}^{\operatorname{op}}$
& $\operatorname{RMSE}^{\operatorname{op}}$
& $q_{.95}|D^{\operatorname{op}}|$ \\
\midrule
$0.00$ & $<10^{-15}$          & $<10^{-15}$          & $<10^{-15}$          & $<10^{-15}$          \\
$0.01$ & $6.53\times 10^{-4}$ & $1.10\times 10^{-3}$ & $4.72\times 10^{-3}$ & $3.83\times 10^{-3}$ \\
$0.05$ & $1.84\times 10^{-3}$ & $2.80\times 10^{-3}$ & $7.71\times 10^{-3}$ & $1.06\times 10^{-2}$ \\
$0.10$ & $3.41\times 10^{-3}$ & $4.94\times 10^{-3}$ & $1.28\times 10^{-2}$ & $1.96\times 10^{-2}$ \\
$0.50$ & $8.24\times 10^{-3}$ & $1.14\times 10^{-2}$ & $2.10\times 10^{-2}$ & $3.81\times 10^{-2}$ \\
$1.00$ & $9.18\times 10^{-3}$ & $1.32\times 10^{-2}$ & $2.17\times 10^{-2}$ & $4.34\times 10^{-2}$ \\
\midrule
\bottomrule
\end{tabularx}
\end{small}
\par\end{centering}
{\footnotesize \textit{Note:} Correction-level diagnostics for the Masreliez-GCC approximation, averaged over the nine designs with $\phi\in\{0.90,0.97,0.99\}$ and $\tau/\sigma\in\{0.25,0.50,1.00\}$ for each $\lambda=\gamma/\sigma$. The distortion is $D_t(y_t)=\Delta_t^\ast(y_t)-\Delta_t(y_t)$, where $\Delta_t^\ast$ is the exact correction and $\Delta_t$ is the operational GCC correction. The shape diagnostic uses the moment-matched Gaussian approximation.}
\end{table}
%%%%%%%%%%%%%%%%%%%%%%%%%%%%%%%%%%%%%%%%%%%%%%%%%%%%%%%%
\section{Empirical Application}\label{sec:Empirical}

We now apply the GCC filter to realized volatility data. The objective is to
evaluate whether the Gauss-Cauchy measurement-error specification provides a
useful empirical decomposition of realized volatility into a persistent latent
component and transitory measurement noise. We compare the GCC filter with
several alternative filters that use the same Gaussian AR(1) state equation but
differ in their measurement-error distributions.

\subsection{Data}

We apply the GCC filter to the daily logarithm of realized volatility for the
Technology Select Sector SPDR Fund (XLK). The sample runs from December 22,
1998 to November 7, 2025, and the realized-volatility series is obtained from
the Dacheng Xiu Risk Lab. Realized measures are constructed from high-frequency
intraday returns and are therefore informative about daily variation in
volatility, but they may also contain extreme observations associated with
market stress, liquidity disruptions, and market microstructure effects.

We work with log realized volatility. This transformation reduces the
right-skewness of realized volatility and makes a symmetric measurement-error
specification more plausible. In the XLK sample, the skewness of the log
realized-volatility series is 0.084.

\subsection{Competing filters}

We benchmark the GCC filter against alternatives that retain the same Gaussian AR(1) state equation but differ in the measurement-error distribution. The comparison includes the Gaussian Kalman filter, the pure Cauchy filter, the Normal-Laplace filter, and two standard robust benchmarks based on Student-$t$ and Huber measurement errors. The Gaussian and Cauchy specifications are limiting cases of the GCC construction, while the Normal-Laplace model provides a light-tailed robust alternative whose Gaussian convolution also has a closed form.

A key distinction is whether the prediction-error density remains analytically tractable after convolution with Gaussian state uncertainty. This is the case for the Gaussian, Cauchy, GCC, and Normal-Laplace specifications, but not for the Student-$t$ and Huber filters. The latter are therefore implemented as operational filtering rules based on same-family prediction-error approximations. The full specification table and additional parameterization details are given in Supplement~\ref{app:filter_details}.

\subsection{Estimation results}

Table \ref{tab:DFest-block} reports estimation results for the competing filters. The GCC specification attains the largest value of the implemented prediction-error criterion, $\ell=-1,306$, compared with
$\ell=-2,465$ for the Gaussian benchmark. This large difference indicates that a purely Gaussian measurement-error specification is too restrictive for log
realized volatility.

Allowing for heavy tails improves the criterion across all alternatives. The Student-$t$ filter reaches $\ell=-1,349$ ($\nu\approx 5.25$), while the pure Cauchy yields $\ell=-1,536$. Because the Student-$t$ and Huber filters rely on same-family prediction-error density approximations, their reported pseudo-log likelihoods strictly reflect a comparison of operational filtering rules. With this qualification, the implemented criteria nevertheless place the GCC
filter at the top of the empirical ranking.

The Normal-Laplace estimate is also informative: its Gaussian scale is essentially zero, so the fitted specification is close to a pure-Laplace measurement-error model. Thus, within the Normal-Laplace family, the data do not support a separately identified Gaussian measurement-error component once the Laplace component is included.

The GCC estimates decompose measurement error into a dominant Gaussian component and a small but non-negligible Cauchy component ($\sigma=0.1817$, $\gamma=0.0199$, $\lambda=\gamma/\sigma\approx 0.11$). As Section \ref{sec:Simulations} demonstrates, the Gaussian prediction approximation is highly accurate in this range. Thus, the fit improves not by replacing Gaussian noise with a purely heavy-tailed specification, but by retaining a Gaussian core that accommodates occasional large deviations. In contrast, the Gaussian filter absorbs these extremes by inflating the measurement variance ($\sigma=0.2980$), yielding a noisier extraction of the latent state.

Beyond empirical fit, the GCC also has an analytical advantage.
As indicated in Table \ref{tab:DFest-block}, the prediction-error density is available in closed form for the GCC, Gaussian, Cauchy, and Normal-Laplace models under the Masreliez approximation, which allows the likelihood contribution and score to be evaluated without numerical convolution. By contrast, the Student-$t$ and Huber filters are not closed under convolution with Gaussian state uncertainty, so their implementations rely on approximations to the prediction-error density.

Finally, all models imply highly persistent latent volatility dynamics, with estimates of $\phi$ close to $0.97$. The exception is the pure Cauchy
specification, which yields a somewhat lower persistence estimate
($\phi\approx 0.936$), reflecting its greater reliance on the measurement-error component to absorb large fluctuations.

\begin{table}[htbp]
\caption{Estimation Results under Different Filtering Methods}
\spacingset{1.15}
\begin{centering}
\begin{small}
\begin{tabularx}{\textwidth}{p{2.2cm}YYYYYY}
\toprule
    \midrule
\multicolumn{1}{c}{Filter}
& \begin{tabular}{c}GCC\\[-0.5mm](Voigt)\end{tabular}
& \begin{tabular}{c}Kalman\\[-0.5mm](Gaussian)\end{tabular}
& Cauchy
& \begin{tabular}{c}Normal-\\[-0.5mm]Laplace\end{tabular}
& Student-$t$
& Huber \\
\midrule      
    \multicolumn{1}{c}{$\sigma$} & 0.1817   & 0.2980 &        & 0.0001 & 0.1696 & 0.1749 \\
                                 & {\it (0.0053)} & {\it (0.0109)}  &        & {\it (0.0001)}  & {\it (0.0044)}  & {\it (0.0134)} \\
    \multicolumn{1}{c}{$\gamma$} & 0.0199   &        & 0.0553 & 0.1786 &        &  \\
                                 & {\it (0.0021)}  &        & {\it (0.0037)}  & {\it (0.0042)}  &        &  \\
    \multicolumn{1}{c}{$\nu$}    &        &        &        &        & 5.2545 &  \\
                                 &        &        &        &        & {\it (0.3288)}  &  \\
    \multicolumn{1}{c}{$k$ }     &        &        &        &        &        & 1.3141 \\
                                &        &        &        &        &        & {\it (0.1846)} \\
    \multicolumn{1}{c}{$\mu$ }   & -1.9420  & -1.9492 & -1.9529 & -1.9600   & -1.9667 & -1.9437 \\
                                 & {\it (0.0502)}  & {\it (0.0535)}  & {\it (0.0351)}  & {\it (0.0490)}  & {\it (0.0461)}  & {\it (0.0622)} \\
    \multicolumn{1}{c}{$\phi$}   & 0.9716   & 0.9768   & 0.9361   & 0.9735 & 0.9699 & 0.9702 \\
                                & {\it (0.0040)}  & {\it (0.0037)}  & {\it (0.0070)}  & {\it (0.0034)}  & {\it (0.0039)}  & {\it (0.0058)} \\
    \multicolumn{1}{c}{$\tau$}      & 0.1138 & 0.1010 & 0.1813        & 0.1087 & 0.1145 & 0.1180 \\
                                     & {\it (0.0055)}  & {\it (0.0086)}  & {\it (0.0072)}  & {\it (0.0042)}  & {\it (0.0053)}  & {\it (0.0052)} \\
          &        &        &        &        &        &  \\
    \multicolumn{1}{c}{$\operatorname{var}(x_t)$}    & 0.2316 & 0.2236 & 0.2659 & 0.2257 & 0.2209 & 0.2373 \\
\\[-0.2cm]    
    \multicolumn{1}{c}{$\ell$}    & \textbf{-1,306} & -2,465 & -1,536 & -1,474 & -1,349 & -1,445 \\
          &        &        &        &        &        &  \\
    Closed-form & Yes    & Yes    & Yes    & Yes    & No     & No \\
\midrule
\bottomrule
\end{tabularx}
\end{small}

\par\end{centering}
{\footnotesize \textit{Note:} Estimation results under different filtering methods for the logarithm of daily realized volatility for the Technology Select Sector SPDR Fund (XLK). Sample: December 22, 1998 to November 7, 2025. Column headers indicate the filter name, with the measurement-error distribution in parentheses when it differs from the filter name. The parameter $k$ denotes the standardized threshold in the Huber density. $\operatorname{var}(x_t)=\tau^2/(1-\phi^2)$ is the unconditional variance of the latent state. We also report the maximized full-sample prediction-error criterion, $\ell$, and indicate the largest value in bold. For GCC, Kalman, Cauchy, and Normal-Laplace, this criterion is the log-likelihood based on the closed-form prediction-error density under the Gaussian prediction approximation. For Student-$t$ and Huber, it is a pseudo-log likelihood based on the standard same-family approximation to the prediction-error density. The last row indicates whether the filtering method has a closed-form expression for the prediction-error density. Robust sandwich standard errors are provided in parentheses. Huber filter standard errors are obtained via bootstrap due to a discontinuity in the second derivative of the log-likelihood. \label{tab:DFest-block}}
\end{table}

\subsection{Filtered volatility and residual decomposition}

Figure \ref{fig:logrv} plots the observed log realized-volatility series alongside the filtered GCC estimate of the latent volatility component. The observed data exhibit large spikes during market stress periods, including the Dot-Com bubble burst, the aftermath of September 11, 2001, the May 6, 2010 Flash Crash, and the August 24, 2015 market disruption. Conversely, unusually low measurements in the early sample are plausibly driven by market microstructure effects and thin trading.

The filtered state is substantially smoother than the observed series and is
not pulled strongly toward isolated extreme observations. This behavior is a
direct consequence of the redescending location score of the Voigt
prediction-error density, for which $\psi_t\to0$ as $|e_t|\to\infty$. Large
prediction errors are therefore downweighted in the state update and are instead
attributed primarily to the heavy-tailed component of the measurement error.

The right panel of Figure \ref{fig:logrv}, labeled ``Measurement errors,'' illustrates the filtered conditional-mean decomposition of the measurement
error. The Gaussian component captures regular, small-to-moderate measurement variation around the latent state, while the Cauchy component absorbs the most extreme deviations.
Thus, the GCC filter preserves sensitivity to ordinary daily variation in realized volatility while limiting the influence of observations that are more plausibly interpreted as transitory measurement disturbances.

\begin{figure}[t]
\spacingset{1.15}
\begin{centering}
\includegraphics[width=1\textwidth]{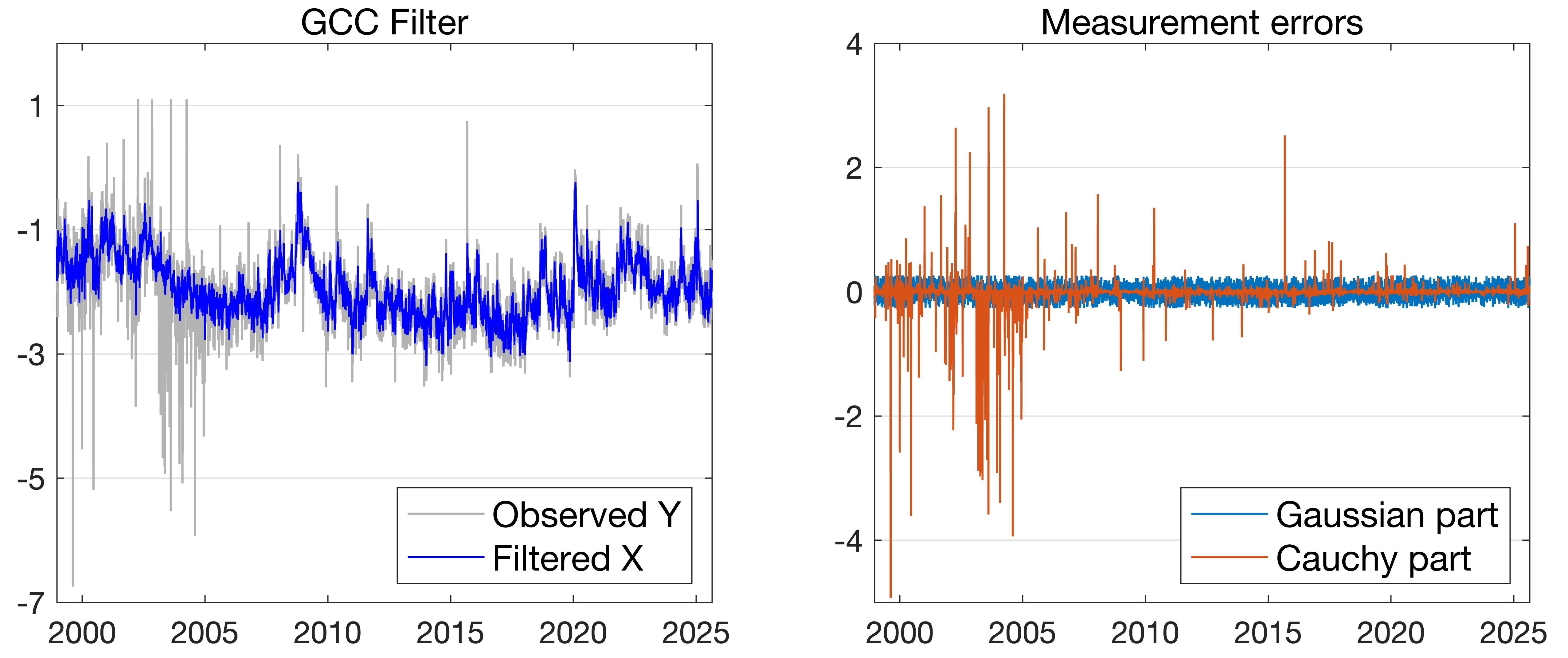}
\par
\end{centering}
\caption{\footnotesize GCC filtering results for the daily logarithm of realized
volatility of the Technology Select Sector SPDR Fund (XLK), December 22, 1998
to November 7, 2025. The left panel shows the observed series $y_t$ together
with the filtered state estimate. The right panel presents the filtered conditional means of the Gaussian and Cauchy measurement-error components using (\ref{eq:filtered_cauchy}).}
\label{fig:logrv}
\end{figure}

\section{Conclusion}\label{sec:Conclusion}

This paper develops likelihood inference and filtering for the Gauss-Cauchy convolution
model. Although familiar in spectroscopy, the Voigt density has been less used as a statistical
likelihood because exact evaluation is viewed as difficult. We show this is avoidable: the
scaled complementary error function gives an algebraically tractable Voigt representation.
Its identities yield closed-form likelihood, score, Hessian, and conditional-moment formulas,
so MLE requires neither numerical convolution nor finite-difference derivatives.

These formulas also provide an exact finite-sample likelihood and Fisher-information
building block for Brownian-Cauchy L\'evy increments, complementing the high-frequency
analysis of \citet{AitSahaliaJacod:2008}. In that setting, the Gaussian scale is proportional
to $\sqrt{\Delta}$, while the Cauchy scale is proportional to $\Delta$, so the same Voigt
representation evaluates the finite-$\Delta$ convolution density and its derivatives directly.

The same structure gives a Masreliez-type state-space filter with a redescending conditional-mean update. Under the Masreliez Gaussian prediction approximation, the GCC prediction error is Voigt and the state update is driven by its location score. This score is nearly linear centrally and redescends in the tails, so extreme observations have small update influence and are not treated as persistent latent state movements. In this operational sense, robustness follows from the conditional expectation implied by the convolution
model. Furthermore, an exact benchmark filter shows that the Masreliez approximation is accurate; the error is small for predictive densities and the one-step correction, and negligible in the empirically relevant region with a small Cauchy-to-Gaussian ratio.

The application to log volatility of XLK illustrates the method's practical value. The GCC filter attains the highest implemented prediction-error criterion among the specifications considered by combining a dominant Gaussian measurement-error component with a small Cauchy component that absorbs transient extremes, keeping the resulting state smooth without being driven by isolated spikes.

These results complement recent machine-learning approaches to Voigt fitting by identifying the likelihood target, score, and Fisher information that fast algorithms may approximate. Extensions include multivariate GCC models, richer latent dynamics, hybrid likelihood-machine-learning estimation, and applications where persistent Gaussian variation is observed through intermittent heavy-tailed noise.

\section*{Data and Code Availability Statement}

Replication code and data for the numerical experiments and empirical analysis is
available on GitHub at
\url{https://github.com/reinhardhansen/GaussCauchyConvolution-Replication}.
The repository contains the scripts used to reproduce the simulation results,
empirical estimates, tables, and figures reported in this paper.

{\spacingset{1.2}
\bibliographystyle{apalike}
\bibliography{prh}
}

% \phantomsection\label{supplementary-material}
% \bigskip

% \begin{center}

% \newpage

% {\large\bf SUPPLEMENTARY MATERIAL}

% \end{center}

% \begin{description}
% \item[Title:]
% Brief description. (file type)
% \item[R-package for MYNEW routine:]
% R-package MYNEW containing code to perform the diagnostic methods
% described in the article. The package also contains all datasets used as
% examples in the article. (GNU zipped tar file)
% \item[HIV data set:]
% Data set used in the illustration of MYNEW method in
% Section~\ref{sec-verify} (.txt file).
% \end{description}

% \section{BibTeX}\label{bibtex}

% We encourage you to use BibTeX. If you have, please feel free to use the
% package natbib with any bibliography style you're comfortable with. The
% .bst file agsm has been included here for your convenience. 

\newpage

%\phantomsection\label{supplementary-material}
%\bigskip

\part*{Supplementary Material}
\appendix
\setcounter{page}{1}
\global\long\def\thepage{S.\arabic{page}}%

\setcounter{equation}{0}
\global\long\def\theequation{A.\arabic{equation}}%

\setcounter{lemma}{0}
\global\long\def\thelemma{A.\arabic{lemma}}%

\section{Proofs}\label{sec:SupplementA}

\noindent \textbf{Proof of Proposition \ref{prop:Voigt}.} It suffices to prove the result for $\mu=0$; the general case follows by replacing $y$ with $y-\mu$. For $s\geq0$
we have $\varphi_{Y}(s)=\varphi_{Z}(s)\varphi_{X}(s)=e^{-\sigma^{2}s^{2}/2}e^{-\gamma s}$,
such that $f_{Y}(y)=\frac{1}{\pi}\int_{0}^{\infty}\cos(sy)e^{-\sigma^{2}s^{2}/2-\gamma s}ds$
follows by Euler's formula 
$$
\operatorname{Re}[e^{-isy}]=\operatorname{Re}[\cos(-sy)+i\sin(-sy)]=\cos(-sy)=\cos(sy).
$$
Next, we factorize the integral into two terms,
\begin{align*}
\frac{1}{\pi}\int_{0}^{\infty}\cos(sy)e^{-\frac{\sigma^{2}s^{2}}{2}-\gamma s}ds & =\frac{1}{2\pi}\int_{0}^{\infty}\left(e^{isy}+e^{-isy}\right)e^{-\frac{\sigma^{2}s^{2}}{2}-\gamma s}ds\\
 & =\underbrace{\frac{1}{2\pi}\int_{0}^{\infty}e^{isy}e^{-\frac{\sigma^{2}s^{2}}{2}-\gamma s}ds}_{I_{1}}+\underbrace{\frac{1}{2\pi}\int_{0}^{\infty}e^{-isy}e^{-\frac{\sigma^{2}s^{2}}{2}-\gamma s}ds}_{I_{2}}.
\end{align*}
Using $isy-\frac{\sigma^2s^{2}}{2}-\gamma s=\tfrac{1}{2}(\frac{\gamma-iy}{\sigma})^{2}-\tfrac{1}{2}(\frac{\gamma-iy}{\sigma}+\sigma s)^{2}$
we rewrite the first term as
\begin{align*}
I_{1} & =  \frac{e^{\tfrac{1}{2}(\frac{\gamma-iy}{\sigma})^{2}}}{2\pi}\int_{0}^{\infty}e^{-\tfrac{1}{2}(\frac{\gamma-iy}{\sigma}+\sigma s)^{2}}ds
=\frac{e^{\tfrac{1}{2}(\frac{\gamma-iy}{\sigma})^{2}}}{2\pi}\int_{\frac{\gamma-iy}{\sigma\sqrt{2}}}^{\infty}e^{-\zeta_{-}^{2}}\sqrt{\tfrac{2}{\sigma^{2}}}d\zeta_{-}\\
 &=  \frac{e^{\tfrac{1}{2}(\frac{\gamma-iy}{\sigma})^{2}}}{2\sqrt{2\pi\sigma^{2}}}\frac{2}{\sqrt{\pi}}\int_{\frac{\gamma-iy}{\sigma\sqrt{2}}}^{\infty}e^{-\zeta_{-}^{2}}d\zeta_{-}
 =\frac{1}{2}\frac{e^{\tfrac{1}{2}(\frac{\gamma-iy}{\sigma})^{2}}}{\sqrt{2\pi\sigma^{2}}}\left[\operatorname{erfc}\left(\frac{\gamma-iy}{\sigma\sqrt{2}}\right)\right],
\end{align*}
where we applied the substitution $\zeta_{-}=(\frac{\gamma-iy}{\sigma}+\sigma s)/\sqrt{2}$. 

For the second term, we use the related identity, $-isy-\frac{\sigma^{2}s^{2}}{2}-\gamma s=\tfrac{1}{2}(\frac{\gamma+iy}{\sigma})^{2}-\tfrac{1}{2}(\frac{\gamma+iy}{\sigma}+\sigma s)^{2}$,
and the substitution, $\zeta_{+}=(\frac{\gamma+iy}{\sigma}+\sigma s)/\sqrt{2}$,
to rewrite it as,
$$
I_{2}=\frac{e^{\tfrac{1}{2}(\frac{\gamma+iy}{\sigma})^{2}}}{2\pi}\int_{0}^{\infty}e^{-\tfrac{1}{2}(\frac{\gamma+iy}{\sigma}+\sigma s)^{2}}ds=\frac{e^{\tfrac{1}{2}(\frac{\gamma+iy}{\sigma})^{2}}}{2\sqrt{2\pi\sigma^{2}}}\operatorname{erfc}\left(\frac{\gamma+iy}{\sigma\sqrt{2}}\right).
$$
Combining the results we arrive at
\begin{align*}
f_{Y}\left(y\right) & =\frac{1}{\sqrt{2\pi\sigma^{2}}}\left[\frac{1}{2}e^{\tfrac{1}{2}(\frac{\gamma-iy}{\sigma})^{2}}\operatorname{erfc}\left(\tfrac{\gamma-iy}{\sigma\sqrt{2}}\right)+\frac{1}{2}e^{\tfrac{1}{2}(\frac{\gamma+iy}{\sigma})^{2}}\operatorname{erfc}\left(\tfrac{\gamma+iy}{\sigma\sqrt{2}}\right)\right]\\
 & =\frac{1}{\sqrt{2\pi\sigma^{2}}}\operatorname{Re}\left[e^{\tfrac{1}{2}(\frac{\gamma+iy}{\sigma})^{2}}\operatorname{erfc}\left(\tfrac{\gamma+iy}{\sigma\sqrt{2}}\right)\right]=\frac{1}{\sqrt{2\pi\sigma^{2}}}\operatorname{Re}\left[\operatorname{erfcx}\left(\tfrac{\gamma+iy}{\sigma\sqrt{2}}\right)\right].
\end{align*}
This completes the proof. \hfill$\square$\medskip{}

\noindent \textbf{Proof of Corollary \ref{cor:MillsRatio}.}
Let $t=\frac{\gamma+i(y-\mu)}{\sigma}$. From Proposition~\ref{prop:Voigt} we have the representation
$$
f_Y(y;\mu,\sigma,\gamma)=\frac{1}{\sqrt{2\pi\sigma^2}}\operatorname{Re}\left[\operatorname{erfcx}\left(\frac{t}{\sqrt{2}}\right)\right].
$$
Using
$m(t)=\frac{\Phi(-t)}{\phi(t)}=\sqrt{\frac{\pi}{2}}\operatorname{erfcx}\left(\frac{t}{\sqrt{2}}\right)$,
it follows that
$\operatorname{erfcx}\left(\frac{t}{\sqrt{2}}\right)=\sqrt{\frac{2}{\pi}}m(t)$.
Substituting this into the expression for $f_Y$ yields
$$
f_Y(y;\mu,\sigma,\gamma)=\frac{1}{\sqrt{2\pi\sigma^2}}\operatorname{Re}\left[\sqrt{\frac{2}{\pi}}m(t)\right]
=\frac{1}{\pi\sigma}\operatorname{Re}\left[m\left(\frac{\gamma+i(y-\mu)}{\sigma}\right)\right],
$$
which is the stated result. \hfill$\square$\medskip{}

\noindent{}\textbf{Proof of Lemma \ref{lem:erfcx}.} (i) First, $\mathsf{e}$
is entire because it is the composition of two entire functions, $\exp(w^{2})$ and $\operatorname{erfc}(w)$.
The identity
$\mathsf{e}^\prime(w)=2w\mathsf{e}(w)-\frac{2}{\sqrt{\pi}}$
follows by differentiating
$$
\mathsf{e}(w)=\exp(w^{2})\frac{2}{\sqrt{\pi}}\int_{w}^{\infty}e^{-t^{2}}dt.
$$
The expression for $\mathsf{e}^{(n)}(w)$ follows by induction. The induction starts from $p_0(w)=1$ and $q_0(w)=0$. Differentiating
$\mathsf{e}^{(n)}(w)=p_n(w)\mathsf{e}(w)+q_n(w)$,
yields
$$
\mathsf{e}^{(n+1)}(w) =
\{p_n^\prime(w)+2wp_n(w)\}\mathsf{e}(w)
+ \left\{q_n^\prime(w)-\frac{2}{\sqrt{\pi}}p_n(w)\right\}.
$$

The symmetry properties follow from the conjugation identity
$\mathsf{e}(\overline{w})=\overline{\mathsf{e}(w)}$.
Since
$$
w_{\mu+r,\theta}
=
\frac{\gamma+ir}{\sigma\sqrt{2}}
=
\overline{
\frac{\gamma-ir}{\sigma\sqrt{2}}}
=
\overline{w_{\mu-r,\theta}},
$$
we have
$\mathsf{u}(\mu+r;\theta)=\mathsf{u}(\mu-r;\theta)$ and $\mathsf{v}(\mu+r;\theta)=-\mathsf{v}(\mu-r;\theta)$.

Finally, set 
$$\alpha=\frac{\gamma}{\sigma\sqrt{2}}>0,
\qquad
\beta=\frac{y-\mu}{\sigma\sqrt{2}},
$$
such that $w_{y,\theta}=\alpha+i\beta$, and observe that
$$
\mathsf{u}(y;\theta) = \operatorname{Re}\{\mathsf{e}(\alpha+i\beta)\}
= \frac{\alpha}{\pi}\int_{-\infty}^{\infty} \frac{e^{-t^2}}{(t+\beta)^2+\alpha^2}dt >0.
$$
\hfill$\square$\medskip{}

\noindent{}{\bf Proof of Corollary \ref{cor:VoigtAlgebraicClosure}.}
By Lemma \ref{lem:erfcx}, every derivative of $e(w)=\operatorname{erfcx}(w)$ has the form
$$
e^{(n)}(w)=p_n(w)e(w)+q_n(w),
$$
for polynomials $p_n$ and $q_n$. The chain rule then implies that derivatives of $e(w_{y,\theta})$ with respect to $y$, $\mu$, $\sigma$, and $\gamma$ are obtained by algebraic operations involving $w_{y,\theta}$, its derivatives, and $e(w_{y,\theta})$. Taking real and imaginary parts expresses these derivatives in terms of $\mathsf{u}(y;\theta)$ and $\mathsf{v}(y;\theta)$. The prefactor $(2\pi\sigma^2)^{-1/2}$ contributes only algebraic derivatives in $\sigma$. Hence every finite-order derivative of $f_Y(y;\theta)$ has the stated form.

For the log-density, repeated differentiation of
$$
\log f_Y(y;\theta)
=
-\frac{1}{2}\log(2\pi)-\log\sigma+\log \mathsf{u}(y;\theta)
$$
produces rational expressions whose denominators involve powers of $\mathsf{u}(y;\theta)$ and $\sigma$. Lemma \ref{lem:erfcx} gives $\mathsf{u}(y;\theta)>0$, and $\sigma>0$ by assumption. Therefore these expressions are well defined for all real $y$. This proves the result.
\hfill$\square$\medskip{}

\noindent \textbf{Proof of Theorem \ref{thm:CondDensity}.} By Bayes'
rule we have $f_{Z|Y}(z|y)=f_{Y|Z}(y|z)\frac{f_{Z}(z)}{f_{Y}(y)}$.
The general results follow from the simple expressions, $f_{Y|Z}(y|z)=\tfrac{1}{\pi\gamma}[1+(\frac{y-\mu-z}{\gamma})^{2}]^{-1}$
and $f_{Z}(z)=e^{-z^{2}/(2\sigma^{2})}/\sqrt{2\pi\sigma^{2}}$, and
the expression for $f_{Y}(y)$ in (\ref{eq:Kendall}). 
$$
f_{Z|Y}(z|y)=\tfrac{1}{\pi\gamma}\left[1+\left(\tfrac{y-\mu-z}{\gamma}\right)^{2}\right]^{-1}\frac{\tfrac{1}{\sqrt{2\pi\sigma^{2}}}e^{-\frac{z^{2}}{2\sigma^{2}}}}{\tfrac{1}{\sqrt{2\pi\sigma^{2}}}\mathsf{u}(y;\theta)}=\frac{1}{\gamma\pi \mathsf{u}(y;\theta)}\frac{e^{-\frac{z^{2}}{2\sigma^{2}}}}{1+\left(\frac{y-\mu-z}{\gamma}\right)^{2}}.
$$
The limiting result follows by $y^{2}\mathsf{u}(y;\theta)=y^{2}\operatorname{Re} \left[\operatorname{erfcx}\left(\frac{\gamma+i(y-\mu)}{\sigma\sqrt{2}}\right)\right]\rightarrow\sqrt{2/\pi}\sigma\gamma$
as $|y|\rightarrow\pm\infty$, for any fixed $\mu$.

% Next, the expression for $y=\mu$ follows by $\lambda=\gamma/\sigma$
% \begin{align*}
% \operatorname{erfcx}\left(\tfrac{\lambda}{\sqrt{2}}\right) & =e^{\frac{\lambda^{2}}{2}}\frac{2}{\sqrt{\pi}}\int_{\lambda/\sqrt{2}}^{\infty}e^{-t^{2}}dt\\
%  & =e^{\frac{\lambda^{2}}{2}}\frac{2}{\sqrt{2\pi}}\int_{\lambda}^{\infty}e^{-s^{2}/2}ds,\quad s=\sqrt{2}t\\
%  & =2e^{\frac{\lambda^{2}}{2}}\Pr(Z\geq\lambda)
%   =e^{\frac{\lambda^{2}}{2}}\Pr(|Z|\geq\lambda)
% \end{align*}
% such that $$f_{Z|Y=\mu}(z)=\frac{1}{\gamma\pi\Pr(|Z|\geq\tfrac{\gamma}{\sigma})}\frac{e^{-\frac{\left(z^{2}+\gamma^{2}\right)}{2\sigma^{2}}}}{1+(\frac{z}{\gamma})^{2}}.$$ This completes the proof. 

Let $\tilde y=y-\mu$. As $|\tilde y|\to\infty$,
$\tilde y^2 \mathsf u(y;\theta)
\to \sqrt{2/\pi}\sigma\gamma$, and for fixed $z$,
\[
\frac{1}{1+\{(\tilde y-z)/\gamma\}^2}
\sim \frac{\gamma^2}{\tilde y^2}.
\]
Therefore
\[
\frac{1}{\gamma\pi\mathsf u(y;\theta)}
\frac{1}{1+\{(\tilde y-z)/\gamma\}^2}
\to
\frac{1}{\sqrt{2\pi}\sigma},
\]
which gives
\[
f_{Z |  Y}(z |  y)\to
\frac{1}{\sqrt{2\pi\sigma^2}}
\exp\{-z^2/(2\sigma^2)\}.
\]
This completes the proof. \hfill$\square$\medskip{}

\noindent{}\textbf{Proof of Proposition \ref{prop:TweedieGaussian}.}
Let $F_X$ denote the distribution function of $X$. Since $Z$ is Gaussian and
independent of $X$, the density of $Y=\mu+Z+X$ is the Gaussian convolution
$$
f_Y(y)=\int \phi_\sigma(y-\mu-x)dF_X(x),
$$
where $\phi_\sigma$ is the density of $\mathcal{N}(0,\sigma^2)$. This density is
strictly positive because $\phi_\sigma(y-\mu-x)>0$ for all real $x$ and $y$.
Moreover, $f_Y$ is smooth because all derivatives of the Gaussian density are
bounded for fixed $\sigma>0$, so differentiation under the integral sign is
valid.

Differentiating the marginal density gives
$$
f_Y^{\prime}(y) = \int \phi_\sigma'(y-\mu-x)dF_X(x) =
-\frac{1}{\sigma^2} \int (y-\mu-x)\phi_\sigma(y-\mu-x)dF_X(x).
$$
Since the conditional distribution of $X$ given $Y=y$ is proportional to
$\phi_\sigma(y-\mu-x)dF_X(x)$ and $Z=y-\mu-X$, the last integral equals $f_Y(y)\mathbb{E}[Z|Y=y]$.
Therefore
$$
\mathbb{E}[Z|Y=y] = -\sigma^2\frac{f_Y^{\prime}(y)}{f_Y(y)} = -\sigma^2\frac{\partial}{\partial y}\log f_Y(y).
$$

For the second moment, differentiating once more gives
$$
f_Y^{\prime\prime}(y) =
\int \phi_\sigma^{\prime\prime}(y-\mu-x)dF_X(x)=\int
\left[\frac{(y-\mu-x)^2}{\sigma^4}-\frac{1}{\sigma^2}
\right]\phi_\sigma(y-\mu-x)dF_X(x).
$$
Hence
$\mathbb{E}[Z^2|Y=y] = \sigma^2+\sigma^4\frac{f_Y^{\prime\prime}(y)}{f_Y(y)}$ and combining this with the conditional mean expression yields
$\operatorname{var}(Z|Y=y)=\sigma^2+\sigma^4\frac{\partial^2}{\partial y^2}\log f_Y(y)$. This completes the proof. 
\hfill$\square$\medskip{}

\noindent{}\textbf{Proof of Corollary \ref{cor:CondExpect}.}
Since the density $f_Y(y;\theta)$ depends on $y$ and $\mu$ only through $\tilde y=y-\mu$, we have $\partial_y\log f_Y(y;\theta)=-s_\mu(y;\theta)$. By Proposition \ref{prop:TweedieGaussian} and Lemma \ref{lem:ScoreHessian} we have
$$
\mathbb{E}[Z|Y=y]=\sigma^2s_\mu=\tilde y+\gamma\frac{\mathsf{v}(y;\theta)}{\mathsf{u}(y;\theta)}.
$$
For the conditional variance, we also use that $f_Y$ depends on $y$ and $\mu$ only through $\tilde y$, such that the second derivative with respect to $y$ equals $H_{\mu\mu}$. Hence, by Proposition \ref{prop:TweedieGaussian} and Lemma \ref{lem:ScoreHessian},
$$
\operatorname{var}(Z|Y=y)=\sigma^2+\sigma^4H_{\mu\mu}=\sigma^2+\sigma^3s_\sigma-(\sigma^2s_\mu)^2.
$$
Substituting
$$
\sigma^2s_\mu=\tilde y+\gamma\frac{\mathsf{v}}{\mathsf{u}},\qquad \sigma^3s_\sigma=\tilde y^2-\gamma^2-\sigma^2+2\gamma\tilde y\frac{\mathsf{v}}{\mathsf{u}}+\sqrt{\frac{2}{\pi}}\frac{\sigma\gamma}{\mathsf{u}},
$$
where $\mathsf{u}=\mathsf{u}(y;\theta)$ and $\mathsf{v}=\mathsf{v}(y;\theta)$, yields
$$
\operatorname{var}(Z|Y=y)=\sqrt{\frac{2}{\pi}}\frac{\sigma\gamma}{\mathsf{u}(y;\theta)}-\gamma^2\left(1+\frac{\mathsf{v}^2(y;\theta)}{\mathsf{u}^2(y;\theta)}\right).
$$
Finally, we obtain the first-order condition for the extrema of the conditional expectation. Let $m(y)=\mathbb{E}[Z|Y=y]$. Proposition \ref{prop:TweedieGaussian} implies
$\operatorname{var}(Z|Y=y)=\sigma^2\{1-m^{\prime}(y)\}$.
Thus, at any interior extremum $y^\ast$ of $m(y)$, we have $m^{\prime}(y^\ast)=0$ and therefore $\operatorname{var}(Z|Y=y^\ast)=\sigma^2$. Substituting the variance formula above and multiplying by $\mathsf{u}^2(y^\ast;\theta)$ gives
$$
(\gamma^2+\sigma^2)\mathsf{u}^2(y^\ast;\theta)+\gamma^2\mathsf{v}^2(y^\ast;\theta)-\sqrt{\frac{2}{\pi}}\sigma\gamma\mathsf{u}(y^\ast;\theta)=0.
$$
This completes the proof. \hfill$\square$\medskip{}

\noindent{}\textbf{Proof of Lemma \ref{lem:ScoreHessian}.}
Let $\tilde y=y-\mu$ and write
$$
w=w_{y,\theta}=\alpha+i\beta\quad\text{where}
\quad\alpha=\tfrac{\gamma}{\sigma\sqrt{2}},
\quad
\beta=\tfrac{\tilde y}{\sigma\sqrt{2}}, 
$$
such that $\mathsf{e}(w)=\mathsf{u}+i\mathsf{v}$ where $\mathsf{u}=\mathsf{u}(y;\theta)$ and
$\mathsf{v}=\mathsf{v}(y;\theta)$. The log-density is simply
$$
\ell(y;\theta)
=
\log \mathsf{u}-\log\sigma-\log\sqrt{2\pi}.
$$
By Lemma \ref{lem:erfcx} we now have
$$
\mathsf{e}^\prime(w)=2w\mathsf{e}(w)-\tfrac{2}{\sqrt{\pi}}=\left\{
2(\alpha\mathsf{u}-\beta\mathsf{v})-\tfrac{2}{\sqrt{\pi}}\right\}+i2(\alpha\mathsf{v}+\beta\mathsf{u}).
$$
Furthermore,
$$
\frac{\partial w}{\partial\mu}
=
-\frac{i}{\sigma\sqrt{2}},
\qquad
\frac{\partial w}{\partial\gamma}
=
\frac{1}{\sigma\sqrt{2}},
\qquad
\frac{\partial w}{\partial\sigma}
=
-\frac{w}{\sigma}.
$$
Taking real parts of
$\mathsf{e}^\prime(w)\partial w/\partial\theta_j$ gives
\begin{align*}
\frac{\partial\mathsf{u}}{\partial\mu}
&=
\frac{1}{\sigma^2}
\left(
\gamma\mathsf{v}+\tilde y\mathsf{u}
\right),
\\
\frac{\partial\mathsf{u}}{\partial\gamma}
&=
\frac{1}{\sigma^2}
\left(
\gamma\mathsf{u}-\tilde y\mathsf{v}
-\sigma\sqrt{\tfrac{2}{\pi}}
\right),
\\
\frac{\partial\mathsf{u}}{\partial\sigma}
&=
\frac{1}{\sigma^3}
\left[
(\tilde y^2-\gamma^2)\mathsf{u}
+
2\gamma\tilde y\mathsf{v}
+
\sigma\gamma\sqrt{\tfrac{2}{\pi}}
\right].
\end{align*}
Since
$
s_\mu=\frac{1}{\mathsf{u}}\frac{\partial\mathsf{u}}{\partial\mu}$,
$s_\gamma=\frac{1}{\mathsf{u}}\frac{\partial\mathsf{u}}{\partial\gamma}$,
and $s_\sigma=
\frac{1}{\mathsf{u}}\frac{\partial\mathsf{u}}{\partial\sigma}
-
\frac{1}{\sigma},
$
the score components are
\begin{align*}
s_\mu 
    & = \frac{\gamma\mathsf{v}+\tilde y\mathsf{u}}{\sigma^2\mathsf{u}},\\
s_\gamma
    &=\frac{\gamma\mathsf{u}-\tilde y\mathsf{v}-\sigma\sqrt{\frac{2}{\pi}}}{\sigma^2\mathsf{u}},\\
s_\sigma
    &=\frac{(\tilde y^2-\gamma^2-\sigma^2)\mathsf{u}+2\gamma\tilde y\mathsf{v}+\sigma\gamma\sqrt{\frac{2}{\pi}}}{\sigma^3\mathsf{u}}.
\end{align*}
We next derive the Hessian entries. Let
$$
H_{ij} = \frac{\partial^2\ell(y;\theta)}{\partial i\partial j} = \frac{\partial s_i}{\partial j}.
$$
We use three identities. The first is the \emph{heat identity}
$f_\sigma(y;\theta)=\sigma f_{\mu\mu}(y;\theta)$.
It follows directly from the Gaussian convolution structure. Writing
$$
f_Y(y;\mu,\sigma,\gamma)=\int \varphi_\sigma(y-\mu-x)c_\gamma(x)dx,
$$
where $\varphi_\sigma$ is the $\mathcal N(0,\sigma^2)$ density and $c_\gamma$ is the Cauchy density, the Gaussian kernel satisfies
$$
\frac{\partial}{\partial\sigma}\varphi_\sigma(y-\mu-x)
=
\sigma\frac{\partial^2}{\partial\mu^2}\varphi_\sigma(y-\mu-x).
$$
Differentiating under the integral sign gives the stated identity.

The second is the \emph{Laplace identity}
$f_{\mu\mu}(y;\theta)+f_{\gamma\gamma}(y;\theta)=0$, 
which follows by writing the Voigt density as the convolution of the Gaussian kernel with the Cauchy kernel and using the harmonicity relation
$c_{\gamma,xx}(x)+c_{\gamma,\gamma\gamma}(x)=0$. 

The third is \emph{scale homogeneity of the density},
$\sigma s_\sigma+\gamma s_\gamma-\tilde y s_\mu=-1$,
which follows from
$$
f_Y(\mu+a\tilde y;\mu,a\sigma,a\gamma)=a^{-1}f_Y(\mu+\tilde y;\mu,\sigma,\gamma),
$$
after differentiating with respect to $a$ at $a=1$, with $\tilde y$ held fixed.
The heat identity gives
$$
s_\sigma=\frac{f_\sigma}{f} = \sigma\frac{f_{\mu\mu}}{f} = \sigma(H_{\mu\mu}+s_\mu^2),
$$
and therefore
$H_{\mu\mu}=\frac{s_\sigma}{\sigma}-s_\mu^2$. 
Similarly, the Laplace identity gives
$$
0=\frac{f_{\mu\mu}+f_{\gamma\gamma}}{f}=H_{\mu\mu}+s_\mu^2+H_{\gamma\gamma}+s_\gamma^2.
$$
Using the preceding expression for $H_{\mu\mu}$, we obtain
$H_{\gamma\gamma}=-\frac{s_\sigma}{\sigma}-s_\gamma^2$.

Finally, we obtain the mixed derivative $H_{\mu\gamma}$ and the entries
involving $\sigma$. Since $\mathsf{e}$ is analytic, the Cauchy-Riemann
relations imply
$$
\frac{\partial\mathsf{v}}{\partial\gamma} = \frac{\partial\mathsf{u}}{\partial\mu} = \mathsf{u}s_\mu,
\qquad
\frac{\partial\mathsf{u}}{\partial\gamma}=\mathsf{u}s_\gamma.
$$
Differentiating
$
s_\mu=\frac{\tilde y\mathsf{u}+\gamma\mathsf{v}}{\sigma^2\mathsf{u}}
$
with respect to $\gamma$ gives
\begin{align*}
H_{\mu\gamma}
&=
\frac{1}{\sigma^2}
\frac{
\mathsf{u}
\left(
\tilde y\mathsf{u}s_\gamma
+
\mathsf{v}
+
\gamma\mathsf{u}s_\mu
\right)
-
(\tilde y\mathsf{u}+\gamma\mathsf{v})\mathsf{u}s_\gamma
}{
\mathsf{u}^2
}
\\
&=
\frac{1}{\sigma^2}
\left(
\tilde y s_\gamma
+
\frac{\mathsf{v}}{\mathsf{u}}
+
\gamma s_\mu
\right)
-
s_\mu s_\gamma.
\end{align*}

The remaining Hessian entries follow from differentiating the homogeneity
identity. Differentiating
$
\sigma s_\sigma+\gamma s_\gamma-\tilde y s_\mu=-1
$
with respect to $\mu$, $\gamma$, and $\sigma$, respectively, yields
\begin{align*}
\sigma H_{\mu\sigma}
+
\gamma H_{\mu\gamma}
-
\tilde y H_{\mu\mu}
+
s_\mu
&=
0,
\\
\sigma H_{\gamma\sigma}
+
\gamma H_{\gamma\gamma}
+
s_\gamma
-
\tilde y H_{\mu\gamma}
&=
0,
\\
\sigma H_{\sigma\sigma}
+
s_\sigma
+
\gamma H_{\gamma\sigma}
-
\tilde y H_{\mu\sigma}
&=
0.
\end{align*}
Solving these equations gives
\begin{align*}
H_{\mu\sigma}
&=
-\frac{1}{\sigma}
\left(
s_\mu+\gamma H_{\mu\gamma}-\tilde y H_{\mu\mu}
\right),
\\
H_{\gamma\sigma}
&=
-\frac{1}{\sigma}
\left(
s_\gamma+\gamma H_{\gamma\gamma}-\tilde y H_{\mu\gamma}
\right),
\\
H_{\sigma\sigma}
&=
-\frac{1}{\sigma}
\left(
s_\sigma+\gamma H_{\gamma\sigma}-\tilde y H_{\mu\sigma}
\right).
\end{align*}
This completes the proof.\hfill$\square$\medskip{}

\subsection{Proof of Theorem \ref{thm:MLE-consistent-asN}} 
We begin with some preliminary results.

\begin{lemma}[Identification]\label{lem:MLE-identification}
Let $Y\sim \mathcal{V}(\mu,\sigma,\gamma)$ 
with $\sigma\geq0$, $\gamma\geq0$, and not both $\sigma$ and $\gamma$ equal to zero. Then the mapping from $\theta=(\mu,\sigma,\gamma)^\prime$ to the distribution of $Y$ is one-to-one. In particular, if
$\mathcal{V}(\mu,\sigma,\gamma)=\mathcal{V}(\mu_0,\sigma_0,\gamma_0)$,
then
$(\mu,\sigma,\gamma)=(\mu_0,\sigma_0,\gamma_0)$.
\end{lemma}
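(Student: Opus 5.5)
The natural route is through the characteristic function, which determines the distribution uniquely. As in the proof of Proposition~\ref{prop:Voigt}, $Y\sim\mathcal V(\mu,\sigma,\gamma)$ has
$$
\varphi_Y(s)=\exp\left(i\mu s-\tfrac{1}{2}\sigma^2 s^2-\gamma|s|\right),\qquad s\in\mathbb R.
$$
This formula also covers the degenerate cases $\sigma=0$ (pure Cauchy) and $\gamma=0$ (pure Gaussian). If $\mathcal V(\mu,\sigma,\gamma)=\mathcal V(\mu_0,\sigma_0,\gamma_0)$, the two characteristic functions agree for every real $s$. The plan is to turn this equality into an identity between polynomial-type exponents and then compare coefficients.

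\textbf{Step 1: the modulus.} Since $\varphi_Y$ never vanishes, take moduli and then logarithms. For all $s>0$ this gives
$$
\tfrac{1}{2}\sigma^2 s^2+\gamma s=\tfrac{1}{2}\sigma_0^2 s^2+\gamma_0 s.
$$
Both sides are polynomials in $s$ that agree on $(0,\infty)$, so their coefficients match: $\sigma^2=\sigma_0^2$ and $\gamma=\gamma_0$. Because $\sigma,\sigma_0\ge 0$, this yields $\sigma=\sigma_0$.

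\textbf{Step 2: the location.} The moduli are now equal and nonzero, so dividing the two characteristic functions gives $e^{i(\mu-\mu_0)s}=1$ for all $s$. Differentiating at $s=0$, or simply taking $s$ small, forces $\mu=\mu_0$.

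\textbf{Main point of care.} The only delicate step is extracting the exponent. A complex logarithm of $\varphi_Y$ is determined only up to multiples of $2\pi i$. I avoid this ambiguity by splitting $\varphi_Y$ into its modulus, which is real and positive, and its phase $e^{i\mu s}$, and handling each separately as above.

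An alternative route would compare tails: the density behaves like $\gamma/(\pi y^2)$ as $|y|\to\infty$, which pins down $\gamma$, and then $\sigma$ and $\mu$ could be recovered separately. The characteristic-function argument is cleaner, so that is the one I would use.
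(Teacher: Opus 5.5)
Your proof is correct and follows the paper's argument essentially step for step. You equate the characteristic functions, take moduli to get $\tfrac12(\sigma^2-\sigma_0^2)s^2+(\gamma-\gamma_0)s=0$ for all $s>0$, which gives $\gamma=\gamma_0$ and, by nonnegativity, $\sigma=\sigma_0$, and then deduce $\mu=\mu_0$ from $e^{i(\mu-\mu_0)s}=1$ for all $s$; treating the modulus and phase separately to avoid branch issues of the complex logarithm is exactly what the paper does implicitly.
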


\begin{proof}
The characteristic function of $Y\sim\mathcal{V}(\mu,\sigma,\gamma)$ is
$$
\varphi(t;\theta)
=
\exp\left(
i\mu t-\frac{1}{2}\sigma^2t^2-\gamma|t|
\right),
\qquad t\in\mathbb{R}.
$$
Suppose that two parameter vectors $\theta=(\mu,\sigma,\gamma)^\prime$ and
$\theta_0=(\mu_0,\sigma_0,\gamma_0)^\prime$ generate the same distribution.
Then their characteristic functions are equal for all $t\in\mathbb{R}$:
$$
\exp\left(
i\mu t-\frac{1}{2}\sigma^2t^2-\gamma|t|
\right)
=
\exp\left(
i\mu_0 t-\frac{1}{2}\sigma_0^2t^2-\gamma_0|t|
\right).
$$
Taking absolute values gives
$$
\exp\left(
-\frac{1}{2}\sigma^2t^2-\gamma|t|
\right)
=
\exp\left(
-\frac{1}{2}\sigma_0^2t^2-\gamma_0|t|
\right).
$$
Hence, for all $t>0$,
$\frac{1}{2}(\sigma^2-\sigma_0^2)t^2+(\gamma-\gamma_0)t=0$ and  dividing by $t$ gives
$\frac{1}{2}(\sigma^2-\sigma_0^2)t+(\gamma-\gamma_0)=0
$ for every $t>0$. Therefore
$\sigma^2=\sigma_0^2$ and $\gamma=\gamma_0$. 
Since $\sigma,\sigma_0\geq0$, this implies $\sigma=\sigma_0$.

With $\sigma=\sigma_0$ and $\gamma=\gamma_0$, equality of the characteristic
functions reduces to
$$
\exp(i\mu t)=\exp(i\mu_0 t),
\qquad t\in\mathbb{R}.
$$
Thus $\exp(i(\mu-\mu_0)t)=1$
for all $t\in\mathbb{R}$, which is only possible if $\mu=\mu_0$. Hence $\theta=\theta_0$, proving identification.
\end{proof}

\begin{lemma}[Log-likelihood envelope]\label{lem:MLE-envelope}
Let
$$
\Theta
=
\{(\mu,\sigma,\gamma):|\mu|\leq\mu_{\max},0<\sigma\leq\sigma_{\max},
0<\gamma_{\min}\leq\gamma\leq\gamma_{\max}\}.
$$
Then there exist finite constants $C_0,C_1,C_2$ such that, for all
$\theta\in\Theta$ and all $y\in\mathbb{R}$,
$$
|\ell(y;\theta)|
\leq
C_0+C_1\log(C_2+|y|),
$$
where $\ell(y;\theta)=\log f_Y(y;\theta)$. Moreover,
$\mathbb{E}_{\theta_0}\left[C_0+C_1\log(C_2+|Y|)
\right]<\infty$, and consequently,
$\mathbb{E}_{\theta_0}\left[\sup_{\theta\in\Theta}|\ell(Y;\theta)|\right]<\infty$.
\end{lemma}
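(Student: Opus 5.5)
The plan is to bound $f_Y(y;\theta)$ from above and from below, uniformly over $\Theta$, working directly from the convolution representation $f_Y(y;\theta)=\mathbb{E}[c_\gamma(y-\mu-Z)]$ rather than from $\mathsf{u}$. Here $Z\sim\mathcal N(0,\sigma^2)$ and $c_\gamma(x)=\gamma/\{\pi(\gamma^2+x^2)\}$ is the Cauchy density. This representation is convenient because it stays valid and uniform as $\sigma\downarrow0$, which the parameter set in the lemma permits. The lower bound on $\gamma$ is exactly what makes both bounds uniform.

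\emph{Upper bound.} Since $c_\gamma(x)\leq 1/(\pi\gamma)\leq 1/(\pi\gamma_{\min})$ for all $x$, averaging over $Z$ gives $f_Y(y;\theta)\leq 1/(\pi\gamma_{\min})$. Hence
\[
\ell(y;\theta)\leq -\log(\pi\gamma_{\min})
\]
for all $y$ and all $\theta\in\Theta$.

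\emph{Lower bound.} I restrict the expectation to the event $\{|Z|\leq\sigma\}$. Its probability is $p_1=\Pr(|N(0,1)|\leq1)>0$, which does not depend on $\sigma$. On this event,
\[
|y-\mu-Z|\leq |y|+\mu_{\max}+\sigma_{\max}\equiv |y|+K .
\]
Since $c_\gamma(x)$ is decreasing in $|x|$ and $\gamma\in[\gamma_{\min},\gamma_{\max}]$,
\[
f_Y(y;\theta)\geq p_1\,\frac{\gamma_{\min}}{\pi\{\gamma_{\max}^2+(|y|+K)^2\}}.
\]
Using $\gamma_{\max}^2+(|y|+K)^2\leq(\gamma_{\max}+K+|y|)^2$, this yields
\[
-\ell(y;\theta)\leq -\log\!\left(\frac{p_1\gamma_{\min}}{\pi}\right)+2\log(C_2+|y|),\qquad C_2=\gamma_{\max}+K .
\]
I take $C_2\geq1$ so that the logarithm is nonnegative. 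Combining the two one-sided bounds gives $|\ell(y;\theta)|\leq C_0+C_1\log(C_2+|y|)$ with $C_1=2$ and $C_0$ the larger of the two constants in absolute value.

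\emph{Integrability.} Under $\theta_0$ the density satisfies $f_Y(y;\theta_0)\leq 1/(\pi\gamma_0)$. Its tails obey $f_Y(y;\theta_0)=O(|y|^{-2})$, either from the limit $\tilde y^2\mathsf u\to\sqrt{2/\pi}\,\sigma\gamma$ used in the proof of Theorem~\ref{thm:CondDensity}, or directly from $\Pr(|Y|>t)\leq\Pr(|Z|>t/2)+\Pr(|X|>t/2)=O(1/t)$. Then
\[
\mathbb{E}\log(C_2+|Y|)=\log C_2+\int_0^\infty \frac{\Pr(|Y|>t)}{C_2+t}\,dt ,
\]
and the integrand is $O(t^{-2})$ at infinity, so the expectation is finite. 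Finally, $\sup_{\theta\in\Theta}|\ell(Y;\theta)|$ is dominated by the envelope. It is measurable because $\ell$ is continuous in $\theta$, so the supremum can be taken over a countable dense subset. The final claim therefore follows.

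The only delicate point is the lower bound. It must not deteriorate as $\sigma\to0$ and must not depend on the unbounded parts of the Gaussian. Truncating to $\{|Z|\leq\sigma\}$ handles both issues at once.
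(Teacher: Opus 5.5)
Your proposal is correct and follows essentially the same route as the paper: the uniform upper bound $f_Y\le 1/(\pi\gamma_{\min})$, a lower bound from truncating the Gaussian to an event of fixed positive probability ($\{|Z|\le\sigma\}$ in your version, $\{|Z_\sigma|\le\sigma_{\max}\}$ in the paper's), and a finite logarithmic moment of $|Y|$ under Cauchy tails. The remaining differences are minor: you check the logarithmic moment by the tail-integral formula rather than the paper's decomposition $Y=\mu_0+\sigma_0Z+\gamma_0X$, and you also verify measurability of the supremum, which the paper leaves implicit.
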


\begin{proof}
Write the Voigt density as the convolution of a Gaussian density and a
Cauchy density:
$$
f_Y(y;\theta)
=
\int_{-\infty}^{\infty}
c_\gamma(y-\mu-z)\phi_\sigma(z)dz,
$$
where
$c_\gamma(x)=\frac{1}{\pi}\frac{\gamma}{x^2+\gamma^2}$
and $\phi_\sigma$ denotes the $\mathcal{N}(0,\sigma^2)$ density, with the
degenerate interpretation when $\sigma=0$.

We first obtain a uniform upper bound. Since
$$
c_\gamma(x)\leq \frac{1}{\pi\gamma}
\leq
\frac{1}{\pi\gamma_{\min}},
$$
we have $f_Y(y;\theta)\leq \frac{1}{\pi\gamma_{\min}}$
for all $y$ and all $\theta\in\Theta$. Therefore
$\ell(y;\theta)\leq C_U$ for a finite constant $C_U$.

Next, we obtain a uniform lower bound. Let $Z_\sigma\sim\mathcal{N}(0,\sigma^2)$.
For $0<\sigma\leq\sigma_{\max}$,
$$
\Pr(|Z_\sigma|\leq\sigma_{\max})
\geq
\Pr(|Z_{\sigma_{\max}}|\leq\sigma_{\max})
=
\Pr(|Z|\leq1)
\equiv p_0>0,
$$
where $Z\sim\mathcal{N}(0,1)$. Hence
$$
f_Y(y;\theta) = \mathbb{E}
\left[ c_\gamma(y-\mu-Z_\sigma) \right]
\geq
\mathbb{E}\left[c_\gamma(y-\mu-Z_\sigma)
\mathbf{1}\{|Z_\sigma|\leq\sigma_{\max}\}\right].
$$
On the event $|Z_\sigma|\leq\sigma_{\max}$,
$|y-\mu-Z_\sigma|\leq |y|+\mu_{\max}+\sigma_{\max}$. 
For $\gamma\in[\gamma_{\min},\gamma_{\max}]$, we therefore obtain
$$
c_\gamma(y-\mu-Z_\sigma)
\geq
\frac{\gamma_{\min}}
{\pi\left\{\gamma_{\max}^2+(|y|+\mu_{\max}+\sigma_{\max})^2\right\}}.
$$
It follows that
$$
f_Y(y;\theta)
\geq
\frac{p_0\gamma_{\min}}
{\pi\left\{\gamma_{\max}^2+(|y|+\mu_{\max}+\sigma_{\max})^2\right\}}.
$$
Thus, for finite constants $C_L$ and $C_2$,
$-\ell(y;\theta)\leq C_L+2\log(C_2+|y|)$.
Combining the upper and lower bounds gives
$|\ell(y;\theta)|
\leq
C_0+C_1\log(C_2+|y|)$ 
for suitable finite constants $C_0,C_1,C_2$, uniformly over
$\theta\in\Theta$.

It remains to verify integrability under the true distribution. Under
$\theta_0$, we may write
$$
Y=\mu_0+\sigma_0Z+\gamma_0X,
$$
where $Z\sim\mathcal{N}(0,1)$ and $X\sim\operatorname{Cauchy}(0,1)$ are
independent, with the obvious convention when one scale parameter is zero.
Since
$$
\log(C_2+|Y|)
\leq
\log(C_2+|\mu_0|+\sigma_0|Z|+\gamma_0|X|),
$$
and both Gaussian and Cauchy random variables have finite logarithmic moments,
we have
$\mathbb{E}_{\theta_0}\log(C_2+|Y|)<\infty$. 
Therefore the envelope is integrable under the true distribution, and
$$
\mathbb{E}_{\theta_0}
\left[
\sup_{\theta\in\Theta}|\ell(Y;\theta)|
\right]
<\infty.
$$
\end{proof}

\begin{lemma}[Smoothness and derivative envelopes]\label{lem:MLE-derivative-envelope}
Let $K\subset\Theta$ be a compact subset for which 
$0<\sigma_{\min}\leq\sigma\leq\sigma_{\max}<\infty$ and $0<\gamma_{\min}\leq\gamma\leq\gamma_{\max}<\infty$
for all $\theta=(\mu,\sigma,\gamma)^\prime\in K$.  Then
$\ell(y;\theta)=\log f_Y(y;\theta)$ is twice continuously differentiable in
$\theta$ on $K$. Moreover, there exists a finite constant
$C_{K}$ such that, for all $y\in\mathbb R$,
$$
\sup_{\theta\in K}\|s(y;\theta)\|
\leq C_{K},
\qquad
\sup_{\theta\in K}\|H(y;\theta)\|
\leq C_{K},
$$
where
$s(y;\theta)=\frac{\partial\ell(y;\theta)}{\partial\theta}$ and $H(y;\theta)=
\frac{\partial^2\ell(y;\theta)}{\partial\theta\partial\theta^\prime}$.
Moreover, for each pair of parameter indices $i,j$, there exists an
integrable function $M_K(y)$ such that
$$
\sup_{\theta\in K}\left|\frac{\partial f_Y(y;\theta)}{\partial\theta_i} \right|
+\sup_{\theta\in K}\left|
\frac{\partial^2 f_Y(y;\theta)}
{\partial\theta_i\partial\theta_j}\right|\leq M_K(y).
$$
Consequently, if $\theta_0\in K$ then
$$
\mathbb{E}_{\theta_0}\|s(Y;\theta_0)\|^2<\infty,
\qquad
\mathbb{E}_{\theta_0}
\left[
\sup_{\theta\in K}\|H(Y;\theta)\|
\right]
<\infty.
$$
\end{lemma}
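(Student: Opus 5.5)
The plan is to work directly from the closed-form expressions in Lemma~\ref{lem:ScoreHessian}. By Corollary~\ref{cor:VoigtAlgebraicClosure}, $\ell(y;\theta)=\log\mathsf{u}(y;\theta)-\log\sigma-\log\sqrt{2\pi}$. Here $\mathsf{u}$ is the real part of the entire function $\mathsf{e}$ composed with $w_{y,\theta}$, which is smooth in $\theta$ for $\sigma>0$, and $\mathsf{u}>0$ by Lemma~\ref{lem:erfcx}(ii). Hence $\ell$ is $C^\infty$ in $\theta$ on $K$, and only the uniform bounds require work. The score entries and, via the recursive formulas in Lemma~\ref{lem:ScoreHessian}, all Hessian entries are polynomials in $\tilde y$, $\gamma$, $\sigma^{-1}$, $s_\mu,s_\sigma,s_\gamma$ and $\mathsf{v}/\mathsf{u}$. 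So it suffices to show that $s_\mu$, $s_\sigma$, $s_\gamma$, $\tilde y s_\gamma$, $\tilde y H_{\mu\mu}$ and $\tilde y H_{\mu\sigma}$ are bounded uniformly in $(y,\theta)\in\mathbb{R}\times K$.

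The key step is the uniform asymptotics of $\mathsf{e}(w)$ along Voigt lines. Write $w=\alpha+i\beta$ with $\alpha=\gamma/(\sigma\sqrt2)\in[\alpha_{\min},\alpha_{\max}]\subset(0,\infty)$. The representation $\mathsf{e}(w)=\pi^{-1}\int e^{-t^2}(w+it)^{-1}dt$ yields the expansion
$$
\mathsf{e}(w)=\frac{1}{\sqrt\pi\,w}+\frac{1}{2\sqrt\pi\,w^3}+O(|w|^{-5})
$$
uniformly in $\operatorname{Re}w\ge\alpha_{\min}$. To obtain it, expand $(w+it)^{-1}$ geometrically, and handle the region $|t|>|w|/2$ using $|w+it|\ge\alpha_{\min}$ together with Gaussian decay. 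This gives $\mathsf{u}\sim \alpha/(\sqrt\pi|w|^2)$, of exact order $\beta^{-2}$, and $\mathsf{v}\sim-\beta/(\sqrt\pi|w|^2)$. On compact $\beta$-ranges everything is continuous and $\mathsf{u}$ is bounded below, so boundedness is automatic there. The substantive issue is the tails, which I expect to be the main obstacle: the scores contain the large terms $\tilde y^2$ and $\tilde y\mathsf{v}/\mathsf{u}\approx-\tilde y^2$, whose cancellation must be tracked to second order. I would therefore apply the expansion to $\mathsf{v}/\mathsf{u}$ directly. Since $\mathsf{e}(w)=(\sqrt\pi w)^{-1}(1+(2w^2)^{-1}+O(|w|^{-4}))$, one has $\gamma\mathsf{v}/\mathsf{u}=-\tilde y+O(1/\tilde y)$, so $s_\mu=O(1/\tilde y)$, matching the redescending behaviour in Corollary~\ref{cor:CondExpect}. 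For $s_\sigma$, combine $(\tilde y^2-\gamma^2-\sigma^2)+2\gamma\tilde y\,\mathsf{v}/\mathsf{u}+\sqrt{2/\pi}\,\sigma\gamma/\mathsf{u}$. Each term is $O(\tilde y^2)$, the leading terms cancel, and the bracket is bounded. The constant term survives, consistent with the limit $f_{Z|Y}\to f_Z$ in Theorem~\ref{thm:CondDensity}. The same computation applies to $s_\gamma$. For the products $\tilde y s_\gamma$ and $\tilde y H$, I would not push the expansion further. Instead I would use the alternative route of writing $f$ and its derivatives as integrals against the Cauchy kernel. For example, $\partial_\gamma c_\gamma(x)$ and $\partial_\sigma,\partial_\mu$ (via the heat identity) acting on the Gaussian yield derivatives of $f$ that are $O(\tilde y^{-2})$ uniformly on $K$, while $f\ge c\,\tilde y^{-2}$ by the lower bound in Lemma~\ref{lem:MLE-envelope}. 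Thus every ratio $\partial f/f$ and $\partial^2 f/f$ is bounded, and $H=\partial^2f/f-ss'$ is bounded. This convolution route is cleaner, and I would adopt it as the main argument, using the erfcx expansion only as a check.

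Concretely, for the convolution bounds I would show that $\sup_{\theta\in K}|\partial^k_{\theta}c_\gamma * \phi_\sigma(\tilde y)|\le C(1+\tilde y^2)^{-1}$ for $k\le2$. The idea is to move all $\mu,\sigma$ derivatives onto $\phi_\sigma$, using that $\partial_\mu$ and $\partial_\sigma$ of $\phi_\sigma$ are bounded multiples of polynomial-times-Gaussian with moments bounded on $[\sigma_{\min},\sigma_{\max}]$. The $\gamma$ derivatives go onto $c_\gamma$, where $|\partial_\gamma^j c_\gamma(x)|\le C(\gamma^2+x^2)^{-1}$. Then split the integral at $|z|\le|\tilde y|/2$: on that region the Cauchy factor is $O(\tilde y^{-2})$, and the complement is $O(e^{-c\tilde y^2})$. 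This same bound yields $M_K(y)=C(1+(|y|-\mu_{\max})_+^2)^{-1}$, which is Lebesgue integrable, so differentiation under the integral sign (for the information identity) is justified. Finally, bounded $s$ and $H$ trivially give $\mathbb{E}_{\theta_0}\|s\|^2<\infty$ and the integrable supremum of $\|H\|$.
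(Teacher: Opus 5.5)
Your main argument, the convolution route you say you would adopt, is correct, but it is not the paper's proof.

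\textbf{How the paper argues.} The paper stays with the closed forms of Lemma~\ref{lem:ScoreHessian}. It bounds everything on $|\tilde y|\le R$ by continuity. In the tails it substitutes uniform large-$|\tilde y|$ expansions of $\mathsf{u}$ and $\mathsf{v}/\mathsf{u}$ into the score and Hessian formulas. This gives explicit asymptotics: $s_\mu\sim 2/\tilde y$, $s_\gamma\to 1/\gamma$, $s_\sigma\sim 6\sigma\tilde y^{-2}$, and an order for each $H_{ij}$. Only afterwards does it derive the envelope $M_K$, from $\partial_i f=f s_i$, $\partial_i\partial_j f=f(H_{ij}+s_is_j)$ and $\sup_K f\le C_K/(1+y^2)$.

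\textbf{How you argue.} You reverse the logic. You bound $\nabla f$ and $\nabla^2 f$ directly by $C(1+\tilde y^2)^{-1}$: the $\mu,\sigma$ derivatives go on the Gaussian factor, the $\gamma$ derivatives go on $c_\gamma$, and you split at $|z|\le|\tilde y|/2$. You then divide by the lower bound from Lemma~\ref{lem:MLE-envelope}, so $s=\nabla f/f$ and $H=\nabla^2 f/f-ss'$ are bounded.

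\textbf{Comparison.} Your route is more elementary and more robust. It needs no complex asymptotics and no second-order cancellation bookkeeping between individually unbounded terms. Uniformity over $K$ and the integrable envelope come out at once. What it does not give you is the explicit tail limits. The paper reuses $s_\gamma\to1/\gamma_0$ and $s_\sigma\to0$ from this proof in Lemma~\ref{lem:non-singular-I}, so with your approach those would need a separate argument.

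\textbf{The erfcx material kept ``as a check'' is wrong in two places; delete or repair it.}

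First, the correction term has the wrong sign. Expanding $(w+it)^{-1}$ and using $\int t^2e^{-t^2}dt=\sqrt{\pi}/2$ gives $\mathsf{e}(w)=\frac{1}{\sqrt{\pi}\,w}-\frac{1}{2\sqrt{\pi}\,w^{3}}+O(|w|^{-5})$. With your sign one gets $\gamma\mathsf{v}/\mathsf{u}=-\tilde y-2\sigma^2/\tilde y$. That yields $s_\mu\approx-2/\tilde y$, $s_\gamma\to-1/\gamma$, and a surviving constant $-2\sigma^2$ in the $s_\sigma$ bracket. That constant would force $\mathbb{V}(Z|Y=y)=\sigma^2+\sigma^3s_\sigma-(\sigma^2s_\mu)^2\to-\sigma^2$. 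With the correct sign the bracket tends to zero, $s_\sigma\sim6\sigma\tilde y^{-2}$, and the conditional variance tends to $\sigma^2$. So the ``constant term survives'' remark is false, and it contradicts the conditional-moment limits rather than confirming them.

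Second, your reduction list is invalid. $\tilde y s_\gamma\sim\tilde y/\gamma$ is unbounded. $H_{\mu\gamma}$ is bounded only because $\tilde y s_\gamma$ cancels against $\mathsf{v}/\mathsf{u}\approx-\tilde y/\gamma$, which is also unbounded. Moreover, $H_{\gamma\sigma}$ needs $\tilde y H_{\mu\gamma}$ bounded, which your list omits.

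The convolution argument never uses any of this, so the proof stands once these remarks are removed.
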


\begin{proof}
Let $\tilde y=y-\mu$ and write
$$
w=w_{y,\theta}
=
\frac{\gamma+i\tilde y}{\sigma\sqrt{2}},
\qquad
\mathsf{e}(w)=\mathsf{u}(y;\theta)+i\mathsf{v}(y;\theta).
$$
By Lemma \ref{lem:erfcx}, $\mathsf{e}(w)$ is entire and
$\mathsf{u}(y;\theta)>0$ for all real $y$. Since $\sigma$ is bounded away
from zero on $K$, the map
$\theta\mapsto w_{y,\theta}$ is smooth on $K$. Hence
$\theta\mapsto f_Y(y;\theta)$ and $\theta\mapsto\ell(y;\theta)$ are twice
continuously differentiable on $K$.

It remains to establish the uniform bounds. The explicit score and Hessian
formulas in Lemma \ref{lem:ScoreHessian} express all entries of
$s(y;\theta)$ and $H(y;\theta)$ as continuous functions of
$\tilde y$, $\sigma$, $\gamma$, $\mathsf{u}(y;\theta)$, and
$\mathsf{v}(y;\theta)$, with denominators involving only powers of
$\sigma$ and $\mathsf{u}(y;\theta)$. On any set with $|\tilde y|\leq R$,
these expressions are uniformly bounded over $K$ because
$K$ is compact, $\sigma$ is bounded away from zero, and
$\mathsf{u}(y;\theta)>0$.

For the tails, use the standard large-argument expansion of
$\operatorname{erfcx}$ on the Voigt line. Uniformly for
$\theta\in K$, as $|\tilde y|\rightarrow\infty$,
$$
\mathsf{u}(y;\theta) = \sqrt{\tfrac{2}{\pi}}\sigma\gamma|\tilde y|^{-2}\{1+O(|\tilde y|^{-2})\},
$$
and
$$
\frac{\mathsf{v}(y;\theta)}{\mathsf{u}(y;\theta)}
= -\frac{\tilde y}{\gamma} + \frac{2\sigma^2}{\gamma\tilde y} + O(|\tilde y|^{-3}).
$$
Substituting these expansions into the score formulas in
Lemma \ref{lem:ScoreHessian} gives, uniformly over $K$,
$$
s_\mu(y;\theta)=\frac{2}{\tilde y}+O(|\tilde y|^{-3}),
\qquad 
s_\gamma(y;\theta)=\frac{1}{\gamma}-\frac{2\gamma}{\tilde y^2}+O(|\tilde y|^{-4}),
$$
and
$s_\sigma(y;\theta)=6\sigma\tilde y^{-2}+O(|\tilde y|^{-4})$.
Since $\gamma$ and $\sigma$ are bounded above and away from zero on
$K$, the score is uniformly bounded in the tails.

The Hessian bounds follow in the same way from the Hessian formulas in
Lemma \ref{lem:ScoreHessian}. The corresponding tail expansions are, uniformly
over $K$,
$$
H_{\mu\mu}(y;\theta)=O(|\tilde y|^{-2}),
\qquad
H_{\gamma\gamma}(y;\theta)=-\frac{1}{\gamma^2}+O(|\tilde y|^{-2}),
$$
$$
H_{\mu\gamma}(y;\theta)=O(|\tilde y|^{-3}),
\qquad
H_{\mu\sigma}(y;\theta)=O(|\tilde y|^{-2}),
$$
$$
H_{\gamma\sigma}(y;\theta)=O(|\tilde y|^{-2}),
\qquad
H_{\sigma\sigma}(y;\theta)=O(|\tilde y|^{-1}).
$$
Thus all Hessian entries are uniformly bounded in the tails. Combining the
boundedness on compact sets in $\tilde y$ with the tail bounds proves that
there is a finite constant $C_{K}$ such that
$$
\sup_{\theta\in K}\|s(y;\theta)\|
\leq C_{K},
\qquad
\sup_{\theta\in K}\|H(y;\theta)\|
\leq C_{K},\quad \text{for all}\quad y\in\mathbb R.
$$

Finally, we establish the density-derivative envelopes. From the convolution
representation and the compactness of $K$, there is a constant $C_K<\infty$
such that
$$
\sup_{\theta\in K} f_Y(y;\theta)
\leq
\frac{C_K}{1+y^2}.
$$
Indeed, the Cauchy component controls the tails uniformly over $K$, while the
Gaussian component is uniformly bounded in scale on $K$. Since $\frac{\partial f_Y(y;\theta)}{\partial\theta_j}=f_Y(y;\theta)s_j(y;\theta)$, 
and
$$
\frac{\partial^2 f_Y(y;\theta)}
{\partial\theta_i\partial\theta_j}
=
f_Y(y;\theta)
\left[
H_{ij}(y;\theta)+s_i(y;\theta)s_j(y;\theta)
\right],
$$
the uniform bounds on $s$ and $H$ imply that
$$
\sup_{\theta\in K}
\left|
\frac{\partial f_Y(y;\theta)}{\partial\theta_i}
\right|
+
\sup_{\theta\in K}
\left|
\frac{\partial^2 f_Y(y;\theta)}
{\partial\theta_i\partial\theta_j}
\right|
\leq
\frac{C_K}{1+y^2}.
$$
The right-hand side is integrable on $\mathbb R$, so differentiation under the
integral sign is valid on $K$. The stated moment bounds follow
from the uniform bounds on $s$ and $H$.
\end{proof}

\begin{lemma}[Positive definiteness of the Fisher information]
\label{lem:non-singular-I}
Suppose Assumption \ref{assu:Compact} is satisfied and
$\theta_0=(\mu_0,\sigma_0,\gamma_0)^\prime\in\Theta$. Let
$\mathcal{I}_{\theta_0}=\mathbb{E}_{\theta_0}
\left[s(Y;\theta_0)s(Y;\theta_0)^\prime\right]$ 
be the per-observation Fisher information matrix. 
Then $\mathcal{I}_{\theta_0}$ is positive definite.
\end{lemma}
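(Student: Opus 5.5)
The plan is the standard one: $\mathcal{I}_{\theta_0}$ is positive semidefinite as an expected outer product, and finite by Lemma \ref{lem:MLE-derivative-envelope}. So it suffices to show that for any $a=(a_\mu,a_\sigma,a_\gamma)^\prime\neq0$,
$$
a^\prime\mathcal{I}_{\theta_0}a=\mathbb{E}_{\theta_0}\big[(a^\prime s(Y;\theta_0))^2\big]>0 .
$$
Since $f_Y(\cdot;\theta_0)>0$ everywhere (Lemma \ref{lem:erfcx}(ii)) and $a^\prime s(y;\theta_0)$ is continuous in $y$, a zero expectation would force $a^\prime s(y;\theta_0)=0$ for every real $y$. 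Equivalently, it would force
$$
a_\mu\partial_\mu f+a_\sigma\partial_\sigma f+a_\gamma\partial_\gamma f\equiv0 \quad\text{in } y .
$$
I will rule out this linear dependence.

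The cleanest route is through the characteristic function. Differentiation under the integral is justified by the envelope $M_K$ in Lemma \ref{lem:MLE-derivative-envelope}. Hence the Fourier transform of $\partial_{\theta_j}f$ is $\partial_{\theta_j}\varphi(t;\theta_0)$, where
$$
\varphi(t;\theta)=\exp\!\big(i\mu t-\tfrac12\sigma^2t^2-\gamma|t|\big)
$$
is the characteristic function already used in Lemma \ref{lem:MLE-identification}. A vanishing linear combination of the densities then implies, for all $t$,
$$
\varphi(t;\theta_0)\big(i a_\mu t-a_\sigma\sigma_0 t^2-a_\gamma|t|\big)=0 .
$$
Because $\varphi$ never vanishes, the bracket is identically zero. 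Restricting to $t>0$ gives $i a_\mu t-a_\sigma\sigma_0t^2-a_\gamma t=0$. The imaginary part yields $a_\mu=0$. The real part is the polynomial $-a_\sigma\sigma_0t^2-a_\gamma t$, which vanishes on $(0,\infty)$ only if $a_\sigma\sigma_0=0$ and $a_\gamma=0$. Since $\sigma_0\geq\sigma_{\min}>0$, also $a_\sigma=0$. This contradicts $a\neq0$, so $\mathcal{I}_{\theta_0}$ is positive definite.

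The main point requiring care is the passage from ``$a^\prime s=0$ almost surely'' to the Fourier identity. Two facts are needed for this. First, $\partial_{\theta_j}f(\cdot;\theta_0)\in L^1$, which follows from the bound $C_K/(1+y^2)$ in Lemma \ref{lem:MLE-derivative-envelope}. Second, the interchange
$$
\int e^{ity}\partial_{\theta_j}f\,dy=\partial_{\theta_j}\int e^{ity}f\,dy ,
$$
which follows by dominated convergence with the same envelope on a compact neighbourhood of $\theta_0$. If $\theta_0$ lies on the boundary of $\Theta$, I would take one-sided derivatives, or note that the formulas are analytic in $\theta$ for $\sigma,\gamma>0$ and apply the argument on a small compact set $K$ around $\theta_0$ inside $\{\sigma>0,\gamma>0\}$.

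An alternative check, not needed but confirming, uses the structure from Theorem \ref{thm:MLE-consistent-asN}. Since $s_\mu$ is odd and $s_\sigma,s_\gamma$ are even in $\tilde y$, the matrix is block diagonal. In the $\mu$ block, $\mathbb{E}s_\mu^2>0$ because $s_\mu\not\equiv0$. The $(\sigma,\gamma)$ block reduces to linear independence of $s_\sigma$ and $s_\gamma$. This also follows from their tail expansions: $s_\gamma\to1/\gamma$, whereas $s_\sigma=O(\tilde y^{-2})$.
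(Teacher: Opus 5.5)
Your proof is correct, but it gets linear independence of the scores by a different mechanism than the paper.

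The paper works in $y$-space with the explicit formulas of Lemma~\ref{lem:ScoreHessian}:
parity forces $a_\mu=0$, since $s_\mu$ is odd while $s_\sigma$ and $s_\gamma$ are even in $\tilde y$;
the tail limits $s_\gamma\to1/\gamma_0$ and $s_\sigma\to0$ then force $a_\gamma=0$;
finally $s_\sigma\not\equiv0$ forces $a_\sigma=0$.
This is essentially your closing ``alternative check''.

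You instead move to Fourier space, where $\sum_j a_j\partial_{\theta_j}f\equiv0$ becomes $(ia_\mu t-a_\sigma\sigma_0t^2-a_\gamma|t|)\varphi(t;\theta_0)\equiv0$. Separating real and imaginary parts, and using that $t$ and $t^2$ are linearly independent on $(0,\infty)$, eliminates all three coefficients at once. This is just the differentiated form of the identification argument in Lemma~\ref{lem:MLE-identification}.

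Your route needs neither symmetry nor the large-argument behaviour of $\operatorname{erfcx}$ on the Voigt line. It also extends directly to any family for which the parameter derivatives of $\log\varphi$ are linearly independent functions of $t$. The paper's route reuses expansions it needs anyway for the envelope lemma, and its parity step is the same fact that delivers the block-diagonal structure of $\mathcal{I}_{\theta_0}$.

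One refinement concerns the interchange of $\partial_{\theta_j}$ and the Fourier integral. As written, you justify it with the $C_K/(1+y^2)$ envelope of Lemma~\ref{lem:MLE-derivative-envelope}. In the paper that envelope is derived from the bounded scores, and hence from the very tail expansions your approach otherwise avoids. To make the proof self-contained:
\begin{enumerate}
\item Differentiate the inversion formula $f(y;\theta)=\frac{1}{2\pi}\int e^{-ity}\varphi(t;\theta)\,dt$ under the integral in $t$.
\item Note that domination by $C(1+t^2)e^{-ct^2}$, for some $c>0$ and uniformly for $\theta$ near $\theta_0$, is immediate there.
\item Conclude using injectivity of the Fourier transform on $L^1$ and continuity of $\partial_{\theta_j}\varphi$ in $t$.
\end{enumerate}
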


\begin{proof}
Since $a^\prime\mathcal{I}_{\theta_0}a=\mathbb{E}_{\theta_0}
\left[\{a^\prime s(Y;\theta_0)\}^2\right]\geq0$ for any $a\in\mathbb{R}^3$, it follows that $\mathcal{I}_{\theta_0}$ is positive semidefinite, and to prove positive
definiteness, it suffices to show that
$a^\prime s(y;\theta_0)=0$ for 
$\mathbb{P}_{\theta_0}$-almost all $y$ implies $a=0$.

Consider
$$ 
L(y) = a_1 s_\mu(y;\theta_0) +a_2 s_\sigma(y;\theta_0) + a_3 s_\gamma(y;\theta_0).
$$
Because the score components are continuous in $y$ and the Voigt density is
strictly positive on $\mathbb{R}$, $L(Y)=0$ a.s.-$\mathbb{P}_{\theta_0}$
implies $L(y)=0$ for all $y\in\mathbb{R}$. Otherwise, continuity would imply
that $L$ is nonzero on an open interval with positive probability.

Write $\tilde y=y-\mu_0$. From the score formulas in Lemma
\ref{lem:ScoreHessian}, $s_\mu(y;\theta_0)$ is odd in $\tilde y$, whereas
$s_\sigma(y;\theta_0)$ and $s_\gamma(y;\theta_0)$ are even.
Evaluating $L(y)=0$ at $y_+=\mu_0+r$ and $y_-=\mu_0-r$ gives
\begin{align*}
a_1s_\mu(y_+;\theta_0)
+
a_2s_\sigma(y_+;\theta_0)
+
a_3s_\gamma(y_+;\theta_0)
&=0,
\\
-a_1s_\mu(y_+;\theta_0)
+
a_2s_\sigma(y_+;\theta_0)
+
a_3s_\gamma(y_+;\theta_0)
&=0.
\end{align*}
Subtracting the two expressions yields
$2a_1s_\mu(y_+;\theta_0)=0$ for all $r$. Since $s_\mu(y;\theta_0)$ is not identically zero, it follows
that $a_1=0$.

With $a_1=0$, we have
$ a_2s_\sigma(y;\theta_0)+a_3s_\gamma(y;\theta_0)=0$
for all $y$. From the tail expansions in Lemma
\ref{lem:MLE-derivative-envelope}, 
$ s_\gamma(y;\theta_0)\rightarrow 1/\gamma_0$ and 
$s_\sigma(y;\theta_0)\rightarrow 0$, as $|\tilde y|\rightarrow\infty$.
Taking the limit, 
$\frac{a_3}{\gamma_0}=0$, proves $a_3=0$ since $\gamma_0>0$ under Assumption \ref{assu:Compact}.

The reduced identity is now
$a_2s_\sigma(y;\theta_0)=0$ for all $y$, and since 
$s_\sigma(y;\theta_0) = 6\sigma_0\tilde y^{-2} +O(|\tilde y|^{-4})$ is not identically zero ($\sigma_0>0$) under Assumption \ref{assu:Compact}, we can also conclude $a_2=0$. This completes the proof.
\end{proof}

\noindent{}\textbf{Proof of Theorem \ref{thm:MLE-consistent-asN}.}
Let
$\ell(y;\theta)=\log f_Y(y;\theta)$,
$Q_n(\theta)=\frac{1}{n}\sum_{i=1}^n\ell(Y_i;\theta)$, and
$Q(\theta)=\mathbb{E}_{\theta_0}[\ell(Y;\theta)]$. 
We first prove consistency. By Lemma \ref{lem:MLE-envelope},
$\ell(Y;\theta)$ has an integrable envelope uniformly over
$\theta\in\Theta$, and by continuity of the Voigt density,
$\ell(y;\theta)$ is continuous in $\theta$ on $\Theta$ for each fixed $y$.
Since $\Theta$ is compact, the uniform law of large numbers gives
$
\sup_{\theta\in\Theta}|Q_n(\theta)-Q(\theta)|
\overset{p}{\rightarrow}0$.
Moreover,
$$
Q(\theta)-Q(\theta_0)=\mathbb{E}_{\theta_0}
\left[\log\frac{f_Y(Y;\theta)}{f_Y(Y;\theta_0)}\right]=-\operatorname{KL}
\left(f_Y(\cdot;\theta_0),f_Y(\cdot;\theta)\right)\leq0.
$$
By Lemma \ref{lem:MLE-identification}, equality holds only when
$\theta=\theta_0$. Hence $Q(\theta)$ is uniquely maximized at
$\theta_0$. The argmax theorem therefore implies
$\hat\theta_n\overset{p}{\rightarrow}\theta_0$.

We next prove asymptotic normality. Suppose that $\theta_0$ is an interior
point of $\Theta$. Let
$K$ be a compact convex neighborhood of $\theta_0$ contained in the interior
of the parameter space. By consistency, $\Pr(\hat\theta_n\in K)\rightarrow 1$. 
On this event, the first-order condition for the MLE is
$0 = \frac{1}{n}\sum_{i=1}^n s(Y_i;\hat\theta_n)$, 
where 
$s(y;\theta)=\frac{\partial\ell(y;\theta)}{\partial\theta}$.
A Taylor expansion around $\theta_0$ gives
$$
0
=
\frac{1}{\sqrt n}\sum_{i=1}^n s(Y_i;\theta_0)
+
A_n\sqrt n(\hat\theta_n-\theta_0),
$$
where
$$
A_n
=
\int_0^1
\frac{1}{n}\sum_{i=1}^n
H\{Y_i;\theta_0+r(\hat\theta_n-\theta_0)\}
dr
$$
and $H(y;\theta)=\frac{\partial^2\ell(y;\theta)}{\partial\theta\partial\theta^\prime}$. 
By Lemma \ref{lem:MLE-derivative-envelope}, the score has finite second
moments and the Hessian is uniformly bounded on $K$. The same lemma provides
integrable envelopes for the first and second derivatives of $f_Y(y;\theta)$
with respect to $\theta$, uniformly on $K$. Therefore differentiation under
the integral sign is valid on $K$. Differentiating
$\int f_Y(y;\theta)dy=1$ at $\theta=\theta_0$ gives
$\mathbb{E}_{\theta_0}[s(Y;\theta_0)]=0$. Differentiating a second time gives the information identity
$$
\mathbb{E}_{\theta_0}\left[H(Y;\theta_0)+s(Y;\theta_0)s(Y;\theta_0)^\prime\right]=0,
$$
such that $\mathcal{I}_{\theta_0}=\mathbb{E}_{\theta_0}\left[s(Y;\theta_0)s(Y;\theta_0)^\prime\right]=-\mathbb{E}_{\theta_0}\left[H(Y;\theta_0)\right]$.
The multivariate central limit theorem yields
$$
\frac{1}{\sqrt n}\sum_{i=1}^n s(Y_i;\theta_0)
\overset{d}{\rightarrow}
\mathcal{N}(0,\mathcal{I}_{\theta_0}).
$$
The uniform law of large numbers applied to the Hessian, together with
$\hat\theta_n\overset{p}{\rightarrow}\theta_0$, gives
$A_n\overset{p}{\rightarrow} \mathbb{E}_{\theta_0}[H(Y;\theta_0)] = -\mathcal{I}_{\theta_0}$,
which is invertible by Lemma \ref{lem:non-singular-I}. Therefore,
$$
\sqrt n(\hat\theta_n-\theta_0)
= -A_n^{-1}\frac{1}{\sqrt n}\sum_{i=1}^n s(Y_i;\theta_0) + o_p(1),
$$
and Slutsky's theorem implies
$\sqrt n(\hat\theta_n-\theta_0)
\overset{d}{\rightarrow}
\mathcal{N}(0,\mathcal{I}_{\theta_0}^{-1})$.

It remains only to note the stated zero restrictions in the Fisher information.
At $\theta_0$, write $\tilde y=y-\mu_0$. The Voigt density is even in
$\tilde y$. From the score formulas in Lemma \ref{lem:ScoreHessian},
$s_\mu(y;\theta_0)$ is odd in $\tilde y$, while
$s_\sigma(y;\theta_0)$ and $s_\gamma(y;\theta_0)$ are even in $\tilde y$.
Hence the products $s_\mu s_\sigma$ and $s_\mu s_\gamma$ are odd, and their
expectations under the symmetric density are zero. Therefore
$\mathcal{I}_{\mu_0\sigma_0} = \mathcal{I}_{\mu_0\gamma_0}=0$. 
This completes the proof.\hfill$\square$

%%%%%%%%%%%%%%%%%%%%%%%%%%%%%%%%%%%%%%%%%%%%
\subsection{GCC Filter Results}

\noindent{}\textbf{Proof of Theorem \ref{thm:GCCfilter}.} 
Let $\xi_t=x_t-x_{t|t-1}$. Under Assumption \ref{ass:Masreliez}, conditionally on $\mathcal F_{t-1}$, $\xi_t\sim\mathcal{N}(0,h_{t|t-1})$ and is independent of the measurement error $\eta_t$. Since $\eta_t\sim\mathcal{V}(0,\sigma,\gamma)$, we may write $\eta_t=Z_t+C_t$, where $Z_t\sim\mathcal{N}(0,\sigma^2)$ and $C_t\sim\operatorname{Cauchy}(0,\gamma)$ are independent. Hence
$e_t = \xi_t+\eta_t = (\xi_t+Z_t)+C_t$,
where $\xi_t+Z_t\sim\mathcal{N}(0,h_{t|t-1}+\sigma^2)$. Therefore
$$
e_t|\mathcal{F}_{t-1}\sim\mathcal{V}(0,\delta_t,\gamma),
\qquad
\delta_t^2=h_{t|t-1}+\sigma^2.
$$
We now derive the moment update. Fix $t$ and condition on $\mathcal F_{t-1}$. Under Assumption \ref{ass:Masreliez}, the prediction error has the Gaussian-convolution form
$e_t=\xi_t+\eta_t$ 
with $\xi_t$ independent of $\eta_t$ and $\xi_t|\mathcal F_{t-1}\sim\mathcal N(0,h_{t|t-1})$. Let $f_{t-1}$ denote the conditional density of $e_t$ given $\mathcal F_{t-1}$. Applying Proposition \ref{prop:TweedieGaussian} conditionally, with $Z=\xi_t$, $X=\eta_t$, $Y=e_t$, and $\sigma^2=h_{t|t-1}$, gives
$$
\mathbb{E}[\xi_t|e_t,\mathcal F_{t-1}]
=
-h_{t|t-1}\frac{\partial}{\partial e_t}\log f_{t-1}(e_t).
$$
Since $e_t=y_t-x_{t|t-1}$ and $x_{t|t-1}$ is $\mathcal F_{t-1}$-measurable, conditioning on $(e_t,\mathcal F_{t-1})$ is the same as conditioning on $\mathcal F_t$, and we obtain
$$
x_{t|t}=x_{t|t-1}+h_{t|t-1}\psi_t\quad\text{with}\quad 
\psi_t\equiv-\frac{\partial}{\partial e_t}\log f_{t-1}(e_t).
$$
Similarly, the variance identity in Proposition \ref{prop:TweedieGaussian} gives
$$
\operatorname{var}(\xi_t|e_t,\mathcal F_{t-1})
= h_{t|t-1} + h_{t|t-1}^2
\frac{\partial^2}{\partial e_t^2}\log f_{t-1}(e_t),
$$
and from $x_t=x_{t|t-1}+\xi_t$, it follows that
$$
h_{t|t} = h_{t|t-1} - h_{t|t-1}^2\psi_t^{\prime}
\quad\text{where}\quad
\psi_t^{\prime}\equiv\frac{\partial\psi_t}{\partial e_t}
=-\frac{\partial^2}{\partial e_t^2}\log f_{t-1}(e_t).
$$
It remains only to identify the explicit expressions for $\psi_t$ and $\psi_t^\prime$. Since $e_t|\mathcal F_{t-1}\sim \mathcal{V}(0,\delta_t,\gamma)$, Lemma \ref{lem:ScoreHessian} applied to the prediction-error density gives
$$
\psi_t
=
-\frac{\partial\log f_{t-1}(e_t)}{\partial e_t}
=
\frac{1}{\delta_t^2}
\left(
e_t+\gamma\frac{\mathsf{v}_t}{\mathsf{u}_t}
\right),
$$
where $\mathsf{u}_t=\mathsf{u}(e_t;0,\delta_t,\gamma)$ and $\mathsf{v}_t=\mathsf{v}(e_t;0,\delta_t,\gamma)$. Differentiating this expression with respect to $e_t$ yields
$$
\psi_t^\prime
=
\frac{1}{\delta_t^4}
\left[
\delta_t^2
+
\gamma^2
\left(
1+\frac{\mathsf{v}_t^2}{\mathsf{u}_t^2}
\right)
-
\sqrt{\frac{2}{\pi}}
\frac{\gamma\delta_t}{\mathsf{u}_t}
\right].
$$
This proves the stated recursions.
\hfill$\square$\medskip{}

\noindent{}\textbf{Proof of Lemma \ref{lem:GCCvariance}.}
By Theorem \ref{thm:GCCfilter}, under the Gaussian prediction approximation the variance update is $h_{t|t}=\operatorname{var}(x_t-x_{t|t-1}|e_t,\mathcal F_{t-1})$. Equivalently, applying Proposition \ref{prop:TweedieGaussian} conditionally with Gaussian variance $h_{t|t-1}>0$ gives $h_{t|t}=h_{t|t-1}-h_{t|t-1}^2\psi_t^\prime$. The conditional density of $x_t-x_{t|t-1}$ given $e_t$ is proportional to the product of a nondegenerate Gaussian density and a strictly positive Voigt density, and therefore is not concentrated at a point. Hence $h_{t|t}>0$.

The prediction recursion then gives
$h_{t+1|t}=\phi^2h_{t|t}+\tau^2\geq\tau^2$, with strict inequality whenever
$\phi\neq0$. Since $\tau^2>0$, we have $h_{t+1|t}>0$, and therefore
$\delta_{t+1}^2=h_{t+1|t}+\sigma^2>0$. Thus the Voigt prediction-error density
remains well defined throughout the recursion.
\hfill$\square$\medskip{}

%%%%%%%%%%%%%%%%%%%%%%%%%%%%%%%%%%
\setcounter{figure}{0}
\global\long\def\thefigure{B.\arabic{figure}}%
\setcounter{table}{0}
\global\long\def\thetable{B.\arabic{table}}%

\section{Additional Details}
\label{app:additional-details}
\subsection{Conditional Density}

Figure \ref{fig:Conditional-density} shows the conditional density
for selected values of $y$ with $\sigma=\gamma=1$. The vertical
lines along the $x$-axis indicate the conditional expected value
for each conditional density.

For the special case $y=\mu$, the conditional density is
 $$
 f_{Z|Y}(z|y=\mu)=\frac{1}{\gamma\pi}\frac{1}{\Pr(|Z|\geq\tfrac{\gamma}{\sigma})}\frac{\exp(-\frac{1}{2}\frac{z^{2}+\gamma^{2}}{\sigma^{2}})}{1+(\frac{z}{\gamma})^{2}}.
 $$

\begin{figure}[t]
\spacingset{1.15}
\begin{centering}
\includegraphics[width=0.8\textwidth]{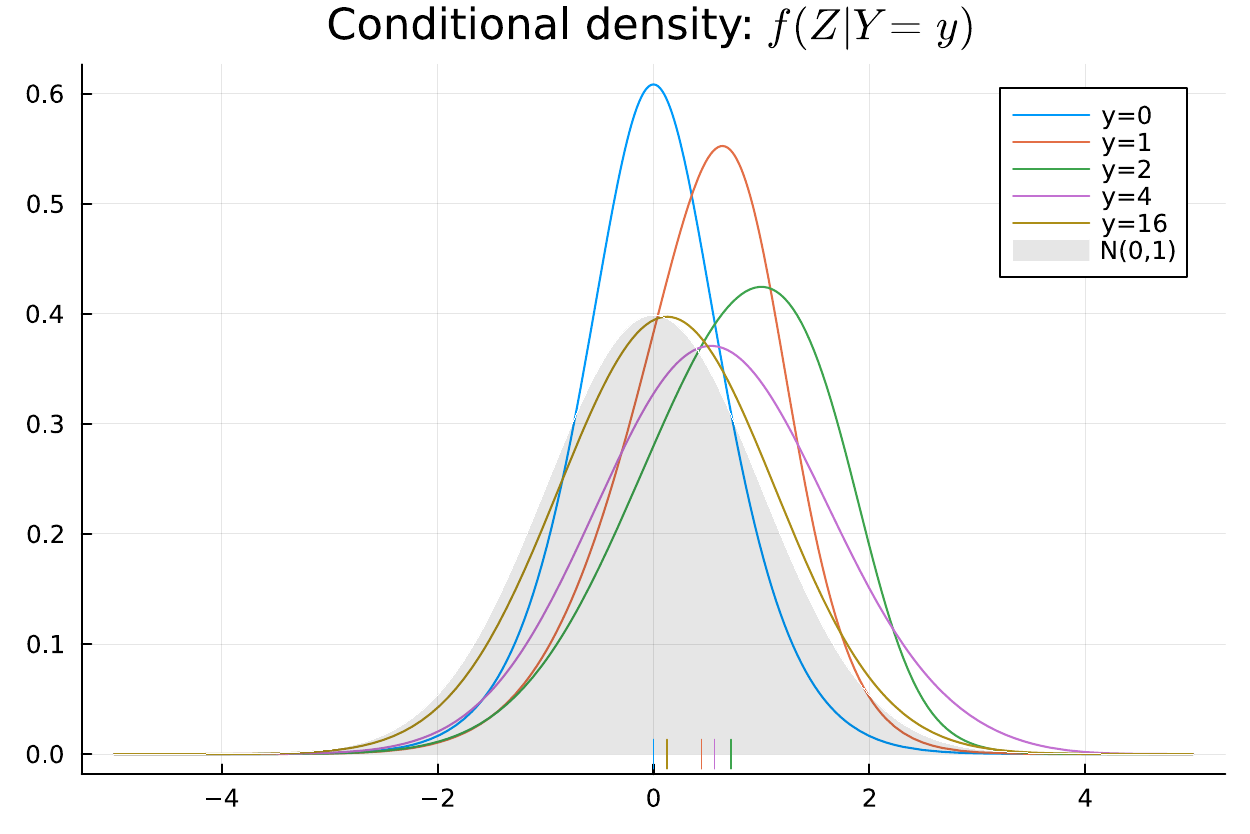}
\par\end{centering}
\caption{{\footnotesize Conditional density of $Z|Y=y$, for selected values
of $y$, where $Y=Z+X$ with $Z\sim \mathcal{N}(0,1)$ and $X\sim \operatorname{Cauchy}(0,1)$
independent. The corresponding conditional expectations, $\mathbb{E}[Z|Y=y]$,
are indicated with vertical lines. The unconditional density of $Z$
(standard normal) is represented with the shaded area. }\label{fig:Conditional-density}}
\end{figure}

\subsection{Higher-Order Conditional Cumulants}\label{sec:higher-order-cumulants}

Higher-order conditional cumulants can be obtained in the same way. The first step is not specific to the Voigt model, but holds for any Gaussian convolution. Let $K(s|y)=\log\mathbb{E}\left[\exp(sZ)|Y=y\right]$
denote the cumulant generating function of $Z|Y=y$. If $Y=\mu+Z+X$, where $Z\sim\mathcal{N}(0,\sigma^2)$ is independent of an arbitrary proper random variable $X$, then
$$
K(s|y)=\frac{\sigma^2s^2}{2}+\log f_Y(y-\sigma^2s)-\log f_Y(y).
$$
The conditional cumulants are defined by $\kappa_r(y)=\left.\frac{\partial^r}{\partial s^r}K(s|y)\right|_{s=0}$. Therefore
$$
\kappa_1(y)=-\sigma^2\frac{\partial}{\partial y}\log f_Y(y),
\qquad
\kappa_2(y)=\sigma^2+\sigma^4\frac{\partial^2}{\partial y^2}\log f_Y(y).
$$
These are the conditional mean and variance identities in Proposition \ref{prop:TweedieGaussian}. For $r\geq3$, $\kappa_r(y)=(-\sigma^2)^r\frac{\partial^r}{\partial y^r}\log f_Y(y)$, 
and equivalently,  $\kappa_{r+1}(y)=-\sigma^2\frac{\partial}{\partial y}\kappa_r(y)$, for all $r\geq2$.

For the Voigt Gauss-Cauchy convolution, Proposition \ref{prop:Voigt} gives $f_Y$ explicitly in terms of $\mathsf{u}(y;\theta)$. Hence all higher-order conditional cumulants are obtained by differentiating the closed-form variance expression in Corollary \ref{cor:CondExpect}. By Lemma \ref{lem:erfcx}, these derivatives are again algebraic functions of $\mathsf{u}(y;\theta)$, $\mathsf{v}(y;\theta)$, and the parameters.

The tractability is therefore not limited to the conditional mean and variance. Although the Voigt random variable $Y$ has Cauchy tails and hence does not have finite positive integer moments when $\gamma>0$, the conditional distribution of the latent Gaussian component $Z|Y=y$ has all finite moments. Thus conditional skewness, kurtosis, and higher conditional moments of the latent Gaussian component can be evaluated without numerical integration.

\subsection{Relative precision of the Gaussian and Cauchy scale estimates}\label{app:relative_precision}

The asymptotic covariance matrix for $(\sigma,\gamma)$ is homogeneous of degree two. Hence the ratio of their asymptotic standard deviations depends only on $\lambda=\gamma/\sigma$. We define\footnote{The ratio $\lambda$ is proportional to the Voigt damping parameter used in astrophysics, $a=\lambda/\sqrt{2}$.}
$$
R(\lambda)\equiv \frac{\operatorname{aStd}(\sigma;\mu,\sigma,\lambda\sigma)}{\operatorname{aStd}(\gamma;\mu,\sigma,\lambda\sigma)}.
$$
Figure \ref{fig:seratio} reports the resulting asymptotic standard deviations and their ratio. The left panel shows the asymptotic standard deviations for $\sigma$ and $\gamma$ as functions of $\lambda=\gamma/\sigma$, under the normalization $(\sigma,\gamma)=(1,\lambda)$. Both axes are log transformed. The relationship is close to log-linear for $\gamma$, corresponding to an approximately constant elasticity, whereas the relationship for $\sigma$ is convex.
\begin{figure}[htb]
\spacingset{1.15}
\begin{centering}
\includegraphics[width=1\textwidth]{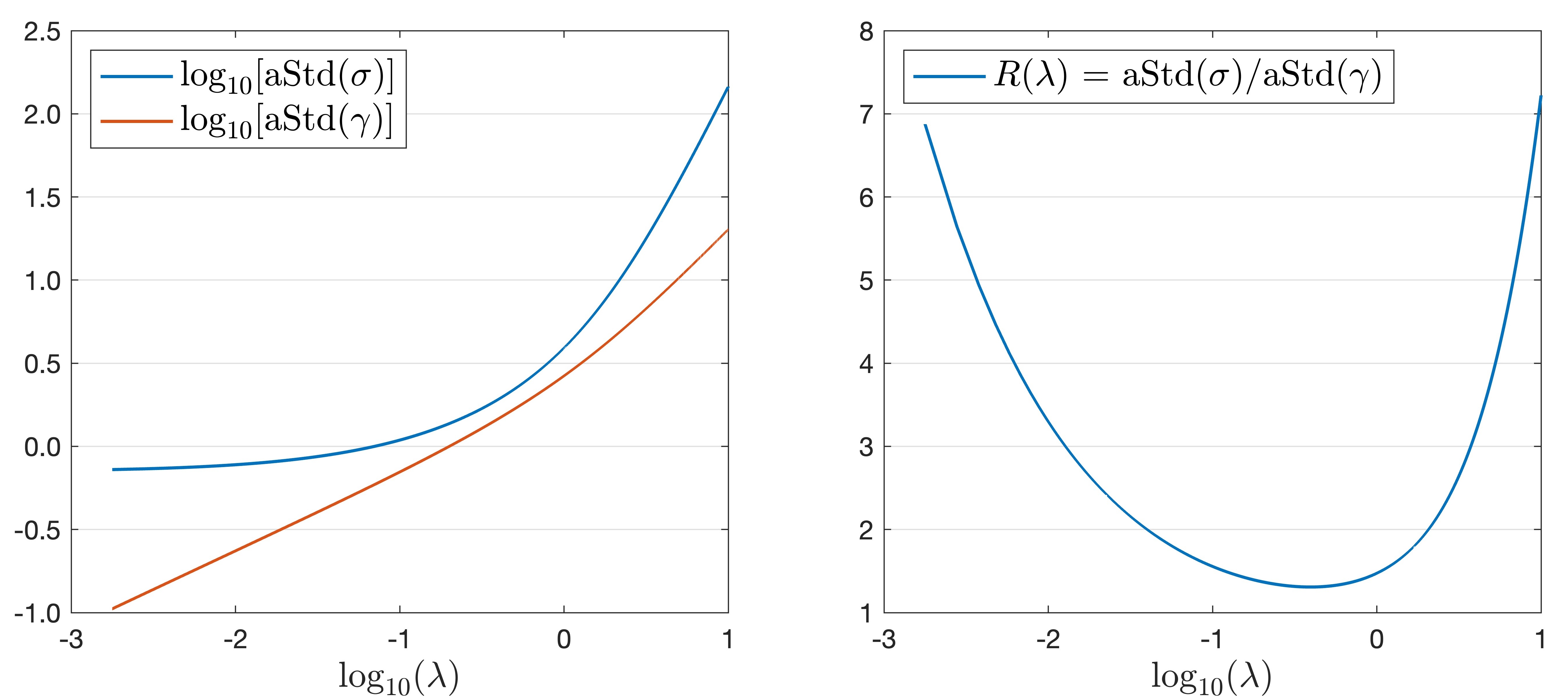}
\par \end{centering}
\caption{\footnotesize The asymptotic standard deviations for $\sigma$ and $\gamma$ are plotted against the ratio $\lambda=\gamma/\sigma$ in the left panel for $(\sigma,\gamma)=(1,\lambda)$. The right panel shows $R(\lambda)$, the ratio of the asymptotic standard deviation of $\hat\sigma$ to that of $\hat\gamma$. Values above one imply that the Cauchy scale $\gamma$ is estimated more precisely than the Gaussian scale $\sigma$.}
\label{fig:seratio}
\end{figure}

The right panel plots $R(\lambda)$ against $\log_{10}\lambda$. The ratio is larger than one throughout the range considered, so the Cauchy scale parameter $\gamma$ is asymptotically more precisely estimated than the Gaussian scale parameter $\sigma$. This may seem counterintuitive because $\gamma$ governs the heavy-tailed component. The reason is that the Cauchy scale is strongly identified by the frequency and magnitude of extreme observations, where the Gaussian component contributes little. By contrast, the Gaussian scale is identified primarily from the central part of the distribution, where changes in $\sigma$ and $\gamma$ can partially offset one another. The U-shape of $R(\lambda)$ shows that the relative precision is especially favorable to $\gamma$ when the Cauchy scale is either small or large relative to the Gaussian scale.

%%%%%%%%%%%%%%%%%%%%%%%%%%%%

\setcounter{table}{0}
\global\long\def\thetable{C.\arabic{table}}%

\section{Simulation Details and Additional Monte Carlo Results}

\subsection{Finite-sample behavior of the Voigt MLE}\label{app:voigt_mle_mc}

We examine the finite-sample behavior of the Voigt maximum likelihood estimator using $N=100,000$ independent datasets for sample sizes ranging from $n=100$ to $n=10,000$. The simulations cover three values of the Cauchy-to-Gaussian scale ratio, $\lambda=\gamma_0/\sigma_0$, corresponding to $\lambda=0.01$, $\lambda=0.1$, and $\lambda=1$. Thus the designs range from a nearly Gaussian case to a balanced Voigt case in which the Gaussian and Cauchy scales are equal.

Table \ref{tab:simulation} reports the mean of the parameter estimates, their empirical standard deviation (Std), and the asymptotic standard deviation (aStd) derived from the Fisher information matrix. For each parameter component, we also report the empirical left and right tail probabilities
$$
\alpha_{L,j}=\frac{1}{N}\sum_{i=1}^{N}\left[\frac{\hat{\theta}_{n,j}^{(i)}-\theta_{0,j}}{\operatorname{Std}(\hat{\theta}_{n,j})}<-1.96\right],\qquad
\alpha_{R,j}=\frac{1}{N}\sum_{i=1}^{N}\left[\frac{\hat{\theta}_{n,j}^{(i)}-\theta_{0,j}}{\operatorname{Std}(\hat{\theta}_{n,j})}>1.96\right],
$$
where $j$ indexes the components of $\theta=(\mu,\sigma,\gamma)^\prime$. Under asymptotic normality, both probabilities should approach $0.025$ as $n\rightarrow\infty$.
\begin{table}[htbp]
\spacingset{1.15}
\caption{Simulation results for the Voigt density MLE}\label{tab:simulation}
\begin{centering}
\begin{footnotesize}
\begin{tabularx}{\textwidth}{Yp{-0.5cm}YYYp{-0.5cm}YYYp{-0.5cm}YYY @{}}
\toprule
    \midrule
          &       & $\mu$    & $\sigma$ & $\gamma$ &       & $\mu$    & $\sigma$ & $\gamma$ &       & $\mu$    & $\sigma$ & $\gamma$ \\
    \midrule     
    \multicolumn{13}{c}{Panel A: $(\mu_0,\sigma_0, \gamma_0)= (1,1,0.01)$} \\   
          &       & \multicolumn{3}{c}{$n=100$} &       & \multicolumn{3}{c}{$n=200$} &       & \multicolumn{3}{c}{$n=400$} \\
    Mean  &       & 0.9999 & 0.9894 & 0.0112 &       & 1.0001 & 0.9952 & 0.0102 &       & 1.0001 & 0.9980 & 0.0100 \\
    Std   &       & 0.1016 & 0.0791 & 0.0239 &       & 0.0715 & 0.0552 & 0.0164 &       & 0.0509 & 0.0388 & 0.0117 \\
    aStd  &       & 0.1013 & 0.0775 & 0.0235 &       & 0.0716 & 0.0548 & 0.0166 &       & 0.0507 & 0.0388 & 0.0118 \\
    $\alpha_{L}$ &       & 0.0250 & 0.0166 & 0.0588 &       & 0.0250 & 0.0199 & 0.0570 &       & 0.0258 & 0.0223 & 0.0487 \\
    $\alpha_{R}$ &       & 0.0249 & 0.0353 & 0.0000 &       & 0.0253 & 0.0307 & 0.0000 &       & 0.0244 & 0.0290 & 0.0000 \\
          &       & \multicolumn{3}{c}{$n=1,000$} &       & \multicolumn{3}{c}{$n=4,000$} &       & \multicolumn{3}{c}{$n=10,000$} \\
    Mean  &       & 1.0001 & 0.9995 & 0.0099 &       & 0.9999 & 0.9998 & 0.0100 &       & 0.9999 & 0.9999 & 0.0100 \\
    Std   &       & 0.0320 & 0.0247 & 0.0075 &       & 0.0160 & 0.0122 & 0.0037 &       & 0.0101 & 0.0077 & 0.0024 \\
    aStd  &       & 0.0320 & 0.0245 & 0.0074 &       & 0.0160 & 0.0123 & 0.0037 &       & 0.0101 & 0.0078 & 0.0024 \\
    $\alpha_{L}$ &       & 0.0250 & 0.0233 & 0.0384 &       & 0.0252 & 0.0241 & 0.0327 &       & 0.0246 & 0.0237 & 0.0301 \\
    $\alpha_{R}$ &       & 0.0241 & 0.0261 & 0.0000 &       & 0.0246 & 0.0258 & 0.0146 &       & 0.0254 & 0.0258 & 0.0189 \\
\\[-0.4cm]    
\hdashline
\\[-0.4cm]
    \multicolumn{13}{c}{Panel B: $(\mu_0,\sigma_0, \gamma_0)= (1,1,0.1)$} \\  
          &       & \multicolumn{3}{c}{$n=100$} &       & \multicolumn{3}{c}{$n=200$} &       & \multicolumn{3}{c}{$n=400$} \\
    Mean  &       & 1.0002 & 0.9935 & 0.0972 &       & 0.9999 & 0.9970 & 0.0986 &       & 1.0003 & 0.9983 & 0.0994 \\
    Std   &       & 0.1122 & 0.1156 & 0.0717 &       & 0.0793 & 0.0795 & 0.0505 &       & 0.0558 & 0.0552 & 0.0351 \\
    aStd  &       & 0.1112 & 0.1090 & 0.0701 &       & 0.0786 & 0.0771 & 0.0496 &       & 0.0556 & 0.0545 & 0.0350 \\
    $\alpha_{L}$ &       & 0.0249 & 0.0202 & 0.0376 &       & 0.0252 & 0.0227 & 0.0346 &       & 0.0250 & 0.0223 & 0.0322 \\
    $\alpha_{R}$ &       & 0.0254 & 0.0315 & 0.0000 &       & 0.0252 & 0.0294 & 0.0282 &       & 0.0247 & 0.0287 & 0.0161 \\
          &       & \multicolumn{3}{c}{$n=1,000$} &       & \multicolumn{3}{c}{$n=4,000$} &       & \multicolumn{3}{c}{$n=10,000$} \\
    Mean  &       & 1.0000 & 0.9994 & 0.0998 &       & 1.0000 & 0.9999 & 0.0999 &       & 1.0001 & 0.9999 & 0.1000 \\
    Std   &       & 0.0351 & 0.0347 & 0.0223 &       & 0.0176 & 0.0173 & 0.0111 &       & 0.0111 & 0.0109 & 0.0070 \\
    aStd  &       & 0.0352 & 0.0345 & 0.0222 &       & 0.0176 & 0.0172 & 0.0111 &       & 0.0111 & 0.0109 & 0.0070 \\
    $\alpha_{L}$ &       & 0.0257 & 0.0235 & 0.0297 &       & 0.0249 & 0.0244 & 0.0272 &       & 0.0251 & 0.0238 & 0.0272 \\
    $\alpha_{R}$ &       & 0.0244 & 0.0270 & 0.0194 &       & 0.0254 & 0.0260 & 0.0231 &       & 0.0253 & 0.0260 & 0.0234 \\
\\[-0.4cm]
\hdashline
\\[-0.4cm]
    \multicolumn{13}{c}{Panel C: $(\mu_0,\sigma_0, \gamma_0)= (1,1,1)$} \\  
          &       & \multicolumn{3}{c}{$n=100$} &       & \multicolumn{3}{c}{$n=200$} &       & \multicolumn{3}{c}{$n=400$} \\
    Mean  &       & 0.9996 & 0.9422 & 0.9769 &       & 0.9992 & 0.9642 & 0.9929 &       & 0.9992 & 0.9846 & 0.9966 \\
    Std   &       & 0.2138 & 0.4571 & 0.2576 &       & 0.1492 & 0.3250 & 0.1883 &       & 0.1046 & 0.2133 & 0.1338 \\
    aStd  &       & 0.2088 & 0.3910 & 0.2653 &       & 0.1477 & 0.2765 & 0.1876 &       & 0.1044 & 0.1955 & 0.1326 \\
    $\alpha_{L}$ &       & 0.0249 & 0.0103 & 0.0206 &       & 0.0245 & 0.0086 & 0.0263 &       & 0.0248 & 0.0118 & 0.0272 \\
    $\alpha_{R}$ &       & 0.0255 & 0.0910 & 0.0275 &       & 0.0258 & 0.0491 & 0.0232 &       & 0.0256 & 0.0357 & 0.0228 \\
          &       & \multicolumn{3}{c}{$n=1,000$} &       & \multicolumn{3}{c}{$n=4,000$} &       & \multicolumn{3}{c}{$n=10,000$} \\
    Mean  &       & 1.0003 & 0.9952 & 0.9985 &       & 1.0000 & 0.9992 & 0.9994 &       & 1.0000 & 0.9993 & 0.9999 \\
    Std   &       & 0.0662 & 0.1265 & 0.0841 &       & 0.0330 & 0.0620 & 0.0418 &       & 0.0209 & 0.0393 & 0.0266 \\
    aStd  &       & 0.0660 & 0.1236 & 0.0839 &       & 0.0330 & 0.0618 & 0.0419 &       & 0.0209 & 0.0391 & 0.0265 \\
    $\alpha_{L}$ &       & 0.0251 & 0.0180 & 0.0261 &       & 0.0249 & 0.0218 & 0.0254 &       & 0.0244 & 0.0224 & 0.0253 \\
    $\alpha_{R}$ &       & 0.0253 & 0.0328 & 0.0244 &       & 0.0252 & 0.0278 & 0.0244 &       & 0.0251 & 0.0277 & 0.0251 \\
\midrule
\bottomrule
\end{tabularx}
\end{footnotesize}

\par\end{centering}
{\footnotesize Note: The simulation results are based on $N=100,000$ Monte Carlo replications. We report the mean and standard deviation (Std) of the estimated parameters. For comparison, the asymptotic standard deviations (aStd), computed from the Fisher information, are also included. We also report the empirical left and right tail probabilities, $\alpha_{L}$ and $\alpha_{R}$; the nominal benchmark for each one-sided tail probability is $0.025$.}{\small\par}
\end{table}

The simulation results support the asymptotic approximation. The estimator of $\mu$ is essentially unbiased across all designs and sample sizes, and its empirical standard deviation is close to the asymptotic standard deviation even for small samples. The scale parameters also behave well once the sample size is moderate. The main finite-sample distortions occur in the smaller samples and involve the scale parameters. In the near-Gaussian design, $\lambda=0.01$, the Cauchy scale is close to the lower boundary of the parameter space, so the finite-sample distribution of $\hat\gamma$ is visibly skewed even though the magnitude of its standard deviation is well approximated by the Fisher information. In the balanced design, $\lambda=1$, the estimator of $\sigma$ exhibits some small-sample skewness, reflecting the difficulty of separating the Gaussian core from the Cauchy tails with limited data. These distortions diminish rapidly as $n$ increases.

Overall, the table indicates that the Voigt MLE is well centered and that the Fisher-information approximation provides an accurate description of sampling variability. The remaining discrepancies are finite-sample effects associated with weak identification of one of the two scale components, either because $\gamma_0$ is close to zero or because the sample is too small to clearly distinguish the Gaussian and Cauchy contributions.

\subsection{Finite-sample behavior of the GCC QMLE}\label{app:gcc_qmle_mc}

We examine the finite-sample behavior of the GCC QMLE through a Monte Carlo study. The purpose is to assess whether the inverse-information approximation gives an accurate description of the sampling distribution of the estimator under the Masreliez Gaussian prediction approximation. The simulations use $N=100,000$ replications and two designs: a small-Cauchy design with $\gamma_0/\sigma_0=0.1$ and a balanced design with $\gamma_0/\sigma_0=1$.

Table \ref{tab:GCC_QMLE_simulation} reports the mean and empirical standard deviation (Std) of the parameter estimates, together with the corresponding asymptotic standard deviations (aStd). It also reports the empirical left and right tail frequencies, $\alpha_L$ and $\alpha_R$, based on standardized estimates below $-1.96$ and above $1.96$.

\begin{table}[htbp]
\caption{Finite-Sample Behavior of the GCC QMLE}\label{tab:GCC_QMLE_simulation}
\spacingset{1.15}
\begin{centering}
\begin{footnotesize}
\begin{tabularx}{\textwidth}{Yp{-0.1cm}YYYYYp{-0.1cm}YYYYY}
\toprule
    \midrule
          &       & $\sigma$ & $\gamma$ & $\mu$    & $\phi$   & $\tau$ &       & $\sigma$ & $\gamma$ & $\mu$    & $\phi$   & $\tau$ \\
    \midrule    
    \multicolumn{13}{c}{Panel A: $(\sigma_0, \gamma_0,\mu_0,\phi_0,\tau_0)= (1,0.1,1,0.95,1)$} \\   
          &       & \multicolumn{5}{c}{$T=500$}             &       & \multicolumn{5}{c}{$T=1,000$} \\
    Mean  &       & 0.9889 & 0.0994 & 1.0007 & 0.9389 & 1.0055 &       & 0.9945 & 0.0998 & 0.9987 & 0.9446 & 1.0027 \\
    Std   &       & 0.0986 & 0.0376 & 0.8401 & 0.0190 & 0.0870 &       & 0.0684 & 0.0264 & 0.6151 & 0.0123 & 0.0611 \\
    aStd  &       & 0.0947 & 0.0368 & 0.8927 & 0.0158 & {0.0876} &       & 0.0670 & 0.0260 & 0.6313 & {0.0112} & {0.0620} \\
     $\alpha_{L}$ &       & 0.0128 & 0.0326 & 0.0244 & 0.0000 & 0.0298 &       & 0.0167 & 0.0312 & 0.0248 & 0.0005 & 0.0282 \\
     $\alpha_{R}$ &       & 0.0383 & 0.0149 & 0.0251 & 0.0928 & 0.0208 &       & 0.0349 & 0.0181 & 0.0247 & 0.0751 & 0.0213 \\
          &       & \multicolumn{5}{c}{$T=2,000$}             &       & \multicolumn{5}{c}{$T=4,000$} \\
    Mean  &       & 0.9973 & 0.0998 & 0.9997 & 0.9474 & 1.0013 &       & 0.9985 & 0.1000 & 1.0001 & 0.9487 & 1.0006 \\
    Std   &       & 0.0478 & 0.0187 & 0.4416 & 0.0083 & 0.0432 &       & 0.0335 & 0.0132 & 0.3133 & 0.0057 & 0.0303 \\
    aStd  &       & 0.0474 & 0.0184 & 0.4464 & 0.0079 & {0.0438} &       & 0.0335 & 0.0130 & 0.3156 & {0.0056} & {0.0310} \\
    $\alpha_{L}$ &       & 0.0189 & 0.0293 & 0.0251 & 0.0028 & 0.0267 &       & 0.0210 & 0.0277 & 0.0256 & 0.0069 & 0.0262 \\
    $\alpha_{R}$ &       & 0.0322 & 0.0205 & 0.0245 & 0.0607 & 0.0229 &       & 0.0303 & 0.0213 & 0.0250 & 0.0499 & 0.0240 \\
          &       & \multicolumn{5}{c}{$T=8,000$}             &       & \multicolumn{5}{c}{$T=16,000$} \\
    Mean  &       & 0.9991 & 0.1001 & 1.0000 & 0.9494 & 1.0003 &       & 0.9993 & 0.1001 & 0.9993 & 0.9497 & 1.0001 \\
    Std   &       & 0.0236 & 0.0093 & 0.2216 & 0.0040 & 0.0214 &       & 0.0167 & 0.0066 & 0.1572 & 0.0028 & 0.0151 \\
    aStd  &       & 0.0237 & 0.0092 & 0.2232 & 0.0039 & {0.0219} &       & 0.0167 & 0.0065 & 0.1578 & {0.0028} & {0.0155} \\    
    $\alpha_{L}$ &       & 0.0214 & 0.0273 & 0.0254 & 0.0114 & 0.0259 &       & 0.0213 & 0.0277 & 0.0250 & 0.0152 & 0.0255 \\
   $\alpha_{R}$ &       & 0.0281 & 0.0219 & 0.0244 & 0.0425 & 0.0243 &       & 0.0281 & 0.0229 & 0.0252 & 0.0377 & 0.0247 \\
%%%%%%%%%%%%%%%%%%%%%%%%%%%%%%%%%%%%%%%%%%%%%%%%%%%%%%%%%%%%%%%%%%%%%%%%%%%%%%%%%%%%
\\[-0.4cm]
\hdashline
\\[-0.4cm]
    \multicolumn{13}{c}{Panel B: $(\sigma_0, \gamma_0,\mu_0,\phi_0,\tau_0)= (1,1,1,0.95,1)$} \\   
          &       & \multicolumn{5}{c}{$T=500$}             &       & \multicolumn{5}{c}{$T=1,000$} \\
    Mean  &       & 0.9295 & 0.9938 & 1.0013 & 0.9377 & {1.0152} &       & 0.9620 & 0.9988 & 0.9989 & {0.9441} & {1.0084} \\
    Std   &       & 0.3602 & 0.1358 & 0.8455 & 0.0223 & {0.1308} &       & 0.2462 & 0.0970 & 0.6170 & {0.0141} & {0.0912} \\
    aStd  &       & 0.3067 & 0.1378 & 0.8929 & 0.0177 & {0.1301} &       & 0.2169 & 0.0974 & 0.6314 & {0.0125} & {0.0920} \\   
    $\alpha_{L}$ &       & 0.0024 & 0.0233 & 0.0246 & 0.0000 & {0.0360} &       & 0.0043 & 0.0263 & 0.0250 & {0.0003} & {0.0333} \\
    $\alpha_{R}$ &       & 0.0692 & 0.0258 & 0.0252 & 0.0882 & {0.0164} &       & 0.0472 & 0.0239 & 0.0248 & {0.0743} & {0.0182} \\
          &       & \multicolumn{5}{c}{$T=2,000$}             &       & \multicolumn{5}{c}{$T=4,000$} \\
    Mean  &       & 0.9801 & 0.9998 & 0.9993 & 0.9471 & {1.0049} &       & 0.9878 & 1.0004 & 1.0001 & {0.9485} & {1.0031} \\
    Std   &       & 0.1617 & 0.0688 & 0.4410 & 0.0094 & {0.0637} &       & 0.1108 & 0.0488 & 0.3113 & {0.0065} & {0.0448} \\
    aStd  &       & 0.1534 & 0.0689 & 0.4465 & 0.0088 & {0.0651} &       & 0.1084 & 0.0487 & 0.3157 & {0.0062} & {0.0460} \\   
    $\alpha_{L}$ &       & 0.0094 & 0.0264 & 0.0255 & 0.0024 & {0.0312} &       & 0.0132 & 0.0273 & 0.0262 & {0.0064} & {0.0312} \\
    $\alpha_{R}$ &       & 0.0423 & 0.0234 & 0.0253 & 0.0620 & {0.0199} &       & 0.0396 & 0.0232 & 0.0260 & {0.0516} & {0.0213} \\
          &       & \multicolumn{5}{c}{$T=8,000$}             &       & \multicolumn{5}{c}{$T=16,000$} \\
    Mean  &       & 0.9913 & 1.0006 & 0.9998 & 0.9493 & {1.0023} &       & 0.9928 & 1.0005 & 0.9997 & {0.9496} & {1.0020} \\
    Std   &       & 0.0774 & 0.0343 & 0.2178 & 0.0045 & {0.0315} &       & 0.0540 & 0.0242 & 0.1517 & {0.0031} & {0.0221} \\
    aStd  &       & 0.0767 & 0.0344 & 0.2232 & 0.0044 & {0.0325} &       & 0.0542 & 0.0244 & 0.1579 & {0.0031} & {0.0230} \\   
    $\alpha_{L}$ &       & 0.0141 & 0.0281 & 0.0268 & 0.0108 & {0.0320} &       & 0.0156 & 0.0280 & 0.0275 & {0.0146} & {0.0326} \\
    $\alpha_{R}$ &       & 0.0375 & 0.0237 & 0.0265 & 0.0431 & {0.0203} &       & 0.0382 & 0.0244 & 0.0275 & {0.0394} & {0.0202} \\
    \midrule
\bottomrule
\end{tabularx}
\end{footnotesize}

\par\end{centering}
{\footnotesize \textit{Note:} Monte Carlo results for the GCC QMLE based on $N=100,000$ replications. Std denotes the empirical standard deviation, and aStd is computed from the Fisher information using a simulated sample with $T=10,000,000$. The tail frequencies $\alpha_L$ and $\alpha_R$ are based on standardized estimates below $-1.96$ and above $1.96$; the nominal benchmark is $0.025$ for each tail.}{\par}
\end{table}

Across both designs, the QMLE is well centered and the empirical standard deviations closely match the inverse-information counterparts. In the small-Cauchy design, estimates of $\sigma$, $\gamma$, $\mu$, and $\tau$ are nearly unbiased even at moderate sample sizes. The main finite-sample distortion is a downward bias in the persistence parameter $\phi$, with right-tail rejection probabilities above nominal benchmarks. This distortion shrinks as $T$ increases and reflects a dynamic-parameter finite-sample effect rather than an evident failure of the Masreliez approximation.

The balanced design is more challenging because the Gaussian and Cauchy measurement-error scales are equal. The estimator of $\gamma$ remains well centered, while $\hat\sigma$ displays a short-sample downward bias, reflecting the difficulty of separating comparable Gaussian and Cauchy components in short samples. This bias decreases monotonically, and the empirical standard deviations closely match their asymptotic counterparts by $T=8{,}000$ or $T=16{,}000$.

The tail frequencies provide a diagnostic for asymptotic normality. For $\mu$, empirical probabilities are close to the nominal 0.025 benchmark across both panels. Scale and persistence parameters exhibit more finite-sample skewness, especially $\sigma$ in the balanced design and $\phi$ overall, but these deviations diminish with sample size. Overall, the Monte Carlo evidence indicates that the pseudo-true parameter, $\vartheta_\star$, is close to the true parameter, $\vartheta_0$, in the designs considered, and that the inverse-information approximation provides a useful description of sampling variability.

\subsection{Exact benchmark filter and diagnostic definitions}\label{app:simulation-details}

This section provides details on the benchmark filter and diagnostics used in Section \ref{sec:Simulations}. The objective is to assess the numerical consequences of the Gaussian prediction approximation in Assumption \ref{ass:Masreliez}, holding the model parameters fixed at their true values.

Let $\pi_{t|t-1}^\ast$ denote the exact predictive density of $x_t$ conditional
on $\mathcal F_{t-1}$. In the benchmark filter, this density is propagated
numerically without imposing Gaussianity. Given $\pi_{t|t-1}^\ast$, the exact
predictive density of $y_t$ is
$$
f_{t|t-1}^\ast(y)=\int f_\eta(y-x)\pi_{t|t-1}^\ast(x)dx,
$$
where $f_\eta$ is the Voigt measurement-error density. The exact posterior mean
associated with a generic observation value $y$ is
$$
x_{t|t}^\ast(y)=
\frac{\int xf_\eta(y-x)\pi_{t|t-1}^\ast(x)dx}
{\int f_\eta(y-x)\pi_{t|t-1}^\ast(x)dx},
$$
and the corresponding exact correction is
$$
\Delta_t^\ast(y)=x_{t|t}^\ast(y)-x_{t|t-1}^\ast,
\qquad
x_{t|t-1}^\ast=\int x\pi_{t|t-1}^\ast(x)dx.
$$
After observing $y_t$, the exact filtering density is updated by Bayes' rule:
$$
\pi_{t|t}^\ast(x)=
\frac{f_\eta(y_t-x)\pi_{t|t-1}^\ast(x)}
{\int f_\eta(y_t-z)\pi_{t|t-1}^\ast(z)dz}.
$$
The next predictive density is obtained by propagating this filtering density
through the Gaussian state transition,
$$
\pi_{t+1|t}^\ast(x)=
\int
\varphi_{\tau^2}\{x-(1-\phi)\mu-\phi z\}
\pi_{t|t}^\ast(z)dz.
$$
Thus the benchmark filter is exact up to numerical integration error.

We compare this benchmark with two Gaussian approximations. The first is the
moment-matched Gaussian approximation,
$$
\tilde\pi_{t|t-1}(x)=
\varphi_{h_{t|t-1}^\ast}(x-x_{t|t-1}^\ast),
\qquad
h_{t|t-1}^\ast=
\int (x-x_{t|t-1}^\ast)^2\pi_{t|t-1}^\ast(x)dx.
$$
This approximation has the same predictive mean and variance as the exact
predictive density but imposes Gaussian shape. The second is the operational
GCC approximation,
$$
\pi_{t|t-1}(x)=
\varphi_{h_{t|t-1}}(x-x_{t|t-1}),
$$
where $x_{t|t-1}$ and $h_{t|t-1}$ are generated by the GCC recursion in
Theorem \ref{thm:GCCfilter}. The moment-matched approximation isolates the
shape error from Gaussianizing the predictive density, whereas the operational
approximation measures the full discrepancy of the implemented filter.

The corresponding predictive densities of the observation are
$$
\tilde f_{t|t-1}(y)=\int f_\eta(y-x)\tilde\pi_{t|t-1}(x)dx,
\qquad
f_{t|t-1}(y)=\int f_\eta(y-x)\pi_{t|t-1}(x)dx.
$$
We define the density-level diagnostics
$$
\operatorname{KL}_{t}^{x,\operatorname{shape}}
=
\int
\pi_{t|t-1}^\ast(x)
\log
\left(
\frac{\pi_{t|t-1}^\ast(x)}
{\tilde\pi_{t|t-1}(x)}
\right)dx,
\qquad
\operatorname{KL}_{t}^{x,\operatorname{op}}
=
\int
\pi_{t|t-1}^\ast(x)
\log
\left(
\frac{\pi_{t|t-1}^\ast(x)}
{\pi_{t|t-1}(x)}
\right)dx.
$$
The analogous observation-density diagnostics are
$$
\operatorname{KL}_{t}^{y,\operatorname{shape}}
=
\int
f_{t|t-1}^\ast(y)
\log
\left(
\frac{f_{t|t-1}^\ast(y)}
{\tilde f_{t|t-1}(y)}
\right)dy,
\qquad
\operatorname{KL}_{t}^{y,\operatorname{op}}
=
\int
f_{t|t-1}^\ast(y)
\log
\left(
\frac{f_{t|t-1}^\ast(y)}
{f_{t|t-1}(y)}
\right)dy.
$$

For the correction-level diagnostics, let $\tilde x_{t|t}(y)$ be the
posterior mean obtained by replacing $\pi_{t|t-1}^\ast$ with
$\tilde\pi_{t|t-1}$ in the exact posterior-mean formula, and define
$\tilde\Delta_t(y)=\tilde x_{t|t}(y)-x_{t|t-1}^\ast$. The operational
GCC correction is $\Delta_t(y)=x_{t|t}(y)-x_{t|t-1}$, and at the realized
observation it satisfies $\Delta_t(y_t)=h_{t|t-1}\psi_t$. The shape and
operational correction distortions are
$$
\tilde D_t(y)=\Delta_t^\ast(y)-\tilde\Delta_t(y),
\qquad
D_t(y)=\Delta_t^\ast(y)-\Delta_t(y).
$$
The reported correction diagnostics summarize the realized distortions
$\tilde D_t(y_t)$ and $D_t(y_t)$ across simulated paths. Specifically, the
tables report mean absolute error, root mean square error, and selected
quantiles of $|D_t(y_t)|$.

The main text reports diagnostics averaged over the nine designs for each value of $\lambda$. The full design-by-design diagnostics are reported in Supplement~\ref{app:simulation-tables}.

\clearpage
\subsection{Design-by-design approximation diagnostics}\label{app:simulation-tables}

The simulation design exploits scale equivariance. The relevant dimensionless
parameters are $\lambda=\gamma/\sigma$, $\tau/\sigma$, and $\phi$. For each
value of $\lambda$, the main tables average over the nine designs generated by
$\phi\in\{0.90,0.97,0.99\}$ and $\tau/\sigma\in\{0.25,0.50,1.00\}$.

\begingroup
\small
\renewcommand{\arraystretch}{0.75}
\setlength{\tabcolsep}{4pt}
\setlength{\LTleft}{\fill}
\setlength{\LTright}{\fill}

\begin{longtable}{@{}ccc*{6}{r}@{}}
\caption{KL diagnostics by simulation design}
\label{tab:gcc_density_diagnostics_appendix} \\
\toprule
    \midrule
$\lambda$ & $\phi$ & $\tau/\sigma$ & $\overline{\mathrm{KL}}^{x,\operatorname{shape}}$ & $\overline{\mathrm{KL}}^{x,\operatorname{op}}$ & $\overline{\mathrm{KL}}^{y,\operatorname{shape}}$ & $\overline{\mathrm{KL}}^{y,\operatorname{op}}$ & $q_{.95}\mathrm{KL}^{x,\operatorname{op}}$ & $\max\mathrm{KL}^{x,\operatorname{op}}$ \\
\midrule
\endfirsthead
\caption[]{KL diagnostics by simulation design (continued)}\\
\toprule
    \midrule
$\lambda$ & $\phi$ & $\tau/\sigma$ & $\overline{\mathrm{KL}}^{x,\operatorname{shape}}$ & $\overline{\mathrm{KL}}^{x,\operatorname{op}}$ & $\overline{\mathrm{KL}}^{y,\operatorname{shape}}$ & $\overline{\mathrm{KL}}^{y,\operatorname{op}}$ & $q_{.95}\mathrm{KL}^{x,\operatorname{op}}$ & $\max\mathrm{KL}^{x,\operatorname{op}}$ \\
\midrule
\endhead
\midrule
\multicolumn{9}{r}{\emph{continued on next page}}\\
\endfoot
\midrule
\bottomrule
\multicolumn{9}{@{}p{\linewidth}@{}}{\footnotesize\textit{Note:} The table reports time averages after discarding the diagnostic burn-in. The parameter grid is $\lambda=\gamma/\sigma\in\{0,0.01,0.05,0.10,0.50,1.00\}$, $\phi\in\{0.90,0.97,0.99\}$, and $\tau/\sigma\in\{0.25,0.50,1.00\}$. The superscript $x$ refers to the predictive density of the latent state, while $y$ refers to the predictive density of the observation. The ``shape" columns compare the exact predictive density with its moment-matched Gaussian approximation; the ``op" columns compare the exact predictive density with the operational Gaussian approximation used by the GCC filter. The last two columns report the 95th percentile and maximum of the operational state-density KL over time. Because Kullback-Leibler divergence is invariant to common one-to-one rescalings of the compared densities, normalizing $\sigma=1$ entails no loss of generality for fixed values of $\lambda$ and $\tau/\sigma$.}\\
\endlastfoot
$0.00$ & $0.90$ & $0.25$ & $<10^{-15}$ & $<10^{-15}$ & $<10^{-15}$ & $<10^{-15}$ & $<10^{-15}$ & $<10^{-15}$ \\
$0.00$ & $0.90$ & $0.50$ & $<10^{-15}$ & $<10^{-15}$ & $<10^{-15}$ & $<10^{-15}$ & $<10^{-15}$ & $<10^{-15}$ \\
$0.00$ & $0.90$ & $1.00$ & $<10^{-15}$ & $<10^{-15}$ & $<10^{-15}$ & $<10^{-15}$ & $<10^{-15}$ & $<10^{-15}$ \\
$0.00$ & $0.97$ & $0.25$ & $<10^{-15}$ & $<10^{-15}$ & $<10^{-15}$ & $<10^{-15}$ & $<10^{-15}$ & $<10^{-15}$ \\
$0.00$ & $0.97$ & $0.50$ & $<10^{-15}$ & $<10^{-15}$ & $<10^{-15}$ & $<10^{-15}$ & $<10^{-15}$ & $<10^{-15}$ \\
$0.00$ & $0.97$ & $1.00$ & $<10^{-15}$ & $<10^{-15}$ & $<10^{-15}$ & $<10^{-15}$ & $<10^{-15}$ & $<10^{-15}$ \\
$0.00$ & $0.99$ & $0.25$ & $<10^{-15}$ & $<10^{-15}$ & $<10^{-15}$ & $<10^{-15}$ & $<10^{-15}$ & $<10^{-15}$ \\
$0.00$ & $0.99$ & $0.50$ & $<10^{-15}$ & $<10^{-15}$ & $<10^{-15}$ & $<10^{-15}$ & $<10^{-15}$ & $<10^{-15}$ \\
$0.00$ & $0.99$ & $1.00$ & $<10^{-15}$ & $<10^{-15}$ & $<10^{-15}$ & $<10^{-15}$ & $<10^{-15}$ & $<10^{-15}$ \\
$0.01$ & $0.90$ & $0.25$ & $2.63\times 10^{-6}$ & $3.00\times 10^{-6}$ & $9.14\times 10^{-9}$ & $3.79\times 10^{-8}$ & $9.32\times 10^{-6}$ & $1.84\times 10^{-4}$ \\
$0.01$ & $0.90$ & $0.50$ & $4.08\times 10^{-5}$ & $7.78\times 10^{-5}$ & $1.58\times 10^{-6}$ & $1.11\times 10^{-5}$ & $3.92\times 10^{-4}$ & $3.48\times 10^{-3}$ \\
$0.01$ & $0.90$ & $1.00$ & $2.06\times 10^{-4}$ & $2.67\times 10^{-4}$ & $6.19\times 10^{-5}$ & $9.81\times 10^{-5}$ & $2.37\times 10^{-4}$ & $2.78\times 10^{-2}$ \\
$0.01$ & $0.97$ & $0.25$ & $2.75\times 10^{-6}$ & $3.33\times 10^{-6}$ & $1.57\times 10^{-8}$ & $9.62\times 10^{-8}$ & $9.07\times 10^{-6}$ & $2.06\times 10^{-4}$ \\
$0.01$ & $0.97$ & $0.50$ & $4.34\times 10^{-5}$ & $4.80\times 10^{-5}$ & $1.99\times 10^{-6}$ & $3.10\times 10^{-6}$ & $1.23\times 10^{-4}$ & $2.74\times 10^{-3}$ \\
$0.01$ & $0.97$ & $1.00$ & $3.76\times 10^{-5}$ & $3.89\times 10^{-5}$ & $5.71\times 10^{-6}$ & $6.41\times 10^{-6}$ & $7.94\times 10^{-5}$ & $3.69\times 10^{-3}$ \\
$0.01$ & $0.99$ & $0.25$ & $4.66\times 10^{-5}$ & $7.28\times 10^{-5}$ & $6.75\times 10^{-7}$ & $4.49\times 10^{-6}$ & $2.11\times 10^{-4}$ & $4.58\times 10^{-3}$ \\
$0.01$ & $0.99$ & $0.50$ & $7.43\times 10^{-5}$ & $1.12\times 10^{-4}$ & $5.42\times 10^{-6}$ & $1.83\times 10^{-5}$ & $1.18\times 10^{-4}$ & $1.15\times 10^{-2}$ \\
$0.01$ & $0.99$ & $1.00$ & $1.80\times 10^{-4}$ & $1.96\times 10^{-4}$ & $4.37\times 10^{-5}$ & $5.36\times 10^{-5}$ & $3.32\times 10^{-4}$ & $2.09\times 10^{-2}$ \\
$0.05$ & $0.90$ & $0.25$ & $1.11\times 10^{-5}$ & $1.37\times 10^{-5}$ & $3.33\times 10^{-8}$ & $1.97\times 10^{-7}$ & $6.25\times 10^{-5}$ & $3.95\times 10^{-4}$ \\
$0.05$ & $0.90$ & $0.50$ & $8.86\times 10^{-5}$ & $1.31\times 10^{-4}$ & $4.20\times 10^{-6}$ & $1.51\times 10^{-5}$ & $5.68\times 10^{-4}$ & $7.51\times 10^{-3}$ \\
$0.05$ & $0.90$ & $1.00$ & $1.60\times 10^{-4}$ & $2.02\times 10^{-4}$ & $2.76\times 10^{-5}$ & $4.95\times 10^{-5}$ & $7.11\times 10^{-4}$ & $1.20\times 10^{-2}$ \\
$0.05$ & $0.97$ & $0.25$ & $4.54\times 10^{-5}$ & $9.13\times 10^{-5}$ & $3.40\times 10^{-7}$ & $5.97\times 10^{-6}$ & $5.67\times 10^{-4}$ & $1.55\times 10^{-3}$ \\
$0.05$ & $0.97$ & $0.50$ & $2.18\times 10^{-4}$ & $3.12\times 10^{-4}$ & $1.23\times 10^{-5}$ & $3.91\times 10^{-5}$ & $1.67\times 10^{-3}$ & $7.31\times 10^{-3}$ \\
$0.05$ & $0.97$ & $1.00$ & $4.70\times 10^{-4}$ & $5.83\times 10^{-4}$ & $1.13\times 10^{-4}$ & $1.76\times 10^{-4}$ & $1.64\times 10^{-3}$ & $1.98\times 10^{-2}$ \\
$0.05$ & $0.99$ & $0.25$ & $5.19\times 10^{-5}$ & $8.71\times 10^{-5}$ & $5.17\times 10^{-7}$ & $4.41\times 10^{-6}$ & $3.37\times 10^{-4}$ & $5.20\times 10^{-3}$ \\
$0.05$ & $0.99$ & $0.50$ & $1.75\times 10^{-4}$ & $2.51\times 10^{-4}$ & $9.87\times 10^{-6}$ & $3.08\times 10^{-5}$ & $1.21\times 10^{-3}$ & $8.08\times 10^{-3}$ \\
$0.05$ & $0.99$ & $1.00$ & $3.84\times 10^{-4}$ & $4.25\times 10^{-4}$ & $8.57\times 10^{-5}$ & $1.07\times 10^{-4}$ & $1.16\times 10^{-3}$ & $2.13\times 10^{-2}$ \\
$0.10$ & $0.90$ & $0.25$ & $1.74\times 10^{-5}$ & $2.22\times 10^{-5}$ & $4.50\times 10^{-8}$ & $3.43\times 10^{-7}$ & $1.17\times 10^{-4}$ & $5.25\times 10^{-4}$ \\
$0.10$ & $0.90$ & $0.50$ & $1.50\times 10^{-4}$ & $1.95\times 10^{-4}$ & $5.09\times 10^{-6}$ & $1.40\times 10^{-5}$ & $1.09\times 10^{-3}$ & $3.28\times 10^{-3}$ \\
$0.10$ & $0.90$ & $1.00$ & $3.08\times 10^{-4}$ & $3.52\times 10^{-4}$ & $5.10\times 10^{-5}$ & $7.19\times 10^{-5}$ & $1.86\times 10^{-3}$ & $9.08\times 10^{-3}$ \\
$0.10$ & $0.97$ & $0.25$ & $5.28\times 10^{-5}$ & $9.60\times 10^{-5}$ & $2.98\times 10^{-7}$ & $5.63\times 10^{-6}$ & $5.92\times 10^{-4}$ & $1.58\times 10^{-3}$ \\
$0.10$ & $0.97$ & $0.50$ & $3.17\times 10^{-4}$ & $5.98\times 10^{-4}$ & $1.86\times 10^{-5}$ & $1.02\times 10^{-4}$ & $2.24\times 10^{-3}$ & $1.90\times 10^{-2}$ \\
$0.10$ & $0.97$ & $1.00$ & $4.55\times 10^{-4}$ & $5.09\times 10^{-4}$ & $7.19\times 10^{-5}$ & $1.00\times 10^{-4}$ & $2.50\times 10^{-3}$ & $1.29\times 10^{-2}$ \\
$0.10$ & $0.99$ & $0.25$ & $9.01\times 10^{-5}$ & $1.68\times 10^{-4}$ & $6.93\times 10^{-7}$ & $9.66\times 10^{-6}$ & $7.62\times 10^{-4}$ & $3.84\times 10^{-3}$ \\
$0.10$ & $0.99$ & $0.50$ & $3.05\times 10^{-4}$ & $5.42\times 10^{-4}$ & $1.61\times 10^{-5}$ & $9.69\times 10^{-5}$ & $2.16\times 10^{-3}$ & $2.03\times 10^{-2}$ \\
$0.10$ & $0.99$ & $1.00$ & $9.53\times 10^{-4}$ & $1.22\times 10^{-3}$ & $1.90\times 10^{-4}$ & $3.53\times 10^{-4}$ & $6.90\times 10^{-3}$ & $3.50\times 10^{-2}$ \\
$0.50$ & $0.90$ & $0.25$ & $3.11\times 10^{-5}$ & $3.99\times 10^{-5}$ & $3.41\times 10^{-8}$ & $3.91\times 10^{-7}$ & $1.44\times 10^{-4}$ & $3.70\times 10^{-4}$ \\
$0.50$ & $0.90$ & $0.50$ & $2.60\times 10^{-4}$ & $3.51\times 10^{-4}$ & $2.98\times 10^{-6}$ & $1.55\times 10^{-5}$ & $1.20\times 10^{-3}$ & $4.83\times 10^{-3}$ \\
$0.50$ & $0.90$ & $1.00$ & $9.67\times 10^{-4}$ & $1.27\times 10^{-3}$ & $6.95\times 10^{-5}$ & $1.72\times 10^{-4}$ & $4.59\times 10^{-3}$ & $1.50\times 10^{-2}$ \\
$0.50$ & $0.97$ & $0.25$ & $1.12\times 10^{-4}$ & $2.16\times 10^{-4}$ & $2.76\times 10^{-7}$ & $8.08\times 10^{-6}$ & $8.58\times 10^{-4}$ & $3.29\times 10^{-3}$ \\
$0.50$ & $0.97$ & $0.50$ & $6.24\times 10^{-4}$ & $8.34\times 10^{-4}$ & $1.17\times 10^{-5}$ & $4.30\times 10^{-5}$ & $3.27\times 10^{-3}$ & $8.09\times 10^{-3}$ \\
$0.50$ & $0.97$ & $1.00$ & $1.69\times 10^{-3}$ & $2.28\times 10^{-3}$ & $1.42\times 10^{-4}$ & $3.70\times 10^{-4}$ & $1.14\times 10^{-2}$ & $3.56\times 10^{-2}$ \\
$0.50$ & $0.99$ & $0.25$ & $1.90\times 10^{-4}$ & $3.02\times 10^{-4}$ & $6.74\times 10^{-7}$ & $9.35\times 10^{-6}$ & $1.02\times 10^{-3}$ & $4.01\times 10^{-3}$ \\
$0.50$ & $0.99$ & $0.50$ & $8.74\times 10^{-4}$ & $1.67\times 10^{-3}$ & $2.41\times 10^{-5}$ & $2.15\times 10^{-4}$ & $4.68\times 10^{-3}$ & $6.47\times 10^{-2}$ \\
$0.50$ & $0.99$ & $1.00$ & $1.82\times 10^{-3}$ & $2.30\times 10^{-3}$ & $1.60\times 10^{-4}$ & $3.43\times 10^{-4}$ & $9.15\times 10^{-3}$ & $3.64\times 10^{-2}$ \\
$1.00$ & $0.90$ & $0.25$ & $2.22\times 10^{-5}$ & $2.67\times 10^{-5}$ & $7.47\times 10^{-9}$ & $1.09\times 10^{-7}$ & $8.64\times 10^{-5}$ & $1.89\times 10^{-4}$ \\
$1.00$ & $0.90$ & $0.50$ & $2.24\times 10^{-4}$ & $3.24\times 10^{-4}$ & $1.19\times 10^{-6}$ & $1.07\times 10^{-5}$ & $1.13\times 10^{-3}$ & $3.35\times 10^{-3}$ \\
$1.00$ & $0.90$ & $1.00$ & $9.71\times 10^{-4}$ & $1.22\times 10^{-3}$ & $3.27\times 10^{-5}$ & $8.42\times 10^{-5}$ & $4.40\times 10^{-3}$ & $9.17\times 10^{-3}$ \\
$1.00$ & $0.97$ & $0.25$ & $1.42\times 10^{-4}$ & $2.24\times 10^{-4}$ & $1.88\times 10^{-7}$ & $3.39\times 10^{-6}$ & $7.48\times 10^{-4}$ & $2.82\times 10^{-3}$ \\
$1.00$ & $0.97$ & $0.50$ & $7.82\times 10^{-4}$ & $1.38\times 10^{-3}$ & $8.55\times 10^{-6}$ & $8.50\times 10^{-5}$ & $4.68\times 10^{-3}$ & $1.88\times 10^{-2}$ \\
$1.00$ & $0.97$ & $1.00$ & $2.00\times 10^{-3}$ & $2.82\times 10^{-3}$ & $8.84\times 10^{-5}$ & $2.86\times 10^{-4}$ & $9.44\times 10^{-3}$ & $4.11\times 10^{-2}$ \\
$1.00$ & $0.99$ & $0.25$ & $2.30\times 10^{-4}$ & $4.03\times 10^{-4}$ & $5.32\times 10^{-7}$ & $8.25\times 10^{-6}$ & $1.25\times 10^{-3}$ & $8.91\times 10^{-3}$ \\
$1.00$ & $0.99$ & $0.50$ & $8.97\times 10^{-4}$ & $1.49\times 10^{-3}$ & $1.01\times 10^{-5}$ & $9.65\times 10^{-5}$ & $5.12\times 10^{-3}$ & $3.59\times 10^{-2}$ \\
$1.00$ & $0.99$ & $1.00$ & $2.32\times 10^{-3}$ & $3.28\times 10^{-3}$ & $1.18\times 10^{-4}$ & $3.84\times 10^{-4}$ & $1.15\times 10^{-2}$ & $3.43\times 10^{-2}$ \\
\end{longtable}
\endgroup

\begingroup
\small
\renewcommand{\arraystretch}{0.75}
\setlength{\tabcolsep}{5pt}
\setlength{\LTleft}{\fill}
\setlength{\LTright}{\fill}

\begin{longtable}{ *{3}{>{\centering\arraybackslash}p{0.09\linewidth}}  *{4}{>{\centering\arraybackslash}p{0.14\linewidth}}  }
\caption[]{Correction diagnostics by simulation design}\\
\toprule
    \midrule
$\lambda$ & $\phi$ & $\tau/\sigma$
& $\operatorname{MAE}_{s}$
& $\operatorname{MAE}_{o}$
& $\operatorname{RMSE}_{o}$
& $q_{.95}|D_{o}|$ \\
\midrule
\endfirsthead

\caption[]{Correction diagnostics by simulation design (continued)}\\
\toprule
    \midrule
$\lambda$ & $\phi$ & $\tau/\sigma$
& $\operatorname{MAE}_{s}$
& $\operatorname{MAE}_{o}$
& $\operatorname{RMSE}_{o}$
& $q_{.95}|D_{o}|$ \\
\midrule
\endhead
\midrule
\multicolumn{7}{r}{\emph{continued on next page}}\\
\endfoot
\midrule
\bottomrule
\multicolumn{7}{@{}p{\linewidth}@{}}{\footnotesize\textit{Note:} The table reports correction-level diagnostics for each simulation design. The distortion is $D_t(y_t)=\Delta_t^\ast(y_t)-\Delta_t(y_t)$. Subscript $s$ denotes the shape diagnostic, based on moment-matched Gaussian approximation to exact predictive density; subscript $o$ denotes operational GCC diagnostic. Entries are computed after discarding the diagnostic burn-in period. The case $\lambda=0$ corresponds to the purely Gaussian measurement-error benchmark, for which the approximation is exact up to numerical integration error.}\\
\endlastfoot
$0.00$ & $0.90$ & $0.25$ & $<10^{-15}$          & $<10^{-15}$          & $<10^{-15}$          & $<10^{-15}$          \\
$0.00$ & $0.90$ & $0.50$ & $<10^{-15}$          & $<10^{-15}$          & $<10^{-15}$          & $<10^{-15}$          \\
$0.00$ & $0.90$ & $1.00$ & $<10^{-15}$          & $<10^{-15}$          & $<10^{-15}$          & $<10^{-15}$          \\
$0.00$ & $0.97$ & $0.25$ & $<10^{-15}$          & $<10^{-15}$          & $<10^{-15}$          & $<10^{-15}$          \\
$0.00$ & $0.97$ & $0.50$ & $<10^{-15}$          & $<10^{-15}$          & $<10^{-15}$          & $<10^{-15}$          \\
$0.00$ & $0.97$ & $1.00$ & $<10^{-15}$          & $<10^{-15}$          & $<10^{-15}$          & $1.55\times 10^{-15}$ \\
$0.00$ & $0.99$ & $0.25$ & $<10^{-15}$          & $<10^{-15}$          & $<10^{-15}$          & $<10^{-15}$          \\
$0.00$ & $0.99$ & $0.50$ & $<10^{-15}$          & $<10^{-15}$          & $<10^{-15}$          & $<10^{-15}$          \\
$0.00$ & $0.99$ & $1.00$ & $<10^{-15}$          & $<10^{-15}$          & $1.03\times 10^{-15}$ & $2.27\times 10^{-15}$ \\
$0.01$ & $0.90$ & $0.25$ & $3.94\times 10^{-5}$ & $6.03\times 10^{-5}$ & $1.36\times 10^{-4}$ & $2.14\times 10^{-4}$ \\
$0.01$ & $0.90$ & $0.50$ & $6.88\times 10^{-4}$ & $1.17\times 10^{-3}$ & $4.17\times 10^{-3}$ & $5.67\times 10^{-3}$ \\
$0.01$ & $0.90$ & $1.00$ & $1.80\times 10^{-3}$ & $2.88\times 10^{-3}$ & $1.82\times 10^{-2}$ & $4.81\times 10^{-3}$ \\
$0.01$ & $0.97$ & $0.25$ & $7.82\times 10^{-5}$ & $1.27\times 10^{-4}$ & $2.44\times 10^{-4}$ & $4.75\times 10^{-4}$ \\
$0.01$ & $0.97$ & $0.50$ & $3.47\times 10^{-4}$ & $6.10\times 10^{-4}$ & $1.37\times 10^{-3}$ & $2.62\times 10^{-3}$ \\
$0.01$ & $0.97$ & $1.00$ & $6.10\times 10^{-4}$ & $8.51\times 10^{-4}$ & $1.93\times 10^{-3}$ & $4.37\times 10^{-3}$ \\
$0.01$ & $0.99$ & $0.25$ & $3.15\times 10^{-4}$ & $5.36\times 10^{-4}$ & $1.66\times 10^{-3}$ & $2.20\times 10^{-3}$ \\
$0.01$ & $0.99$ & $0.50$ & $4.46\times 10^{-4}$ & $1.13\times 10^{-3}$ & $6.25\times 10^{-3}$ & $3.37\times 10^{-3}$ \\
$0.01$ & $0.99$ & $1.00$ & $1.55\times 10^{-3}$ & $2.50\times 10^{-3}$ & $8.52\times 10^{-3}$ & $1.07\times 10^{-2}$ \\
$0.05$ & $0.90$ & $0.25$ & $1.55\times 10^{-4}$ & $2.51\times 10^{-4}$ & $4.88\times 10^{-4}$ & $1.19\times 10^{-3}$ \\
$0.05$ & $0.90$ & $0.50$ & $1.19\times 10^{-3}$ & $1.74\times 10^{-3}$ & $4.89\times 10^{-3}$ & $8.62\times 10^{-3}$ \\
$0.05$ & $0.90$ & $1.00$ & $2.43\times 10^{-3}$ & $3.47\times 10^{-3}$ & $1.10\times 10^{-2}$ & $1.21\times 10^{-2}$ \\
$0.05$ & $0.97$ & $0.25$ & $5.17\times 10^{-4}$ & $1.01\times 10^{-3}$ & $2.27\times 10^{-3}$ & $4.20\times 10^{-3}$ \\
$0.05$ & $0.97$ & $0.50$ & $1.93\times 10^{-3}$ & $3.16\times 10^{-3}$ & $7.91\times 10^{-3}$ & $1.57\times 10^{-2}$ \\
$0.05$ & $0.97$ & $1.00$ & $4.61\times 10^{-3}$ & $6.81\times 10^{-3}$ & $2.10\times 10^{-2}$ & $2.44\times 10^{-2}$ \\
$0.05$ & $0.99$ & $0.25$ & $4.62\times 10^{-4}$ & $9.18\times 10^{-4}$ & $2.60\times 10^{-3}$ & $2.96\times 10^{-3}$ \\
$0.05$ & $0.99$ & $0.50$ & $1.92\times 10^{-3}$ & $3.08\times 10^{-3}$ & $7.91\times 10^{-3}$ & $1.12\times 10^{-2}$ \\
$0.05$ & $0.99$ & $1.00$ & $3.33\times 10^{-3}$ & $4.78\times 10^{-3}$ & $1.13\times 10^{-2}$ & $1.52\times 10^{-2}$ \\
$0.10$ & $0.90$ & $0.25$ & $1.89\times 10^{-4}$ & $3.43\times 10^{-4}$ & $6.31\times 10^{-4}$ & $1.16\times 10^{-3}$ \\
$0.10$ & $0.90$ & $0.50$ & $1.83\times 10^{-3}$ & $2.50\times 10^{-3}$ & $5.10\times 10^{-3}$ & $1.17\times 10^{-2}$ \\
$0.10$ & $0.90$ & $1.00$ & $3.88\times 10^{-3}$ & $6.07\times 10^{-3}$ & $1.38\times 10^{-2}$ & $2.35\times 10^{-2}$ \\
$0.10$ & $0.97$ & $0.25$ & $5.63\times 10^{-4}$ & $1.03\times 10^{-3}$ & $2.13\times 10^{-3}$ & $4.01\times 10^{-3}$ \\
$0.10$ & $0.97$ & $0.50$ & $3.69\times 10^{-3}$ & $5.35\times 10^{-3}$ & $1.21\times 10^{-2}$ & $2.20\times 10^{-2}$ \\
$0.10$ & $0.97$ & $1.00$ & $4.88\times 10^{-3}$ & $6.59\times 10^{-3}$ & $1.13\times 10^{-2}$ & $2.66\times 10^{-2}$ \\
$0.10$ & $0.99$ & $0.25$ & $9.81\times 10^{-4}$ & $1.99\times 10^{-3}$ & $5.02\times 10^{-3}$ & $8.08\times 10^{-3}$ \\
$0.10$ & $0.99$ & $0.50$ & $3.56\times 10^{-3}$ & $6.26\times 10^{-3}$ & $1.79\times 10^{-2}$ & $2.04\times 10^{-2}$ \\
$0.10$ & $0.99$ & $1.00$ & $1.11\times 10^{-2}$ & $1.43\times 10^{-2}$ & $4.73\times 10^{-2}$ & $5.92\times 10^{-2}$ \\
$0.50$ & $0.90$ & $0.25$ & $3.19\times 10^{-4}$ & $5.27\times 10^{-4}$ & $8.00\times 10^{-4}$ & $1.91\times 10^{-3}$ \\
$0.50$ & $0.90$ & $0.50$ & $2.93\times 10^{-3}$ & $4.60\times 10^{-3}$ & $7.52\times 10^{-3}$ & $1.46\times 10^{-2}$ \\
$0.50$ & $0.90$ & $1.00$ & $1.53\times 10^{-2}$ & $1.93\times 10^{-2}$ & $3.51\times 10^{-2}$ & $7.27\times 10^{-2}$ \\
$0.50$ & $0.97$ & $0.25$ & $1.01\times 10^{-3}$ & $1.99\times 10^{-3}$ & $3.38\times 10^{-3}$ & $7.50\times 10^{-3}$ \\
$0.50$ & $0.97$ & $0.50$ & $5.96\times 10^{-3}$ & $8.74\times 10^{-3}$ & $1.44\times 10^{-2}$ & $2.71\times 10^{-2}$ \\
$0.50$ & $0.97$ & $1.00$ & $1.86\times 10^{-2}$ & $2.61\times 10^{-2}$ & $4.77\times 10^{-2}$ & $8.18\times 10^{-2}$ \\
$0.50$ & $0.99$ & $0.25$ & $1.50\times 10^{-3}$ & $2.80\times 10^{-3}$ & $4.88\times 10^{-3}$ & $1.02\times 10^{-2}$ \\
$0.50$ & $0.99$ & $0.50$ & $8.40\times 10^{-3}$ & $1.42\times 10^{-2}$ & $3.09\times 10^{-2}$ & $4.33\times 10^{-2}$ \\
$0.50$ & $0.99$ & $1.00$ & $2.02\times 10^{-2}$ & $2.47\times 10^{-2}$ & $4.40\times 10^{-2}$ & $8.39\times 10^{-2}$ \\
$1.00$ & $0.90$ & $0.25$ & $2.32\times 10^{-4}$ & $3.72\times 10^{-4}$ & $5.29\times 10^{-4}$ & $1.00\times 10^{-3}$ \\
$1.00$ & $0.90$ & $0.50$ & $2.86\times 10^{-3}$ & $4.12\times 10^{-3}$ & $6.70\times 10^{-3}$ & $1.50\times 10^{-2}$ \\
$1.00$ & $0.90$ & $1.00$ & $1.34\times 10^{-2}$ & $1.80\times 10^{-2}$ & $2.75\times 10^{-2}$ & $6.45\times 10^{-2}$ \\
$1.00$ & $0.97$ & $0.25$ & $1.08\times 10^{-3}$ & $2.07\times 10^{-3}$ & $3.17\times 10^{-3}$ & $6.62\times 10^{-3}$ \\
$1.00$ & $0.97$ & $0.50$ & $7.33\times 10^{-3}$ & $1.17\times 10^{-2}$ & $2.17\times 10^{-2}$ & $4.57\times 10^{-2}$ \\
$1.00$ & $0.97$ & $1.00$ & $2.34\times 10^{-2}$ & $3.12\times 10^{-2}$ & $5.32\times 10^{-2}$ & $9.23\times 10^{-2}$ \\
$1.00$ & $0.99$ & $0.25$ & $1.92\times 10^{-3}$ & $3.82\times 10^{-3}$ & $6.77\times 10^{-3}$ & $1.27\times 10^{-2}$ \\
$1.00$ & $0.99$ & $0.50$ & $7.27\times 10^{-3}$ & $1.17\times 10^{-2}$ & $2.05\times 10^{-2}$ & $3.51\times 10^{-2}$ \\
$1.00$ & $0.99$ & $1.00$ & $2.51\times 10^{-2}$ & $3.56\times 10^{-2}$ & $5.51\times 10^{-2}$ & $1.17\times 10^{-1}$ \\
\end{longtable}
\endgroup

%%%%%%%%%%%%%%%%%%%%%%%%%%%%%%%%%%%%%%%%%%%%%%

\setcounter{table}{0}
\global\long\def\thetable{D.\arabic{table}}%

\section{Details on Competing Filters}
\label{app:filter_details}

This appendix provides additional details on the competing filters used in the empirical comparison. All specifications use the same Gaussian AR(1) state equation,
$$
x_t=(1-\phi)\mu+\phi x_{t-1}+\varepsilon_t,
\qquad
\varepsilon_t\sim\mathcal{N}(0,\tau^2),
$$
and differ only in the measurement-error distribution in
$$
y_t=x_t+\eta_t.
$$
The comparison separates two issues: whether the measurement-error distribution is flexible enough to accommodate realized-volatility outliers, and whether convolution with Gaussian state-prediction uncertainty preserves an analytically tractable prediction-error density.

\begin{table}[tbp]
\caption{Measurement-error specifications used in the empirical comparison}
\label{tab:filter_comparison}
\spacingset{1.15}
\begin{centering}
\begin{footnotesize}
\begin{tabularx}{\textwidth}{@{} l >{\hsize=0.9\hsize}Y >{\hsize=1.1\hsize}Y c @{}}
\toprule
\midrule
Filter & Measurement error $\eta_t$ & Prediction-error density for $e_t$ & Closed form \\
\midrule
Gaussian & $\mathcal{N}(0,\sigma^2)$
& Gaussian with variance $h_{t|t-1}+\sigma^2$ & Yes \\

Cauchy & $\operatorname{Cauchy}(0,\gamma)$
& Voigt convolution of $\mathcal{N}(0,h_{t|t-1})$ and Cauchy error & Yes \\

GCC & $\mathcal{V}(0,\sigma,\gamma)$
& Voigt with Gaussian scale $\delta_t=\sqrt{h_{t|t-1}+\sigma^2}$ and Cauchy scale $\gamma$ & Yes \\

Normal-Laplace & $\mathcal{N}(0,\sigma^2)+\operatorname{Laplace}(0,\gamma)$
& Normal-Laplace with Gaussian scale $\sqrt{h_{t|t-1}+\sigma^2}$ & Yes \\

Student-$t$ & $t_\nu(0,\sigma)$
& Gaussian-$t$ convolution; not generally a Student-$t$ density & No \\

Huber & Huber density with Gaussian core and exponential tails
& Gaussian-Huber convolution; no elementary closed form & No \\
\midrule
\bottomrule
\end{tabularx}
\end{footnotesize}
\par\end{centering}
{\footnotesize \textit{Note:} All filters use the same Gaussian AR(1) state equation. The distinction is the measurement-error distribution and whether convolution with the Gaussian prediction error preserves an analytically tractable prediction-error density.}
\end{table}

The Gaussian and pure Cauchy filters represent the two limiting cases. The GCC filter combines the Gaussian core with a Cauchy component, while the Normal-Laplace filter provides a light-tailed robust alternative whose Gaussian convolution remains available in closed form. The Student-$t$ and Huber filters are included as standard robust benchmarks, but their prediction-error densities are not closed under Gaussian state uncertainty. They are therefore implemented using same-family prediction-error approximations.

\begin{enumerate}
\item \textbf{Gaussian Kalman filter.}
The Gaussian filter assumes
$$
\eta_t\sim\mathcal{N}(0,\sigma^2).
$$
Under the Gaussian prediction approximation, the prediction error is Gaussian,
$$
e_t|\mathcal{F}_{t-1}\sim\mathcal{N}(0,h_{t|t-1}+\sigma^2),
$$
which gives the standard Kalman filter.

\item \textbf{Cauchy filter.}
The Cauchy filter assumes
$$
\eta_t\sim\operatorname{Cauchy}(0,\gamma).
$$
This is the limiting case of the GCC measurement-error specification with no Gaussian measurement-error component. Under the Gaussian prediction approximation, the prediction error is the convolution of $\mathcal{N}(0,h_{t|t-1})$ and $\operatorname{Cauchy}(0,\gamma)$. Hence
$$
e_t|\mathcal{F}_{t-1}\sim\mathcal{V}(0,\sqrt{h_{t|t-1}},\gamma).
$$

\item \textbf{Gauss-Cauchy convolution filter.}
The GCC filter assumes
$$
\eta_t\sim\mathcal{V}(0,\sigma,\gamma),
$$
or equivalently
$$
\eta_t=Z_t+C_t,
\qquad
Z_t\sim\mathcal{N}(0,\sigma^2),
\qquad
C_t\sim\operatorname{Cauchy}(0,\gamma),
$$
with $Z_t$ and $C_t$ independent. Under the Gaussian prediction approximation,
$$
e_t|\mathcal{F}_{t-1}\sim\mathcal{V}(0,\delta_t,\gamma),
\qquad
\delta_t^2=h_{t|t-1}+\sigma^2.
$$
Thus the prediction-error density, score, and filtering update are available in closed form.

\item \textbf{Normal-Laplace filter.}
The Normal-Laplace filter assumes that the measurement error is the convolution of a Gaussian and a Laplace random variable,
$$
\eta_t=Z_t+L_t,
\qquad
Z_t\sim\mathcal{N}(0,\sigma^2),
\qquad
L_t\sim\operatorname{Laplace}(0,\gamma),
$$
as in \citet{Reed:2006}. Its density is
$$
f_{\mathrm{NL}}(\eta;\sigma,\gamma)
=
\frac{1}{2\gamma}
\exp\left(\frac{\sigma^2}{2\gamma^2}\right)
\left[
\exp\left(-\frac{\eta}{\gamma}\right)
\Phi\left(\frac{\eta}{\sigma}-\frac{\sigma}{\gamma}\right)
+
\exp\left(\frac{\eta}{\gamma}\right)
\Phi\left(-\frac{\eta}{\sigma}-\frac{\sigma}{\gamma}\right)
\right],
$$
where $\Phi$ is the standard normal distribution function. Because the state-prediction uncertainty is Gaussian, the prediction error remains Normal-Laplace under the Gaussian prediction approximation, with Gaussian scale
$$
\delta_t=\sqrt{h_{t|t-1}+\sigma^2}
$$
and Laplace scale $\gamma$. Thus the likelihood contribution and score can be evaluated in closed form.

\item \textbf{Huber filter.}
The Huber filter uses the density generated by the Huber loss,
$$
f_{\mathrm{H}}(\eta;\sigma,k)
=
\frac{1}{\sigma c(k)}
\exp\left[-\rho_k\left(\frac{\eta}{\sigma}\right)\right],
$$
where
$$
\rho_k(r)=
\begin{cases}
r^2/2, & |r|\leq k,\\
k|r|-k^2/2, & |r|>k,
\end{cases}
$$
and
$$
c(k)=
\sqrt{2\pi}\operatorname{erf}\left(\frac{k}{\sqrt{2}}\right)
+
\frac{2}{k}\exp\left(-\frac{k^2}{2}\right).
$$
The parameter $k$ is the standardized threshold separating the Gaussian core from the exponential tails. The density is continuous at $|\eta|=k\sigma$, but its convolution with Gaussian state-prediction uncertainty does not have an elementary closed form. We therefore use the same-family approximation
$$
f_{e_t}(e|\mathcal{F}_{t-1})
\approx
f_{\mathrm{H}}(e;\sigma_{e,t},k),
\qquad
\sigma_{e,t}^2=h_{t|t-1}+\sigma^2.
$$
This approximation preserves the Huber score shape but is not the exact Gaussian-Huber convolution.

\item \textbf{Student-$t$ filter.}
The Student-$t$ filter assumes
$$
\eta_t\sim t_\nu(0,\sigma),
$$
where $\sigma$ is the scale parameter. The sum of a Gaussian random variable and a Student-$t$ random variable is not generally Student-$t$, except in special limiting cases. Hence the prediction-error density is not available in the same closed form as in the Gaussian, Cauchy, GCC, and Normal-Laplace specifications. We use the same-family approximation
$$
f_{e_t}(e|\mathcal{F}_{t-1})
\approx
t_\nu(e;0,\sigma_{e,t}),
\qquad
\sigma_{e,t}^2=h_{t|t-1}+\sigma^2.
$$
This approximation is computationally convenient, but it is not the exact Gaussian-$t$ convolution and may distort the likelihood contribution and score when $\nu$ is small.
\end{enumerate}

\end{document}